\theoremstyle{plain}
\newtheorem{theorem}{Theorem}[section]  %
\newtheorem{lemma}[theorem]{Lemma}
\newtheorem{corollary}[theorem]{Corollary}
\newtheorem{claim}[theorem]{Claim}
\theoremstyle{definition}  %
\newtheorem{property}[theorem]{Property}
\crefname{algocf}{Algorithm}{Algorithms}
\Crefname{algocf}{Algorithm}{Algorithms}
\crefname{claim}{Claim}{Claims}
\Crefname{claim}{Claim}{Claims}
\crefname{property}{Property}{Properties}
\Crefname{property}{Property}{Properties}
\crefname{Distribution}{Distribution}{Distributions}
\Crefname{Distribution}{Distribution}{Distributions}
\crefname{Protocol}{Protocol}{Protocols}
\Crefname{Protocol}{Protocol}{Protocols}
\DeclarePairedDelimiter{\ceil}{\lceil}{\rceil}
\DeclarePairedDelimiter{\floor}{\lfloor}{\rfloor}
\DeclarePairedDelimiter{\bk}{(}{)}
\DeclarePairedDelimiter{\Bk}{[}{]}
\DeclarePairedDelimiter{\BK}{\{}{\}}
\DeclarePairedDelimiter{\angbk}{\langle}{\rangle}
\DeclarePairedDelimiterX\mysetbase[2]{\lbrace}{\rbrace}{#1\,\delimsize\vert\,#2}
\NewDocumentCommand{\myset}{sO{}m m}{%
  \IfBooleanTF{#1}%
    {\mysetbase*{#3}{#4}}%
    {\mysetbase[#2]{#3}{#4}}%
}
\DeclareMathOperator*{\E}{\mathbb{E}}
\let\Pr\PrAux
\DeclareMathOperator{\poly}{poly}
\DeclareMathOperator*{\ind}{\mathbbm{1}}
\newcommand{\F}{\mathbb{F}}
\renewcommand{\tilde}{\widetilde}
\newcommand{\defeq}{\coloneqq}
\newcommand{\eps}{\varepsilon}
\renewcommand{\epsilon}{\eps}
\newcommand{\defn}[1]{{\boldmath\textbf{#1}}}
\newcommand{\numberthis}{\addtocounter{equation}{1}\tag{\theequation}}
\xpatchcmd\thmt@restatable{%
\csname #2\@xa\endcsname\ifx\@nx#1\@nx\else[{#1}]\fi
}{%
\ifthmt@thisistheone
\csname #2\@xa\endcsname\ifx\@nx#1\@nx\else[{#1}]\fi
\else
\csname #2\@xa\endcsname[{Restated}]
\fi}{}{}
\newcommand{\tperm}{t_{\text{perm}}}
\newcommand{\tcoup}{t_{\text{coupon}}}
\newcommand{\tslow}{t_{\text{slow}}}
\newcommand{\pb}{\textsf{Probe}}
\newcommand{\ft}{\textsf{Foot}}
\let\oldparagraph\paragraph
\renewcommand{\paragraph}[1]{%
  \oldparagraph{\boldmath #1}%
}
\title{Static Retrieval Revisited: To Optimality and Beyond}
\newcommand{\authorspace}{5em}
\author{
  \hspace{\authorspace}
  Yang Hu\thanks{Institute for Interdisciplinary Information Sciences, Tsinghua University. \texttt{y-hu22@mails.tsinghua.edu.cn}.}
  \and
  William Kuszmaul\thanks{Carnegie Mellon University. Partially supported by NSF grant CNS-2504471. \texttt{kuszmaul@cmu.edu}.}
  \and
  Jingxun Liang\thanks{Carnegie Mellon University. \texttt{jingxunl@andrew.cmu.edu}.}
  \hspace{\authorspace}
  \and
  \hspace{\authorspace}
  Huacheng Yu\thanks{Princeton University. Supported by NSF CAREER award CCF-2339942. \texttt{yuhch123@gmail.com}.}
  \and
  Junkai Zhang\thanks{Institute for Interdisciplinary Information Sciences, Tsinghua University. \texttt{zhangjk22@mails.tsinghua.edu.cn}.}
  \and
  Renfei Zhou\thanks{Carnegie Mellon University. Partially supported by the Jane Street Graduate Research Fellowship and the MongoDB PhD Fellowship. \texttt{renfeiz@andrew.cmu.edu}.}
  \hspace{\authorspace}
}
\date{}
\begin{document}

\maketitle

\begin{abstract}
In the static retrieval problem, a data structure must answer retrieval queries mapping a set of $n$ keys in a universe $[U]$ to $v$-bit values. Information-theoretically, retrieval data structures can use as little as $nv$ bits of space. For small value sizes $v$, it is possible to achieve $O(1)$ query time while using space $nv + o(n)$ bits---whether or not such a result is possible for larger values of $v$ (e.g., $v = \Theta(\log n)$) has remained open. 

In this paper, we obtain a tight lower bound (as well as matching upper bounds) for the static retrieval problem. In the case where values are large, we show that there is actually a significant tension between time and space. It is not possible, for example, to get $O(1)$ query time using $nv + o(n)$ bits of space, when $v = \Theta(\log n)$ (and assuming the word RAM model with $O(\log n)$-bit words).

At first glance, our lower bound would seem to render retrieval unusable in many settings that aim to achieve very low redundancy. However, our second result offers a way around this: We show that, whenever a retrieval data structure $D_1$ is stored along with another data structure $D_2$ (whose size is similar to or larger than the size of $D_1$), it is possible to implement the combined data structure $D_1 \cup D_2$ so that queries to $D_1$ take $O(1)$ time, operations on $D_2$ take the same asymptotic time as if $D_2$ were stored on its own, and the total space is $nv + \mathrm{Space}(D_2) + n^{0.67}$ bits.
\end{abstract}

\section{Introduction}

The static \defn{retrieval} problem asks a data structure to store a function mapping a set of $n$ keys from some universe $[U]$ to $v$-bit values.\footnote{Although most works on the topic consider only integer $v$, one can also generalize the problem to allow for values from some arbitrary range $[V]$. Several of the results in this paper (the lower bounds for both retrieval and augmented retrieval, and the upper bounds for augmented retrieval) also apply to this setting.} The data structure must support queries such that, for any query key, which is \emph{guaranteed to be in the set}, the associated value can be efficiently retrieved (if the queried key is not in the set, the data structure is allowed to return anything).
Retrieval data structures are closely related to \emph{dictionaries}, except that they do not necessarily detect if a key is in the input set. 
In addition to being studied on their own \cite{dietzfelbinger2008succinct,porat2009optimal,dietzfelbinger2019constanttime,demaine2006dictionariis,mortensen2005dynamic,kuszmaul2024space,hagerup2001efficient,fredman1984size,mehlhorn1984data,kuszmaul2025tight}, retrieval data structures are widely used in both theory \cite{bender2025optimal, bender2023tiny,kuszmaul2025tight,porat2009optimal,dietzfelbinger2008succinct} and practice \cite{graf2020xor, dillinger2022fast} as a building block for constructing space-efficient data structures. 
Since retrieval data structures do not require storage of the keys, they need not use as much space as a dictionary---the information-theoretic space lower bound for a static retrieval data structure is $nv$ bits.

Much of the work on retrieval \cite{mehlhorn1984data,hagerup2001efficient,porat2009optimal,dietzfelbinger2019constanttime,dietzfelbinger2008succinct} focuses on the question of whether one can construct \emph{time-efficient} solutions that use space close to $nv$ bits. A solution using $nv + R$ bits is said to have \defn{redundancy} $R$. Using a minimal perfect hash function~\cite{mehlhorn1984data,hagerup2001efficient}, one can construct a retrieval data structure with constant query time and redundancy $O(n)$ bits. 
In the case where $v$ is small, several solutions have been proposed that achieve $O(1)$ query time with $o(n)$ bits of redundancy \cite{porat2009optimal, dietzfelbinger2019constanttime}. It has been widely believed that, even when $v = \Theta(\log n)$ (and using machine word size $w = \Theta(\log n)$), it should be possible to achieve $O(1)$ query time and $o(n)$ bits of redundancy---but such a construction has, so far, proven elusive.\footnote{Note that, for the closely related static dictionary problem \cite{yu2020nearly, hu2025optimal}, it is known how to construct $O(1)$-time solutions with very low redundancy, even when the keys/values involved are $\Theta(\log n)$ bits.}

There have also been several solutions proposed that exhibit time/space trade-offs \cite{dietzfelbinger2019constanttime, dietzfelbinger2008succinct}. An extreme point on the space-efficiency side is an elegant result by Dietzfelbinger and Walzer~\cite{dietzfelbinger2019constanttime}, who demonstrate (assuming free randomness and machine word size $w = \Theta(\log n)$) that $O(\log n)$ bits of redundancy and constant query time are possible when the values are restricted to a single bit (i.e., $v = 1$). This extends to large $v$ (e.g., $v = \Theta(\log n)$) by storing a separate data structure for each of the $v$ bits in the value. Such a solution achieves nearly optimal space, but also leads to a much slower query time of $\Theta(v)$, as it retrieves one bit at a time.

The first result of this paper is a cell-probe lower bound for the static retrieval problem. We find that, surprisingly, when $v$ is large (e.g., $v = \Theta(\log n)$), both the minimal perfect hashing solutions \cite{hagerup2001efficient} and the aforementioned solution of appending $v$ one-bit solutions together \cite{dietzfelbinger2019constanttime} sit at optimal points on the trade-off curve between time and space redundancy. This lower bound also extends to a smooth trade-off between value range, query time, and redundancy.

\begin{theorem}\label{thm:retrieval_ub_lb}
    Any static retrieval data structure that maps $n$ keys from a universe $[U]$ to $v$-bit values with query time $t$ must use $nv + \floor{n \cdot e^{-O(wt/v)}}$ bits of space in the cell-probe model with word size $w \ge v$.
    Moreover, if $2^v, U \le \poly(n)$, there is also a data structure (assuming free access to random bits) that matches this space bound up to an additive $O(\log^2 n)$ term.
\end{theorem}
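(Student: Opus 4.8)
I would prove the two parts of this statement—a cell-probe lower bound and an upper bound matching it up to an additive $O(\log^2 n)$—separately, and I expect the lower bound to carry the technical weight. For the lower bound, the plan is an encoding argument. Fix a retrieval data structure using space $S=nv+R$ with worst-case query time $t$; by averaging we may assume it is deterministic, and we are free to take the universe $[U]$ as large as convenient, since a larger universe only strengthens the lower bound. The one structural feature I would exploit is that the query algorithm is \emph{oblivious to the key set $K$}: the location of the $i$-th probe made by a query for key $x$ is a function only of $x$ and of the contents of the first $i-1$ cells already read, with no dependence on $K$ (the algorithm never sees $K$ except through the memory). In particular the first probe location $p_0(x)$ depends on $x$ alone, so once $U \ge m\cdot n$ (where $m=S/w$ is the number of cells) some cell is the first probe of a set of $\ge n$ universe elements; restricting attention to key sets contained in that set, \emph{every} query now starts at the same cell, and a version of this reasoning applied one probe level at a time is the skeleton of the argument.

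The goal is to show that if $R<n\,e^{-c\,wt/v}$ for a suitable constant $c$, then a uniformly random $f\colon K\to\{0,1\}^v$ can be reconstructed from fewer than $nv$ bits, contradicting $H(f)=nv$. The intuition is that a query only ever sees $wt$ bits of the memory, and—because the $wt$-bit ``windows'' of different keys are forced by $K$-obliviousness and pigeonhole to overlap very heavily—these windows cannot each carry $v$ genuinely fresh bits about $f$; the wasted overlap is the slack that $R$ must absorb. Concretely I would aim to isolate a set of $\Omega\!\bigl(n\,e^{-O(wt/v)}\bigr)$ keys whose values are predictable given (i) a small, heavily shared piece of the memory and (ii) the values of all the remaining keys, and use this to encode $f$ succinctly. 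The obstacle—and I expect it to be the crux of the whole proof—is that the naive ``condition on a probed cell's contents and recurse on $t-1$ probes'' step over-restricts the set of inputs that remain available, so it does not on its own give a good bound; making the peeling work requires carefully tracking, at each of the $t$ levels, how the effective universe shrinks, how the already-committed cell contents constrain the conditional distribution of $f$, and how much slack has accumulated, and—crucially—organizing the accounting so that the loss compounds \emph{multiplicatively} across levels rather than additively. That multiplicative compounding is exactly what turns $t$ levels of ``$w$ bits each'' into the exponential redundancy bound $n\,e^{-\Theta(wt/v)}$ instead of a bound that merely degrades linearly in $t$.

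For the matching upper bound (with $2^v,U\le\poly(n)$ and free random bits), I would assemble known retrieval machinery, using different constructions in different ranges of $t$ and stitching them together: minimal perfect hashing (redundancy $O(n)$, constant query time) when $t=O(1)$; a tunably sparse linear-algebraic (``XOR'') retrieval structure for intermediate $t$, in which each key is hashed to $\Theta(wt/v)$ word-aligned value-slots and its value is recovered as an $\F_2$-linear combination of them, with the array sized just past the point where the resulting sparse system is solvable for every right-hand side—by the random-sparse-$\F_2$-matrix analysis underlying Dietzfelbinger--Walzer, an $e^{-\Theta(wt/v)}$ fraction of extra capacity suffices, giving redundancy $n\,e^{-\Theta(wt/v)}$ and query time $\Theta(t)$; and the Dietzfelbinger--Walzer bit-by-bit construction (redundancy $O(\log^2 n)$, query time $O(v)=O(t)$) once $t$ is large enough that $n\,e^{-\Theta(wt/v)}$ drops below $\log^2 n$. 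The rest is routine if somewhat fiddly: a standard hashing step to bring the universe down to $\poly(n)$, handling the value range, accounting for redundancy at the word (rather than value) level so the exponent comes out clean, re-drawing the hash so the space bound holds in the worst case, and verifying that the regimes overlap so that every $t$ is covered.
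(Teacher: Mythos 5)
Your two halves fare quite differently, so I'll address them separately.

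For the lower bound, you propose a genuinely different route from the paper: a recursive peeling/pigeonhole argument on the probe tree, restricting to keys sharing their first probe and then conditioning on cell contents, with the hope of making the losses compound multiplicatively across levels. The paper instead takes a non-recursive, one-shot route: it sets up a one-way communication game (Alice holds $X,A$, both players share a random tape), defines the coverage function $f(S) = \lvert\pb(S)\rvert$ counting distinct cells hit by truncated queries from $S$, observes that $f$ is monotone submodular with bounded marginals, and applies a Chernoff-type concentration bound for submodular functions (Vondr\'ak) to show that for a random $n$-subset $S$, $\Pr[f(S)\ge T]\le \exp(-n\,e^{-\Theta(wt/v)})$. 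The two-case split (either $X$'s footprint is small, in which case Alice sends the footprint in lieu of the values, or the footprint is large, in which case Alice sends $D$ and encodes $X$ cheaply conditioned on the rare event $f(X)\ge T$) then directly delivers the bound. You correctly identify the obstacle in your route --- conditioning on a cell's contents shrinks the effective universe by $2^w$ at each step, which is far too fast --- but you leave it unresolved, explicitly deferring the ``multiplicative accounting'' that would be needed to fix it. That is precisely the crux, and as written it is a gap, not a proof. The coverage-function argument is what lets the paper avoid any recursion at all, and your sketch has no substitute for it. Two smaller issues: your ``average away the randomness'' step is only valid for worst-case query time, whereas the paper's theorem holds for expected query time; and ``take $U$ as large as convenient'' is not free, since the statement is for a given $U$ --- although for the $U\ge mn\approx n^2$ you need, this happens to stay inside the paper's $U\le n^{O(1)}$ regime.

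For the upper bound, your plan is in the right spirit (random sparse $\F_2$-linear systems \`a la Dietzfelbinger--Walzer, one copy per bit, stitched across regimes), but the redundancy accounting is wrong by a factor of $v$, and you only gesture at the fix. If each key hashes to $\Theta(wt/v)$ slots and you ``size the array just past solvability,'' the Dietzfelbinger--Walzer analysis gives a $2^{-\Omega(wt/v)}$ fraction of \emph{extra columns}; since you are storing $v$ bits per column, the redundancy is $nv\cdot 2^{-\Omega(wt/v)}$, not $n\cdot 2^{-\Omega(wt/v)}$. This is exactly the paper's ``warm-up'' construction, and the paper then needs a further idea --- fixing a set of pivot columns, forcing the memory vectors to be zero on the free columns, storing only the $nv$ pivot bits, and adding a small rank/select structure plus a tiny dictionary of free-column positions so that queries can still locate their pivot entries in $O(t)$ time --- to shave the extra factor of $v$ down to $O(n/2^{\Omega(b)})$ bits. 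Your phrase ``accounting for redundancy at the word (rather than value) level so the exponent comes out clean'' names the problem but does not supply the mechanism, and the mechanism is a real construction, not bookkeeping. The stitching with minimal perfect hashing at $t=O(1)$ and bit-by-bit Dietzfelbinger--Walzer at $t=\Omega(v)$ is fine, but those two regimes were never the issue.
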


When $v, w = \Theta(\log n)$, our lower bound matches the retrieval data structure using minimal perfect hashing for $t = O(1)$, and matches $O(\log n)$ copies of one-bit retrieval for $t=\Theta(\log n)$. We remark that the full version of the lower bound (Theorem \ref{thm:non_augmented_lb}) also applies to retrieval data structures with arbitrary value ranges $V$ (not necessarily powers of two). On the upper bound side, in addition to presenting upper bounds that assume free random bits, we also show that this assumption can be removed at the cost of an additional $\tilde{O}(n^{2/3})$ bits of redundancy (see Corollary~\ref{col:non_augmented_retrieval_remove_random}).

Our second result offers a surprising way around this lower bound.
Consider an application where the retrieval data structure (say, for $v=\Theta(\log n)$) is one subroutine among several other data structures that need to be stored.
We show that if these other data structures occupy at least a comparable size to the retrieval data structure, then retrieval effectively incurs much less redundancy.

More precisely, let $A$ be the array obtained by concatenating the other data structures.
Hence, the query algorithms for these data structures would (only) require random access to $A$.
We show that $A$ and the retrieval data structure can be jointly stored using $|A|w+nv+n^{0.67}$ bits, such that it allows any $A[i]$ to be accessed in constant time, and any retrieval query to be answered in constant time as well.
The extra $n^{0.67}$ bits can be reduced to $\poly\log n$ assuming free random bits.
Comparing this with the lower bound in Theorem~\ref{thm:retrieval_ub_lb}, we demonstrate that merely the existence of one other data structure allows for an improvement on the space usage of retrieval, without downgrading the performance of either data structure by more than a constant factor.
A similar phenomenon, known as \emph{catalytic computation}, has been observed in space-bounded computation~\cite{buhrman2014computing}, although as far as we understand, there is no overlap in the techniques.

Our construction thinks of the random accesses to $A$ as themselves being retrieval queries, where the indices $i$ are the keys and the contents $A[i]$ are the values. What distinguishes this from the classical retrieval problem is that, whereas in the classical problem all keys are arbitrary, in this problem there are $|A|$ keys that are \emph{guaranteed to be contained} in the input. 
It turns out that this simple distinction is enough to bypass our retrieval lower bound, obtaining a data structure with fast queries and low redundancy.

We refer to this problem as the \defn{augmented retrieval problem}, and we refer to the queries to $A$ as \defn{augmented queries}. In this paper, we only consider the static setting, where both the retrieval and the augmented array are immutable.

\begin{restatable}{theorem}{ThmAugmentedRetrieval}
    \label{thm:augmented_retrieval_intro}
    Let parameters $U = n^{1 + O(1)}$ and $V = n^{3 + O(1)}$. 
    For any constant $\eps > 0$, there is a static augmented retrieval data structure 
    in the word RAM model with word size $\Theta(\log n)$ such that:
    \begin{itemize}
    \item The data structure answers value queries for a set of $n-\eps n$ keys, where the keys are from a universe $[U]$ and values are from $[V]$.
    \item The data structure supports $\eps n$ augmented queries, which allow the user to access an array of $\eps n$ elements $a_1, \ldots, a_{\eps n} \in [V]$.
    \item The data structure answers any retrieval and augmented query in worst-case constant time, and uses $n\log V+O(\log^2 n)$ bits of space.
    \item The data structure assumes free access to random hash functions and random permutations. Over this randomness, the data structure succeeds with probability $1-O(n/V)$ in its construction process.
    \end{itemize}
\end{restatable}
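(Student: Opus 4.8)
Here is how I would attack \cref{thm:augmented_retrieval_intro}.

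\medskip

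\textbf{Framework: one structured linear system.} The plan is to represent \emph{everything} — all $n-\eps n$ retrieval values and all $\eps n$ augmented values — by a single table $T[1\,..\,m]$ of $m = n + O(\log n)$ cells, each an element of the field $\F_V$ (taking $V$ to be a power of two, as we may since $V = n^{3+O(1)}$, so a cell is exactly $\log V$ bits and $T$ occupies $m\log V = n\log V + O(\log^2 n)$ bits). Every object we must store is assigned one $\F_V$-linear equation whose support is a size-$O(1)$ set of cells lying inside a window of $O(1)$ consecutive cells: for a retrieval key $x$ the support and coefficients come from the free random hash functions, so the equation is $\sum_{j\in S(x)} c_{x,j}\,T[j] = f(x)$; for an augmented slot $i$ the support $S(i)$ and coefficients are \emph{chosen by us}, with right-hand side $a_i$. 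A query then reads the $O(1)$ cells of the relevant support, forms the prescribed $\F_V$-combination, and returns it — worst-case $O(1)$ time — and the whole data structure is just $T$ plus $O(\log n)$-bit descriptors; nothing about the hashes or permutations is stored.

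\medskip

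\textbf{Construction: use the augmented slots as a free scaffold.} The construction step chooses the $\eps n$ designable equations and then solves the $n\times m$ system for $T$ by Gaussian elimination / peeling (construction time is unconstrained, so any polynomial-time solver is fine). The key idea is to lay the cells out in a line, cut them into $\Theta(n)$ short segments of $O(1/\eps)$ cells each, and reserve a handful of the augmented equations to form a \emph{deterministic coupling chain}: the $k$-th scaffold equation has the form $T[\text{last cell of segment }k] + T[\text{first cell of segment }k{+}1] = a_{i_k}$, which we resolve by setting the first cell of segment $k{+}1$ from the (already determined) last cell of segment $k$. Peeling then sweeps segment by segment; the scaffold passes the ``token'' forward with no slack at all, and inside each segment we must solve the $O(1/\eps)$ retrieval equations that land there against the segment's $O(1/\eps)$ cells — a small random matrix over $\F_V$, which is full rank with probability $1 - O(1/V)$, the residual equations (rare local ``overflow'') being spilled into the next segment or into a tiny secondary retrieval structure on $n^{o(1)}$ keys. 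A union bound over the $\Theta(n)$ segments gives failure probability $O(n/V)$, matching the theorem. The only genuine redundancy is the seed segment and the secondary structure — $O(\log^2 n)$ bits; the random permutations are used to balance retrieval keys across segments so that the per-segment loads concentrate, and the remaining augmented equations simply store their $a_i$ directly in a private cell.

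\medskip

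\textbf{Why this evades \cref{thm:retrieval_ub_lb}.} The point is exactly that the right-hand sides of the scaffold equations are the user's data $a_{i_k}$, which we were going to spend $n\log V$ bits storing regardless — so the coupling skeleton costs nothing. For \emph{plain} retrieval one has no such data to hang on the skeleton: a deterministic scaffold of the same density either collapses cells together or has to store arbitrary right-hand sides, costing $\Omega(n)$ bits, which is precisely the barrier \cref{thm:retrieval_ub_lb} certifies for $O(1)$-probe, $v=\Theta(\log n)$ retrieval. Thus the whole argument must genuinely exploit that a $\Theta(\eps n)$-sized block of the equations is ours to design and spread throughout the table, rather than random and concentrated at the boundary.

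\medskip

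\textbf{Main obstacle.} I expect the hard part to be the coupled-peeling analysis in the parameter regime that makes all three requirements hold simultaneously: $O(1)$-cell query windows, redundancy only $O(\log^2 n)$ bits, and failure probability as small as $O(n/V)$. Concretely, one must (i) show that, conditioned on the balancing permutation, the per-segment local matrix over $\F_V$ is close enough to uniform that the $O(1/V)$ singularity bound applies, (ii) control the cascade of ``spilled'' equations so that the secondary structure stays of size $n^{o(1)}$ and answers in $O(1)$ time, and (iii) verify that $\eps n$ scaffold equations is both necessary and sufficient for the chain to span the table with only $O(1)$ slack per segment. Balancing the segment length against $\eps$ and against the field size $V=n^{3+O(1)}$ is where the real work lies; everything else (word-RAM packing, since $U,V=\poly(n)$ make keys and values $O(1)$ words and descriptors $O(\log n)$ bits; distinguishing the two query types; the arithmetic-coding reduction if one insists on a non-power-of-two $V$) is routine.
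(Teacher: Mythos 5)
Your high-level intuition — that the augmented queries give us a $\Theta(\eps n)$-sized block of \emph{designable} equations whose right-hand sides we were going to pay for anyway, and that these should be used to "pin down" an otherwise under-determined linear system — is correct, and indeed matches the motivation in the paper. But your concrete realization of this idea is very different from the paper's, and I believe it has a fatal gap.

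The paper's construction is global, not local. Each retrieval query's support is $\tcoup = O(1)$ cells chosen independently and uniformly at random across the \emph{entire} $n$-cell table (coupon entries), and each augmented row additionally activates $\tperm = O(1)$ full column blocks chosen by random permutations over $[n/c]$. Full rank is then proved via Hall's theorem: for any subset $S$ of rows, the active columns span at least $|S|$ columns, w.h.p. The coupon entries suffice for $|S| \le n - n/(2c)$; the permutation entries (which exist only in augmented rows, and which always hit exactly one block per permutation by the bijection property) cover the remaining columns when $|S|$ is close to $n$. No locality, no segments, no peeling order, and no per-segment load balancing is needed.

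Your approach instead confines each retrieval query to a window of $O(1/\eps)$ consecutive cells (a segment), with augmented equations forming a deterministic coupling chain between adjacent segments. The problem is the load. Retrieval keys are arbitrary elements of $[U]$, so the only way to assign a key to a segment in $O(1)$ query time without storing $\Omega(n\log n)$ bits of assignment is a hash function — and then segment loads are (approximately) Poisson with mean $\Theta(1/\eps)$ and variance $\Theta(1/\eps)$, so a \emph{constant fraction} of segments receive more retrieval keys than they have free cells. You cannot use a "random permutation to balance retrieval keys across segments": a permutation only helps when the domain is a fixed, known set (this is exactly why the paper applies permutations \emph{only} to augmented rows, whose keys are $1,\ldots,\eps n$, and applies random hashing to retrieval rows). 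Consequently, the overflow you propose to spill into a secondary structure is $\Theta(n)$ keys, not $n^{o(1)}$, and by Theorem~\ref{thm:retrieval_ub_lb} any $O(1)$-time retrieval structure for $\Theta(n)$ keys with $\Theta(\log n)$-bit values must itself pay $\Omega(n)$ bits of redundancy — which defeats the whole construction. This is exactly the wall that the paper's global (non-windowed) support structure is designed to climb over.

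One smaller remark: your token-passing chain also implicitly requires each segment to have exactly the right number of retrieval equations for the local square system to be determined, which compounds the same imbalance problem; and if you tried to enlarge segments to $\Theta(\log n)$ cells for Chernoff-style concentration, each retrieval query would either read $\Theta(\log n)$ cells (too slow) or have $O(1)$ nonzeros in a $\Theta(\log n)$-dimensional local system, which is a sparse random matrix that is singular with constant (not $O(1/V)$) probability. The route via Hall's theorem on a globally sparse matrix, together with Schwartz--Zippel over $\F_V$ to convert the combinatorial spanning property into a $1-O(n/V)$ full-rank guarantee, is the piece of machinery your proposal is missing.
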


We remark that a special case of the augmented retrieval problem was also studied in~\cite{hu2025optimal}, who showed a result analogous to Theorem \ref{thm:augmented_retrieval_intro} in the case where \emph{almost all} of the queries are augmented queries (if there are $m$ non-augmented queries, there must be $U \log m$ augmented queries).\footnote{The authors of \cite{hu2025optimal} used augmented queries as an intermediate result to construct a highly space-efficient static dictionary. Based on this, we suspect that augmented queries have the potential to be quite useful for designing other space-efficient data structures in the future that use retrieval internally.} In contrast, Theorem \ref{thm:augmented_retrieval_intro} says that even a small linear number of augmented queries is enough to eliminate the redundancy for the retrieval queries. We also prove (Theorem~\ref{thm:augmented_lb}) that $\Omega(n)$ augmented queries are needed. More generally, assuming $V = n^{3 + O(1)}$ and a machine word size of $w = \Theta(\log n)$, it turns out that there is a threshold behavior: For any constant-time solution, if there are $o(n)$ augmented queries, then they do not help at all; but once there are $\eps n$ such queries for any positive constant $\eps$, we are suddenly able to reduce the redundancy all the way down to $\log^2 n$.

Finally, it is worth making a few comments on how to interpret Theorem \ref{thm:augmented_retrieval_intro}. For simplicity of notation, the theorem uses $n$ to denote the total number of queries including the augmented queries.
The theorem also assumes that the augmented queries have the same value range as the regular retrieval queries. This is without loss of generality when they both have $\Theta(\log n)$ bits, by a simple change-of-address calculation on the augmented array. Finally, although the theorem assumes free random bits, this assumption can be removed at a cost of increasing the space to $n\log V+O(n^{0.67})$ (see Corollary~\ref{col:augmented_retrieval_remove_random}).
We also handle smaller $V$ in Corollary~\ref{col:augmented_retrieval_small_size}.

\paragraph{Upper and lower bounds for static filters.}
Closely related to the static retrieval problem is the problem of constructing an approximate set membership data structure (a.k.a., a filter) \cite{bloom1970space,dietzfelbinger2008succinct, porat2009optimal,pagh2005optimal, arbitman2010backyard, bender2022optimal,carter1978exact,pagh2004cuckoo, fan2014cuckoo, chen2017dynamic, luo2019consistent}. A filter is initialized with two inputs---a set $S$ of $n$ elements from $[U]$ and a false positive rate $\epsilon$. The filter then needs to support queries where $\texttt{query}(x) = \text{true}$ for each $x \in S$, and where $\Pr[\texttt{query}(x) = \text{false}] \ge 1 - \epsilon$ for each $x \in [U] \setminus S$. For sufficiently large $U$, the information-theoretic space requirement for a filter is $n \log \epsilon^{-1} - o(1)$ bits \cite{carter1978exact}.

Assuming free randomness, one can directly transform any retrieval data structure into a filter data structure with the same time and redundancy guarantees.\footnote{Simply map each key $x \in S$ to a random hash $h(x) \in [\epsilon^{-1}]$ in the retrieval data structure, and, in the filter data structure, declare a key $x$ to be present if and only if the retrieval data structure returns $h(x)$.} Although this transformation is widely used in both theory \cite{porat2009optimal,dietzfelbinger2008succinct,dietzfelbinger2019constanttime} and practice \cite{graf2020xor, dillinger2022fast}, it is not known whether it gives an optimal space-time trade-off for filters. 

Our final result says that this transformation does, in fact, give an optimal filter. We show that filter data structures are subject to the same redundancy/time trade-off curve as are retrieval data structures: 
\begin{theorem}[Later Restated as \cref{thm:non_augmented_lb_filter}]
\label{thm:non_augmented_lb_filter_intro}
    Let $\epsilon, n, U$ satisfy $\epsilon > n^2/U$, and consider a machine word size $w \ge \log U$. Then any static filter that stores $n$ keys in $[U]$ with a false positive rate of $\epsilon$, and that incurs $O(t)$ probes per query in the worst case, must use space 
    \[n\log \epsilon^{-1}-O(1)+\lfloor n\cdot e^{-O(wt/{\log \epsilon^{-1}})}\rfloor.\]
\end{theorem}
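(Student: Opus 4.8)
The plan is to prove this by an entropy/compression argument in the cell-probe model that parallels, and reuses the machinery of, the retrieval lower bound---\cref{thm:retrieval_ub_lb} together with its arbitrary-value-range form \cref{thm:non_augmented_lb}. The analogy is that a filter built for a uniformly random $n$-element input set $S\subseteq[U]$ plays the role that a retrieval data structure for a uniformly random $v$-bit function plays there, with the value length $v$ replaced by the ``effective per-key cost'' $\log\epsilon^{-1}$. Writing $R$ for the filter's random string and $F=F(S,R)$ for its memory, so that $\mathrm{Space}(F)\ge H(F\mid R)\ge H(S)-H(S\mid F,R)$ with $H(S)=\log\binom{U}{n}$, the goal is to show that $H(S\mid F,R)\le \log\binom{U}{n}-n\log\epsilon^{-1}+O(1)-\big\lfloor n\cdot e^{-O(wt/\log\epsilon^{-1})}\big\rfloor$; equivalently, to encode $S$ given $(F,R)$ in that many bits.

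The $n\log\epsilon^{-1}-O(1)$ part is essentially the classical filter bound \cite{carter1978exact}: the positive set $P=\{x\in[U]:\texttt{query}(x)=\mathrm{true}\}$ contains $S$ and has expected size at most $n+\epsilon U$, so since $m\mapsto\log\binom{m}{n}$ is concave, Jensen's inequality gives $H(S\mid F,R)\le\mathbb{E}\big[\log\binom{|P|}{n}\big]\le\log\binom{n+\epsilon U}{n}$, and the hypothesis $\epsilon>n^2/U$ is exactly what makes $\log\binom{U}{n}-\log\binom{n+\epsilon U}{n}=n\log\epsilon^{-1}-O(1)$ (it keeps the Stirling-type correction factors, which are powers of $1-n/U$ and $1+n/(\epsilon U)$, at the $e^{\pm O(1)}$ level).

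The only genuinely new part is the further saving of $\big\lfloor n\cdot e^{-O(wt/\log\epsilon^{-1})}\big\rfloor$ bits over the trivial ``which $n$-subset of $P$'' encoding, which I would obtain by transplanting the time-sensitive step of the proof of \cref{thm:non_augmented_lb}. The picture is that in a near-optimal filter each $w$-bit cell is, on average, responsible for about $wt/\log\epsilon^{-1}$ keys' worth of information (a query reads $t$ cells and each key costs $\Theta(\log\epsilon^{-1})$ bits of the budget), so a $e^{-\Theta(wt/\log\epsilon^{-1})}$ fraction of cells are ``underutilized'', responsible for only $O(1)$ keys; the membership information routed through those cells cannot be amortized, so when one holds out a random constant fraction of the keys, the memberships of held-out keys whose query touches an underutilized cell are not recoverable from the remaining keys and cells, forcing that many extra bits. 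The assumption $w\ge\log U$ is used here exactly as $w\ge v$ is used for retrieval, and $\log\epsilon^{-1}$ replaces $v$ precisely because the information-theoretic computation above charges $\Theta(\log\epsilon^{-1})$ bits per key.

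The step I expect to be the main obstacle is fusing the two contributions into a single encoding rather than proving them in isolation: since $\big\lfloor n\cdot e^{-O(wt/\log\epsilon^{-1})}\big\rfloor\le n$, even an $\omega(1)$ slack in the information-theoretic part would swamp the redundancy term, so one must run one compression that simultaneously exploits the sharp $\binom{|P|}{n}$ count and peels off the keys exposed by the underutilized-cell argument, and then check that---precisely under $\epsilon>n^2/U$ and $w\ge\log U$---the encoded length of $S$ drops below $\log\binom{U}{n}$ unless $\mathrm{Space}(F)$ meets the claimed bound. A secondary nuisance, absent for retrieval, is that a filter stores a set, not a function, so there is no exact ``value'' attached to a key; I would treat ``$x\in S$ versus $x\in P\setminus S$'' as the effective bit carried by key $x$, and condition on a good random string $R$ (and on the constant-probability event that $|P|$ is not much larger than its mean) before invoking the cell-probe part.
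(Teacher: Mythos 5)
Your framework is right---compression argument, the analogy $v\leftrightarrow\log\epsilon^{-1}$, the positive set $P\supseteq S$ with $\E|P|\le n+\epsilon U$, Jensen/concavity of $m\mapsto\log\binom{m}{n}$, and the observation that fusing the two savings is the hard part---and it matches the paper's strategy at that level. But the mechanism you sketch for the extra $\lfloor n e^{-O(wt/\log\epsilon^{-1})}\rfloor$ bits is not the one that actually works, and it hides the two technical points the paper had to resolve.

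First, the ``underutilized cells / hold out a random fraction of keys'' picture does not match the retrieval proof you say you are transplanting, and I do not see it surviving scrutiny: the filter's memory $F$ together with the random tape already determines $P$ exactly, so every key's membership in $P$ is recoverable---nothing is ``not recoverable'' by touching an underutilized cell. What the paper does (in \cref{thm:non_augmented_lb} and again here) is a \emph{dichotomy on the footprint} $f(X)$: if the truncated queries of $X$ probe fewer than $T$ cells, Alice does not send $F$ at all but only the addresses and contents of $\pb(X)$, from which Bob reconstructs a candidate positive set and then receives $X$ as a subset of it; if $f(X)\ge T$, Alice sends $F$ and then exploits the submodular concentration bound to send $X$ in $\log\binom{|P|}{n}-\Omega(ne^{-O(wt/\log\epsilon^{-1})})$ bits, because conditioned on $F$ a random $n$-subset of $P$ is very unlikely to have footprint $\ge T$. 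Your sketch never produces the first branch at all, and the second branch needs a concentration inequality for $f$, not an amortization heuristic.

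Second, ``condition on the constant-probability event that $|P|$ is not much larger than its mean'' does not suffice, because the target saving is only $\lfloor n e^{-O(wt/\log\epsilon^{-1})}\rfloor\le n$ bits while the distributional slack must stay $O(1)$. The paper instead splits on whether $\Pr[P<n]$ is large: when $\Pr[P<n]\ge 1/10$ the dichotomy argument cannot be used safely (the concentration step needs $|P|\ge 2n$), but then $\E[\log\binom{n+P}{n}]$ is already substantially below $\log\binom{n+\E[P]}{n}$, which makes up the loss. This is a genuine case analysis, not a conditioning. Finally, you also need to handle the fact that $Y=P$ is \emph{data-dependent}: the hypothesis only gives expected query time $O(t)$ averaged over $[U]$, not over $Y$, so the expectation bound $\E[f(S)]\le M(1-e^{-\Theta(wt/\log\epsilon^{-1})})$ for a random $S\subseteq Y$ can fail if the average truncated time $g(Y)/|Y|$ is large. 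The paper covers this with a separate subcase using a Chernoff bound on $g(S)$ to show $\Pr[g(S)\le 10nt]$ is itself exponentially small. None of this appears in your plan, and without it the single encoding you want to run is not known to beat $\log\binom{U}{n}$.
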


The above lower bound applies to data structures that are allowed free randomness. However, due to technical reasons, it only applies to data structures offering worst-case query time guarantees instead of expected time guarantees.
Supposing $\log \epsilon^{-1}$ is an integer, we also obtain almost-matching upper bounds, one with free randomness (this is an almost immediate corollary of our upper bound for retrieval), and one with explicit hash functions (this bound comes with an extra $n^{0.67}$ bits of redundancy). 

\subsection{Related Work on Retrieval and Filters}

The earliest (implicit) solutions to the static retrieval problem use minimal perfect hash functions \cite{hagerup2001efficient, mehlhorn1984data, fredman1984size}. Hagerup and Tholey \cite{hagerup2001efficient} give a construction using $n \log e + o(n)$ bits that achieves query time $O(1)$. This implies a static retrieval data structure with redundancy $n \log e + o(n)$ bits and query time $O(1)$. 

The first results to achieve redundancy $o(n)$ appear to be in concurrent works by Dietzfelbinger and Pagh \cite{dietzfelbinger2008succinct} and Porat \cite{porat2009optimal}. Dietzfelbinger and Pagh \cite{dietzfelbinger2008succinct} give a solution with query time $O(k)$ and space $nv \cdot (1 + e^{-k}) + O(\log \log n)$ bits. Porat~\cite{porat2009optimal} reports a solution with $nv + o(n)$ bits and $O(1)$-time queries. A priori, Porat's solution might seem to contradict Theorem \ref{thm:retrieval_ub_lb}. However, a careful reading of the proofs in \cite{porat2009optimal} suggests that there is an implicit assumption (at least) of $v \le O((\log n) / (\log \log n)^2)$.
Finally, in the special case of $O(1)$-bit retrieval (and assuming access to fully random hash functions), Dietzfelbinger and Walzer \cite{dietzfelbinger2019constanttime} show how to achieve $O(1)$ query time with redundancy $O(\log n)$ bits.

A surprising artifact of our lower bound (Theorem \ref{thm:retrieval_ub_lb}) is that, in general, if one wishes to achieve a redundancy of $n^{1 - \Omega(1)}$ bits, and if the machine word size $w$ is $\Theta(\log n)$, then no matter what the value size $v$ is, there is no better solution than to simply use $v$ instances of Dietzfelbinger and Walzer's \cite{dietzfelbinger2019constanttime} data structure appended together. On the flip side, if one wishes to achieve $O(1)$-time queries, and if $w = \Theta(v) = \Theta(\log n)$, then the original minimal perfect hashing solutions \cite{hagerup2001efficient, mehlhorn1984data, fredman1984size} all achieve asymptotically optimal redundancy. 

In addition to static retrieval, researchers have also studied \emph{dynamic retrieval} \cite{demaine2006dictionariis, mortensen2005dynamic,kuszmaul2025tight}, where key-value pairs are inserted and deleted over time, and \emph{value-dynamic retrieval} \cite{kuszmaul2024space}, where the key set remains static but values can be updated. For value-dynamic retrieval, Kuszmaul and Walzer \cite{kuszmaul2024space} establish an information-theoretic lower bound of $nv + \Omega(n)$ bits. In general, Kuszmaul and Walzer's result \cite{kuszmaul2024space} can be viewed as a separation result, separating the space complexity of static retrieval from that of value-dynamic retrieval. However, the results in this paper offer a surprising nuance to this separation: when $v = \Theta(w)$ (where $w$ is the machine-word size), the optimal asymptotic redundancy for any \emph{constant-time} solution is actually the same for both problems. In this parameter regime, minimal perfect hashing is optimal. 

Filters, which were first introduced by Bloom in 1970 \cite{bloom1970space}, have also been the subject of a great deal of work for both theory \cite{dietzfelbinger2008succinct, porat2009optimal,pagh2005optimal, arbitman2010backyard, bender2022optimal,carter1978exact,pagh2004cuckoo, fan2014cuckoo, chen2017dynamic, luo2019consistent,lovett2013space,kuszmaul2024space} and practice \cite{abdennebi2021bloom, luo2019optimizing,clerry1984compact, pagh2005optimal, bender2012dont, pandey2021vector, pandey2017generalpurpose, geil2018quotient,graf2020xor, dillinger2022fast,dietzfelbinger2019constanttime}. In the static setting, the best upper bounds in both theory \cite{porat2009optimal,dietzfelbinger2008succinct,dietzfelbinger2019constanttime} and practice \cite{graf2020xor, dillinger2022fast} are based on direct reductions to the static retrieval problem. In the dynamic setting, where keys can be inserted and deleted, the best known solutions are all based on space-efficient dictionaries that store fingerprints of keys \cite{carter1978exact, pagh2005optimal, arbitman2010backyard, bender2022optimal}; such solutions can achieve redundancy $O(n)$ bits, which is known to be optimal in the dynamic setting \cite{lovett2013space,kuszmaul2024space}. In the incremental setting \cite{bloom1970space, kuszmaul2024space, lovett2013space}, where insertions but not deletions are allowed, the optimal redundancy drops to $o(n)$ bits \cite{kuszmaul2024space} so long as $\epsilon = o(1)$. 

One consequence of our lower bound for filters is that, in the special case where $\epsilon = 1 / \poly n$ (but where $U \ge \epsilon^{-1} n^2$), any static filter that wishes to achieve $O(1)$-time operations must incur $\Omega(n)$ bits of redundancy. In the same parameter regime, it is known how to construct \emph{dynamic} filters \cite{bender2022optimal} with $O(1)$-time operations and $O(n \log \log \cdots \log n)$ bits of redundancy (for any constant number of logarithms). Thus, there is a surprising sense in which, for small values of $\epsilon$, the optimal time-efficient solutions to the static and dynamic filter problems have almost the same redundancy bounds as one another.

\section{Lower Bounds}

\subsection{Lower Bound for Non-Augmented Retrieval}

We first show a space lower bound for the static retrieval problem. In particular, when the value size is $\Theta(w)$ bits, where $w$ is the machine-word size, and queries are answered in constant time, we show that retrieval data structures incur $\Omega(n)$ bits of redundancy.

\begin{theorem}
    \label{thm:non_augmented_lb}
    Let $U$ be the universe size such that $2n \le U \le n^{O(1)}$, and let $V \le n^{O(1)}$ be the value range. Let $v = \log V$ for notational simplicity. Suppose that there exists a static retrieval data structure in the cell-probe model with word size $w \ge v$ that stores $n$ key-value pairs, where the keys are from $[U]$ and the values from $[V]$. The data structure assumes free access to an infinitely long random tape, which is its only source of randomness, and answers queries using $t$ probes in expectation.
    Then, this retrieval data structure must use at least $nv + \lfloor n \cdot e^{-O(wt/v)} \rfloor$ bits of space.
\end{theorem}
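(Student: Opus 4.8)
The plan is to prove the bound by an incompressibility (encoding) argument. We may take $U = 2n$, since a larger universe only weakens the adversary. Because the random tape is independent of the input, a standard averaging-plus-Markov argument lets us fix one tape value $\rho$, restrict to a constant fraction of functions $f$ (so the input distribution still has entropy $nv - O(1)$), and further restrict, on each such instance, to the constant fraction of query keys whose query on $D_{S,f,\rho}$ reads only $O(t)$ cells. After these reductions we are in a worst-case, deterministic, $O(t)$-probe model. For any fixed key set $S$ of size $n$ the construction map $f \mapsto D_{S,f,\rho}$ is injective into $\{0,1\}^s$, which gives the trivial $s \ge nv$; the whole content of the theorem is the extra $\lfloor n e^{-O(wt/v)} \rfloor$ bits of redundancy.

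To extract this redundancy, the guiding picture is that a low-redundancy retrieval structure is forced into the form ``a length-$n$ array of $v$-bit values (contributing exactly $nv$ bits) plus a cheap locator that routes each query to its value'' --- this is the shape of both matching upper bounds (minimal perfect hashing at $t = O(1)$, and $v$ copies of one-bit retrieval at $t = \Theta(v)$), and the redundancy is precisely what pays for the locator. The encoding argument realizes this as follows. Because the query algorithm is a fixed procedure oblivious to $S$, the cell read on a query's first probe is a fixed function $c_1(x) \in [m]$, $m = s/w$, and each query touches only $O(t)$ cells in all; bucketing $[U]$ by $c_1(\cdot)$ (exploiting $U = 2n$ against $m \approx nv/w$) lets us choose $S$ so that the $n$ probe sets have substantial overlap, i.e.\ collectively touch a small common pool of cells. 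We then encode $f$ by transmitting the contents of a carefully chosen cell set $C^{\star}$ together with the explicit values of the ``unlocated'' keys (those whose query, simulated against $C^{\star}$ alone, would leave $C^{\star}$), plus --- in the adaptive case --- the $O(\log m)$-bit names of the cells where such simulations fail; the decoder reconstructs every value by simulating queries. Balancing $|C^{\star}|\cdot w$ against the number of located keys shows this beats the $nv$-bit information-theoretic barrier, a contradiction, unless the unlocated set already has size $\Omega(n e^{-O(wt/v)})$.

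The main obstacle is the quantitative core of this balance. A one-shot execution only reproduces the trivial $s \ge nv$: the bits saved by transmitting a block of cells in bulk are essentially cancelled by the cost of the values that block encodes, so no positive term survives. Obtaining the exponentially small but nonzero redundancy seems to require a global, level-by-level analysis over the $O(t)$ probe rounds: a single $w$-bit cell read can disambiguate a value's location among at most $2^{O(w/v)}$ candidate $v$-bit ``slots'', so after $t$ rounds the fraction of keys that the structure can locate ``for free'' is at most $2^{-O(wt/v)}$; equivalently, $R$ bits of redundancy can account for the free location of at most $R \cdot 2^{O(wt/v)}$ keys, and since all $n$ keys must be located this forces $R \ge n \, 2^{-O(wt/v)}$. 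Turning this heuristic into a rigorous induction --- and carrying the worst-case/deterministic reductions (choice of $\rho$, the Markov truncation, the choice of $S$) through it uniformly --- is where the bulk of the technical work lies, and it is also where I would expect the proof in this paper to depart from textbook cell-probe arguments.
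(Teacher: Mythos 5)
Your overall framing --- an encoding/incompressibility argument, reducing to a cleaner setting via Markov, and then trying to compress the data structure together with the inputs --- is the right genre, and indeed the paper proves the theorem via a one-way communication game (Alice sends the key set $X$ and values $A$ to Bob, sharing the public random tape). However, your proposal explicitly stops short of the quantitative core, and the ``level-by-level induction over probe rounds'' you propose to fill the gap is not how the paper proceeds, nor is it obviously correct on its own terms.

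The actual key mechanism is a \emph{dichotomy on the total number of distinct cells probed}. Define $\tslow = t\cdot e^{(\beta/2)wt/v}$ and truncate every query at $\tslow$ probes; let $f(X)$ be the number of distinct cells touched by the truncated queries of the keys in $X$, and fix a threshold $T \approx nv/w - R/w$. Then:
\begin{itemize}
  \item If $f(X) < T$, Alice does \emph{not} send the whole data structure. She sends $X$ explicitly (uncompressed), then the contents of the $f(X) < T$ probed cells (in discovery order, so no addresses are needed), then the few slow queries' answers explicitly. This costs $\log\binom{U}{n} + f(X)\cdot w + o(nv)$ bits, beating $\log\binom{U}{n} + nv$ outright because $f(X)\cdot w < T\cdot w < nv$.
  \item If $f(X) \ge T$, Alice sends the full data structure $D$ (about $nv+R$ bits), and then sends $X$ \emph{compressed conditioned on the event $f(X)\ge T$}. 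The savings come from showing that, for a random subset $S\subseteq[U]$ (each element in with probability $n/U$), $\Pr[f(S)\ge T]$ is exponentially small, so $-\log\Pr[S=X \mid f(S)\ge T]$ is about $\log\binom{U}{n} - n\cdot e^{-O(wt/v)}$.
\end{itemize}
The $e^{-O(wt/v)}$ exponent does not come from a per-probe ``disambiguation among $2^{O(w/v)}$ slots'' induction. It comes from two separate estimates: (i) $\E[f(S)] \le M(1 - e^{-O(wt/v)})$ via a Jensen argument, where $M = (nv+R)/w$ is the number of cells; and (ii) a concentration inequality for monotone submodular functions with bounded marginals (Vondr\'{a}k's inequality) applied to the \emph{coverage function} $f(S) = |\bigcup_{x\in S}\pb(x)|$. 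The truncation to $\tslow$ probes is what bounds the marginals; the slow queries are handled separately by sending their answers explicitly, and Markov's inequality guarantees there are few of them. Without identifying $f$ as a coverage function and reaching for the submodular concentration bound, the dichotomy does not close, and a round-by-round induction of the kind you describe runs straight into the one-shot cancellation you correctly flag as the obstacle.

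Two smaller points. First, the paper does \emph{not} fix the random tape; the tape is shared, the bad event (Property~\ref{prop:avg_time_bound} failing) is handled with a one-bit flag, and the bound is on the \emph{expected} message length. Fixing a single good $\rho$ by averaging is a reasonable alternative presentation but must be carried out carefully since it destroys the prefix-free structure that the paper relies on. Second, while taking $U = 2n$ without loss of generality is sound for a lower bound, the paper actually uses $U$ in the protocol: the random set $S$ is drawn from $[U]$ with inclusion probability $n/U$, and the bound $n/U\le 1/2$ is used in the Jensen step. Restricting to $U = 2n$ is fine but does not simplify matters, and the ``bucketing by first probe $c_1(\cdot)$'' you propose in order to engineer an overlapping key set $S$ plays no role in the actual argument --- the compressibility comes from the rarity of the event $f(S)\ge T$, not from a constructed $S$.
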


By saying that the data structure answers queries ``using $t$ probes in expectation,'' we mean that for a fixed set of inputs and a fixed query, the expected number of probes required to answer the query is at most $t$, where the expectation is taken over the randomness of the random tape.

\begin{proof}
    Let $\beta$ be a sufficiently large constant. We aim to prove a space lower bound of $nv + \lfloor n \cdot e^{-\beta wt/v} \rfloor$ bits. We only need to consider the case where $wt/v \in [1, (1/\beta) \ln n]$, since otherwise we have $e^{-\beta wt/v} < e^{-\ln n} = 1/n$, and our desired space bound becomes $nv$ bits, which is simply the information-theoretic lower bound for retrieval data structures.

    \paragraph{Constructing a communication game.}
    Our lower bound is proven by analyzing a one-way communication game. Let $X = \{x_1,\dots,x_n\} \subset [U]$ be a set of keys, and let $A = (a_1,\dots,a_n) \in [V]^n$ be their associated values. Alice wishes to send $X$ and $A$ to Bob. We also assume that Alice and Bob both have access to an infinitely long public random tape. By information-theoretic lower bounds, in order to send $X$ and $A$, Alice needs to send at least $\log \binom{U}{n} + nv$ bits in expectation, where the expectation is taken over the randomness of the public random tape.

    We prove \cref{thm:non_augmented_lb} by contradiction. Assume there exists a retrieval data structure that answers queries in $t$ probes in expectation and uses $nv + R$ bits of space, where $R \le n \cdot e^{-\beta \cdot wt/v}$. We will show that this assumption allows us to construct a communication protocol where Alice can send the key set $X$ and associated values $A$ to Bob using a message whose expected length is less than $\log \binom{U}{n} + nv$ bits,\footnote{Additionally, Alice's message will be a prefix-free code, meaning that Bob can determine where the end of the message is without additional bits of information being sent to encode the length.} which leads to a contradiction.

    \paragraph{Notation.} We start by setting up notations that we will use in the proof. Let $D$ be the memory configuration of the retrieval data structure, when the input is $X$ and $A$, and the construction algorithm uses the public random tape as its source of randomness. By Markov's inequality, we can show that, with probability $\ge 4/5$ over the public random tape, the data structure satisfies the following:
    \begin{property}
        \label{prop:avg_time_bound}
        The following holds for the data structure:
        \begin{enumerate}
            \item The average query time of the queries in $X$ is at most $10t$;
            \item The average query time of the queries in $[U]$ is at most $10t$.
        \end{enumerate}
    \end{property}
    In the protocol, if \cref{prop:avg_time_bound} does not hold, Alice will send one bit indicating that this is the case, then send the entire input $X$ and $A$, and terminate. That is, we only try to save communication bits when \cref{prop:avg_time_bound} holds.
    
    Throughout our proof, we will fix the retrieval data structure $D$ and perform various queries on it (using the public random tape as its source of randomness), sometimes omitting the notation $D$ for the data structure. We also define the following notations and terminologies describing the behavior of the queries:
    \begin{itemize}
    \item Let $\tslow\defeq t\cdot e^{(\beta/2) wt/v}$ be a threshold. If a query probes at least $\tslow$ cells, we say that it is a \defn{slow query}, otherwise we say that it is a \defn{fast query}.
    \item We define the \defn{truncated queries}, where the truncated version of a query on key $x$ simulates the original query on $x$ until it probes $\tslow$ cells. If a query probes fewer than $\tslow$ cells, the truncated version returns the correct answer; otherwise, it returns $\bot$. In other words, the truncated version leaves fast queries unchanged while truncating slow queries. We study truncated queries because they provide worst-case time guarantees, which will make it possible to apply certain concentration bounds in the analysis. All subsequent definitions are based on truncated queries.
    \item For any $S\subset [U]$, let $\pb(S)$ be the set of cells that are probed when answering the \emph{truncated} queries in $S$ on $D$, and let $f(S)=|\pb(S)|$ be the number of cells in $\pb(S)$.
    \item Define the \defn{footprint} of $S$, denoted by $\ft(S)$, to be a sequence of $f(S)$ cells, constructed as follows: Initially, $\ft(S)$ is empty. Enumerate the keys $x\in S$ in increasing order. For each $x$, simulate its \emph{truncated} query on $D$ one probe at a time. If the cell it probes is already added to $\ft(S)$, ignore it. Otherwise, append its content (but not the address) to $\ft(S)$. We remark that $\ft(S)$ has length exactly $f(S)$ machine words; and that, given $S$, and given $\ft(S)$, one can straightforwardly recover the answer to every truncated query for every element in $S$. This also implies that $\ft(S)$ is a prefix-free code. That is, given $S$ and $\ft(S)$, one can determine where the end of $\ft(S)$ is without additional information.
    \item Let $M\defeq (nv+R)/w$ be the number of cells in the data structure.
    \end{itemize}
    
    \paragraph{Motivating the communication protocol.} Our proof will take the form of a communication protocol in which Alice exploits the space efficiency of $D$ to send the keys $X$ and values $A$ to Bob with impossibly few bits. To motivate the protocol, let us imagine for a moment that all queries are fast queries, and that we are in the parameter regime $w = v = \Theta(\log n)$ and $t = O(1)$.  
    
    Suppose that the number $f(X)$ of distinct memory cells probed by all queries satisfies $f(X) < nv/w = n$. Then, after Alice sends Bob the key set $X$, there is an opportunity to compress the answers to the queries on $X$ as follows: Alice can simply send $\ft(X)$ to Bob, which consists of $f(X) < nv/w$ machine words, but allows Bob to recover answers to all of the queries in $X$. 
    
    On the other hand, if $f(X) \ge n$, then we can argue that the retrieval data structure $D$ implicitly encodes $\Omega(n)$ bits of information about $X$. This is because, if we take a random set $S \subseteq [U]$ of size $n$, the probability of $f(S)$ satisfying $f(S) \ge n$ is very small (all but a $e^{-\Omega(n)}$ fraction of sets satisfy $f(S) \le (1 - \Omega(1))n$). So the fact that $f(X) \ge n$ is enough for us to recover $\Omega(n)$ bits of information about $X$ from $D$. This allows Alice to construct a message using $D$ in which she is able to save $\Omega(n)$ bits of space over the information-theoretic optimum. 

    In the rest of the section, we will present the full protocol, and we will show how to remove all of the above assumptions (allowing for slow queries and supporting all parameter regimes).

    \paragraph{Defining the full protocol.} As discussed earlier, the first step of the protocol is to send one bit indicating whether \cref{prop:avg_time_bound} holds. If not, then Alice directly sends $X$ and $A$, and terminates.
    
    Let
    \[
    T\defeq(nv-(10t/\tslow)nv-100)/w
    \numberthis \label{eq:defn_T}
    \]
    be a threshold. Intuitively, $T$ represents the number of cells needed to encode the answers to the fast queries.
    Next, Alice sends one bit to Bob, indicating whether $f(X)\ge T$. What Alice sends afterwards depends on this bit. 

    \paragraph{Case 1: $f(X)<T$.} In this case, we can send the values $A$ more efficiently, by sending the contents of the cells probed by $X$. That is, Alice's message to Bob consists of the following:
    \begin{enumerate}
        \item One bit indicating that \cref{prop:avg_time_bound} holds;
        \item One bit indicating that $f(X) < T$;
        \item The key set $X$, using $\ceil*{\log\binom{U}{n}}$ bits;
        \item $\ft(X)$, using $f(X)\cdot w$ bits;
        \item The answers to the slow queries in $X$, in the increasing order of their keys. \cref{prop:avg_time_bound} implies that there are at most $(10t/\tslow)n$ slow queries, so this uses at most $\lceil(10t/\tslow)nv\rceil$ bits.
    \end{enumerate}
    
    After receiving $X$ and $\ft(X)$, Bob can determine which queries in $X$ are fast or slow, and can recover the answers to the fast queries. After that, Bob receives the answers to the slow queries, at which point he has fully recovered both $X$ and $A$. Overall, the size of the message is bounded by
    \begin{align*}
        & 2+\ceil*{\log\binom{U}{n}}+f(X)\cdot w+\ceil*{(10t/\tslow)nv} \\
        <{}& \log\binom{U}{n}+T\cdot w+\bk*{4+(10t/\tslow)nv}\\
        = {}& \log\binom{U}{n}+(nv-(10t/\tslow)nv-100)+\bk*{4+(10t/\tslow)nv}\tag{by \eqref{eq:defn_T}} \\
        < {}& \log\binom{U}{n}+nv-10
    \end{align*}
    bits.

    \paragraph{Case 2: $f(X)\ge T$.} We show that, conditioned on $D$, the number of possible key sets $X$ with the property that $f(X)\ge T$ is small, so that conditioned on the event that $f(X)\ge T$, we can send the set $X$ more efficiently.
    
    Formally, let $S\subseteq [U]$ be a random key set, where each key is included independently with probability $n/U$. Given that $f(X)\ge T$, Alice can send $X$ using $-\log\Pr\Bk[\big]{S=X \,\big|\, f(S)\ge T}+O(1)$ bits. To prove that this encoding is efficient, we will show that $\Pr[f(S)\ge T]$ is small, and that $\Pr[S=X]$ is close to $1/\binom{U}{n}$. Therefore, we will have that $-\log\Pr\Bk[\big]{S=X \,\big|\, f(S)\ge T} = -\log \Pr[S=X]+\log \Pr[f(S)\ge T]$ is much smaller than $\log\binom{U}{n}$.
    
    To prove that $\Pr[f(S)\ge T]$ is small, we show that $f$ is tightly concentrated around its mean, and that $\E[f(S)]$ is much smaller than $T$. From this, we will be able to deduce that the probability of $f(S)\ge T$ is exponentially small.

    To show that $f$ has a concentration property, we first note that $f$ is a monotone submodular function. This is because $f$ is a coverage function, i.e., $f(S)$ is the size of the union of the sets $\pb(x)$ for each $x$ in $S$. Such functions are shown to be monotone submodular in \cite{krause2014submodular}. Also, $f$ has marginal values in $[0,\tslow]$, which is to say that for any set $S$ and any element $x$, the value of $f(S\cup\{x\})$ is at most $f(S)+\tslow$.

    Those properties of $f$ are sufficient for deriving a concentration inequality, as shown in the following lemma.
    \begin{lemma}[\cite{vondrak2010note}, rephrased]
        \label{lem:conc_submodular}
        If $Z=f(X_1,\dots,X_n)$ where $X_i\in \{0,1\}$ are independently random and $f$ is a monotone submodular function with marginal values in $[0,t]$, then for any $\delta>0$,
        \[
        \Pr[Z\ge (1+\delta)\E[Z]]\le e^{-\frac{\delta^2}{\delta+2}\frac 1t \E[Z]}
        \]
    \end{lemma}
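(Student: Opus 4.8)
The plan is to derive this from the classical upper-tail bound for \emph{self-bounding functions} (Boucheron--Lugosi--Massart), obtaining the self-bounding property for free from monotonicity and submodularity. First I would normalize: replacing $f$ by $g \defeq f/t$, we may assume the marginal values lie in $[0,1]$, and since the event $\{Z \ge (1+\delta)\E Z\}$ coincides with $\{W \ge (1+\delta)\E W\}$ for $W \defeq g(X_1,\dots,X_n)$ and $\E W = \E Z/t$, it suffices to prove $\Pr[W \ge (1+\delta)\E W] \le \exp\!\bigl(-\tfrac{\delta^2}{\delta+2}\E W\bigr)$.

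Next I would check that $W$ is self-bounding. Viewing $g$ as a set function on $S \subseteq [n]$ (identifying $S$ with $\{i : X_i = 1\}$), define the leave-one-out functions $g_i(S) \defeq g(S \setminus \{i\})$, which depend only on the coordinates other than $i$. Monotonicity and the marginal-value bound give $0 \le g(S) - g_i(S) \le 1$ for every $i$ and $S$. For the self-bounding inequality, observe that only indices $i \in S$ contribute, and that a telescoping argument along any ordering $S = \{i_1,\dots,i_k\}$ combined with the diminishing-returns form of submodularity yields
\[
\sum_{i \in S}\bigl(g(S) - g(S\setminus\{i\})\bigr) \;\le\; \sum_{j=1}^{k} \bigl(g(\{i_1,\dots,i_j\}) - g(\{i_1,\dots,i_{j-1}\})\bigr) \;=\; g(S) - g(\emptyset) \;\le\; g(S),
\]
where the first inequality uses that adding $i_j$ to the smaller set $\{i_1,\dots,i_{j-1}\}$ has marginal gain at least that of adding it to $S\setminus\{i_j\}$, and the last uses $g(\emptyset) \ge 0$ (which holds in our application since $f$ is a coverage function, so $f(\emptyset) = 0$).

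With self-boundedness established, I would invoke the Boucheron--Lugosi--Massart upper-tail inequality $\Pr[W \ge \E W + u] \le \exp\!\bigl(-u^2/(2\E W + 2u/3)\bigr)$ and set $u = \delta\,\E W$, giving $\exp\!\bigl(-\tfrac{\delta^2}{2+2\delta/3}\E W\bigr)$, which is at most $\exp\!\bigl(-\tfrac{\delta^2}{2+\delta}\E W\bigr)$ because $2+2\delta/3 \le 2+\delta$; undoing the normalization recovers the stated bound. The main obstacle is essentially bookkeeping: choosing the right leave-one-out decomposition and checking the submodular telescoping above, plus verifying that the (slightly cleaner) constant $\delta+2$ in the exponent is weaker than what the self-bounding bound actually gives. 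If a self-contained proof is preferred, the alternative is to run the entropy method directly---bound $\mathrm{Ent}(e^{\lambda W})$ coordinate by coordinate using the same decomposition, derive a differential inequality for $\log\E[e^{\lambda W}]$, and optimize over $\lambda$---which is the route taken in \cite{vondrak2010note}, and where the self-bounding structure is precisely what makes the differential inequality integrable.
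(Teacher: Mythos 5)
Your proof is correct. The paper does not prove this lemma itself---it cites \cite{vondrak2010note}---and your argument is essentially the one given there: show that a normalized monotone submodular function with marginals in $[0,1]$ is self-bounding (via exactly the leave-one-out/telescoping decomposition you use, which relies on diminishing returns applied to $\{i_1,\dots,i_{j-1}\}\subseteq S\setminus\{i_j\}$), then invoke the Boucheron--Lugosi--Massart upper-tail bound for self-bounding functions and relax $2+2\delta/3$ to $2+\delta$. Two small remarks: (i) your parenthetical at the end slightly mischaracterizes \cite{vondrak2010note}---Vondr\'ak also proceeds by invoking the self-bounding concentration theorem rather than re-deriving the entropy-method differential inequality from scratch, so your ``Plan A'' \emph{is} his route; (ii) you correctly flag that the argument needs $g(\emptyset)\ge 0$ (equivalently $f$ nonnegative, or one may shift by $f(\emptyset)$), which is implicit in the lemma statement and satisfied in the paper's application since $f$ is a coverage function.
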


    Now, we apply \cref{lem:conc_submodular} to show that $\Pr[f(S)\ge T]$ is small. We first show that $\E[f(S)]$ is small:

    \begin{claim}
        \label{clm:efs_small}
        We have that $\E[f(S)]\le M(1-e^{-20wt/v})$.
    \end{claim}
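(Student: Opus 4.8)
The plan is to evaluate $\E[f(S)]$ by linearity of expectation over the $M$ cells of $D$ and then apply a convexity (Jensen) bound. Condition on the public random tape lying in the probability-$\ge 4/5$ event where \cref{prop:avg_time_bound} holds; then, for every key $x\in[U]$, the truncated query on $x$ probes a fixed set of cells. For each cell $c$ of $D$, let $P_c\subseteq[U]$ be the set of keys whose truncated query probes $c$. Since $c\in\pb(S)$ exactly when $S$ meets $P_c$, and each element of $[U]$ is included in $S$ independently with probability $n/U$, we get $\Pr[c\notin\pb(S)]=(1-n/U)^{|P_c|}$, and therefore
\[
M-\E[f(S)] \;=\; \sum_{c}(1-n/U)^{|P_c|}.
\]

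To lower-bound the right-hand side, the key observation is that $\sum_c|P_c|$ counts key--cell incidences, so it equals $\sum_{x\in[U]}(\text{number of cells the truncated query on }x\text{ probes})$. A truncated query never probes more distinct cells than the original query on $x$ probes, hence at most the query time of $x$; summing, $\sum_c|P_c|\le\sum_{x\in[U]}(\text{query time of }x)\le 10tU$ by the second item of \cref{prop:avg_time_bound}. The map $z\mapsto(1-n/U)^z$ is convex and decreasing, so Jensen's inequality across the $M$ cells gives
\[
\sum_{c}(1-n/U)^{|P_c|} \;\ge\; M\cdot(1-n/U)^{\frac1M\sum_c|P_c|} \;\ge\; M\cdot(1-n/U)^{10tU/M}.
\]

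It remains to clean up the exponent. From $U\ge 2n$ we have $n/U\le 1/2$, so the elementary bound $1-x\ge e^{-2x}$ valid on $[0,1/2]$ yields $(1-n/U)^{10tU/M}\ge e^{-20tn/M}$; and since $M=(nv+R)/w\ge nv/w$ we have $n/M\le w/v$, hence $e^{-20tn/M}\ge e^{-20wt/v}$. Chaining these gives $M-\E[f(S)]\ge M\cdot e^{-20wt/v}$, which rearranges to the claimed $\E[f(S)]\le M(1-e^{-20wt/v})$.

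I do not expect a genuine obstacle in this claim; its content is essentially a one-page computation. The only points requiring care are (i) invoking the incidence bound $\sum_c|P_c|\le 10tU$ in the conditioned world, where \cref{prop:avg_time_bound} controls the \emph{average} (over all of $[U]$) query time, rather than the unconditional expectation over the random tape; and (ii) the truncation bookkeeping, i.e.\ that truncated queries only ever probe a subset of what the original queries probe, so the incidence bound carries over unchanged.
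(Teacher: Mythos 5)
Your proof is correct and matches the paper's argument essentially line for line: you both decompose $\E[f(S)]$ by linearity over cells, apply the incidence bound $\sum_c|P_c|\le 10tU$ from \cref{prop:avg_time_bound}, invoke Jensen on the convex (equivalently, the paper's concave complementary) function of $|P_c|$, and finish with the elementary bounds $1-x\ge e^{-2x}$ for $x\le 1/2$ and $M\ge nv/w$. The only cosmetic difference is that you bound $M-\E[f(S)]$ from below while the paper bounds $\E[f(S)]$ from above directly.
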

    \begin{proof}
        For each cell $i\in [M]$, let $g_i=\sum_{x\in [U]}\ind[i\in \pb(x)]$ be the number of truncated queries that probe $i$. Then we have $\sum_i g_i\le 10Ut$ by \cref{prop:avg_time_bound}, which states that the average query time is at most $10t$. Note that $\E[f(S)]$ is just the sum of the probability that each cell is probed by a query in $S$, and we have that
        \begin{align*}
            \E[f(S)]&=\sum_{i=1}^{M} \bk*{1-\bk*{1-\frac nU}^{g_i}} \\
            &\le M\bk*{1-\bk*{1-\frac nU}^{10Ut/M}} \\
            &\le M(1-e^{-20nt/M})\tag{$n/U\le 1/2$} \\
            &\le M(1-e^{-20wt/v}),\tag{$M\ge nv/w$}
        \end{align*}
        where the second line follows from Jensen's inequality, which is applicable because $1 - (1-n/U)^{g_i}$ is concave in $g_i$.
    \end{proof}

    Secondly, we show that $T$ is large relative to $\E[f(S)]$:
    \begin{align*}
        T&=(nv-(10t/\tslow)nv-1)/w \\
        &=M-(R+(10t/\tslow)nv+1)/w \tag{$M=(nv+R)/w$}\\
        &\ge M-(nv/w)\cdot e^{-(\beta/3)wt/v}\tag{$R\le n\cdot e^{-\beta wt/v}$ and $\tslow=t\cdot e^{(\beta/2)wt/v}$}  \\
        &\ge M(1-e^{-(\beta/3)wt/v}).\tag{$M\ge nv/w$}
    \end{align*}
    
    Let $\delta$ be a parameter such that $T=(1+\delta)\E[f(S)]$, then $\delta\ge e^{-20wt/v}$. We have by \Cref{lem:conc_submodular} that 
    \begin{align*}
        \Pr[f(S)\ge T]&\le \exp\bk*{-\frac{\delta^2}{\delta+2}\frac 1{\tslow}\E[f(S)]} \\
        &\le \exp\bk*{-\frac 16(nv/(wt))\cdot e^{-(40+\beta/2)wt/v}}\tag{$\E[f(S)]\le M\le 2nv/w$}\\
        &\le \exp\bk*{-n\cdot e^{-(2\beta/3)wt/v}} \tag{$wt/v\ge 1$}.
    \end{align*}
    Now since $X$ satisfies that $f(X)\ge T$, we can send $X$ using a message of length
    \begin{align*}
    -\log \Pr \Bk[\big]{S=X \,\big|\, f(S)\ge T} + O(1) &= -\log \Pr[S=X]+\log \Pr[f(S)\ge T]+O(1)
    \end{align*}
    bits. Here,
    \begin{align*}
        -\log \Pr[S=X]&=-\log\bk*{\bk*{\frac nU}^n\bk*{1-\frac nU}^{U-n}}.
    \end{align*}
    In comparison, using Stirling's approximation, we can show that
    \begin{align*}
        \log \binom{U}{n}&=-\log\bk*{\bk*{\frac nU}^n\bk*{1-\frac nU}^{U-n}}-O(\log n).
    \end{align*}
    Therefore $-\log \Pr[S=X]=\log \binom{U}{n}+O(\log n)$. 
    
    So the number of bits saved when sending $X$ is 
    \begin{align*}
        &\log\binom{U}{n}-\bk[\big]{-\log \Pr[S=X]+\log \Pr[f(S)\ge T]+O(1)} \\
        \ge & -\log\Pr[f(S)\ge T]-O(\log n) 
        =  \log e\cdot n\cdot e^{-(2\beta/3)wt/v}-O(\log n),
    \end{align*}
    which is larger than $n\cdot e^{-(3\beta/4)wt/v}$ since we assumed that $wt/v\le (1/\beta)\ln n$.

    Overall, in the case that $f(X)\ge T$, Alice's message to Bob consists of the following:
    \begin{enumerate}
        \item One bit indicating that \cref{prop:avg_time_bound} holds;
        \item One bit indicating that $f(X)\ge T$;
        \item The memory configuration $D$, using $nv+R$ bits;
        \item The set $X$ conditioned on $D$ and the event that $f(X)\ge T$, using $\le \log\binom{U}{n}-n\cdot e^{-(3\beta/4)wt/v}$ bits.
    \end{enumerate}
    After receiving $D$ and $X$, Bob can simulate the \emph{untruncated} queries and learn the values $A$, at which point he has recovered both $X$ and $A$, as desired. Since $R\le n\cdot e^{-\beta \cdot wt/v}$ and $\beta w t / v \le \ln n$, this protocol uses fewer than $\log\binom{U}{n}+nv-10$ bits.

   \paragraph{Putting the pieces together.}
   Finally, we calculate the expected length of the message sent by Alice.

   In the case where \cref{prop:avg_time_bound} does not hold, Alice sends $1+\ceil*{\log\binom{U}{n}+nv}$ bits in total. This happens with probability $\le 1/5$.
   In the two cases where \cref{prop:avg_time_bound} holds, Alice sends fewer than $\log\binom{U}{n}+nv-10$ bits. Overall, the expected number of bits sent is at most
   \begin{align*}
       \log\binom{U}{n}+nv+\bk*{2\cdot \frac 15-10\cdot \frac 45}<\log\binom{U}{n} + nv.
   \end{align*}
   Therefore, we obtain an impossibly good communication protocol when $R$ is at most $n\cdot e^{-\beta\cdot wt/v}$.
\end{proof}

\subsection{Two Extensions: Lower Bounds for Augmented Retrieval and Filters}

The lower bound technique used in the proof of \cref{thm:non_augmented_lb} can be extended to get the following two results. The proofs of those lower bounds are deferred to \cref{app:lb}.

\paragraph{A lower bound for augmented retrieval.} Later in this paper, we will present an augmented retrieval data structure with almost no redundancy, for the parameter regime where the $n$ retrieval values are $\Theta(\log n)$ bits and where there are $\Omega(n)$ augmented queries, each of which also accesses a $\Theta(\log n)$-bit value.

A natural question is whether $\Omega(n)$ augmented queries are really necessary in order to achieve almost zero redundancy. Here, we show that $o(n)$ augmented queries do not suffice to get $o(n)$ redundancy. 
\begin{restatable}{theorem}{AugmentedLB}
    \label{thm:augmented_lb}
    Let $U$ be the universe size such that $2n \le U \le n^{O(1)}$, and let $V\le n^{O(1)}$ be the value range. Let $v=\log V$ for simplicity.
    In the cell-probe model with word size $w\ge v$, suppose that there exists a static augmented retrieval data structure that stores $n$ key-value pairs---where the keys are from $[U]$ and the values are from $[V]$---as well as $m$ augmented elements in $[V]$. The data structure assumes free access to an infinitely long random tape, which is its only source of randomness, and answers queries using $t$ probes in expectation. If $m\le n\cdot e^{-O(wt/v)}$, then this augmented retrieval must use $(n+m)v+\lfloor n\cdot e^{-O(wt/v)}\rfloor$ bits of space.
\end{restatable}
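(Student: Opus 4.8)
The plan is to run essentially the same one-way communication argument as in the proof of \cref{thm:non_augmented_lb}, but now Alice must transmit, in addition to the key set $X=\{x_1,\dots,x_n\}\subset[U]$ and the retrieval values $A\in[V]^n$, also the augmented array $B=(b_1,\dots,b_m)\in[V]^m$. Information-theoretically she needs $\log\binom{U}{n}+(n+m)v$ bits in expectation, so it suffices to assume a data structure using $(n+m)v+R$ bits with $R\le n\cdot e^{-\beta wt/v}$ and build a protocol beating that bound. The single conceptual change is to fold the augmented queries into the query set: let $Y=\{1,\dots,m\}$ be the (fixed, input-independent) set of augmented-query indices, think of each augmented query as inducing a deterministic probe sequence on $D$ just like a retrieval query, and define $\pb(X\cup Y)$, $f(X\cup Y)$, and the footprint $\ft(X\cup Y)$ over all $n+m$ truncated queries. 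Since $Y$ is known to Bob a priori, $\ft(X\cup Y)$ together with $X$ lets Bob recover not only $A$ but also the entire augmented array $B$.

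I would then follow the same case split on $f(X\cup Y)$ against a threshold $T\approx\big((n+m)v-(10t/\tslow)(n+m)v-O(1)\big)/w$. In Case~1 ($f(X\cup Y)<T$), Alice sends two indicator bits, $X$ in $\lceil\log\binom{U}{n}\rceil$ bits, the footprint $\ft(X\cup Y)$ in $f(X\cup Y)\cdot w<Tw$ bits, and the answers to the few slow retrieval and slow augmented queries; since $Tw$ is slightly below $(n+m)v$, this beats the cost of transmitting $A$ and $B$ directly, so the total is below $\log\binom{U}{n}+(n+m)v$. The analog of \cref{prop:avg_time_bound} now needs three clauses — the average truncated-query time is $O(t)$ over the $n$ actual retrieval keys, over all $U$ potential retrieval keys, and over the $m$ augmented indices — each holding with constant probability over the random tape by Markov and a union bound. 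In Case~2 ($f(X\cup Y)\ge T$), Alice sends $D$ (using $(n+m)v+R$ bits) and then $X$ encoded conditioned on $D$ and on the event $f(X\cup Y)\ge T$; Bob re-simulates all untruncated retrieval and augmented queries on $D$ to recover $A$ and $B$. To bound the encoding length of $X$ I would reuse the submodular-concentration computation: $f(S\cup Y)$ is monotone submodular in the $U$ independent indicators defining a random set $S$ (it is still a coverage function), with marginals in $[0,\tslow]$, so by \cref{lem:conc_submodular}, $\Pr[f(S\cup Y)\ge T]=e^{-\Omega(n\cdot e^{-O(wt/v)})}$, which lets Alice save $\Omega(n\cdot e^{-O(wt/v)})$ bits when transmitting $X$ — comfortably more than $R$ plus the $O(\log n)$ Stirling slack, using $wt/v\le(1/\beta)\ln n$.

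The one genuinely new point, and the place where the hypothesis $m\le n\cdot e^{-O(wt/v)}$ gets consumed, is checking that the fixed block of augmented queries does not spoil the concentration argument, i.e.\ that $\E[f(S\cup Y)]$ stays bounded away from $T$. Writing $\E[f(S\cup Y)]\le\E[f(S)]+f(Y)$ with $f(Y)\le m\tslow$, the extra term is at most $m\cdot t\cdot e^{(\beta/2)wt/v}\le nt\,e^{-(\beta/2)wt/v}$, which in units of $T\approx nv/w$ is $(wt/v)\,e^{-(\beta/2)wt/v}=o(1)$ for $\beta$ a large enough constant; meanwhile the augmented analog of \cref{clm:efs_small} still gives $\E[f(S)]\le M(1-e^{-O(wt/v)})$ (the $m$ augmented queries only increase $M=((n+m)v+R)/w\ge nv/w$, leaving the bound intact), and the computation bounding $T$ from below still yields $T\ge M(1-e^{-\Omega(\beta)wt/v})$. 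Hence $T=(1+\delta)\E[f(S\cup Y)]$ with $\delta\ge e^{-O(wt/v)}$, exactly what the concentration bound requires. Finally, combining the "Property fails" branch (probability $\le1/5$, sending $O(\log\binom{U}{n}+(n+m)v)$ bits) with Cases~1 and~2 gives an expected message length below $\log\binom{U}{n}+(n+m)v$, a contradiction. I expect the constant-chasing — choosing $\beta$ and tracking the several $e^{-\Theta(wt/v)}$ factors so that the slack from the concentration bound dominates both $R$ and the contribution of $f(Y)$ — to be the most delicate part; conceptually nothing beyond \cref{thm:non_augmented_lb} is needed once the augmented queries are treated as a fixed block appended to the query set.
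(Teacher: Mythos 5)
Your proposal is correct, but it differs from the paper's proof in a way worth spelling out. The paper keeps the footprint machinery unchanged---$f$, $\pb$, $\ft$ are defined over the retrieval keys $X$ only---and handles the augmented array $B$ trivially: in Case~1 ($f(X)<T$), Alice appends $B$ verbatim using $\lceil mv\rceil$ bits, and in Case~2 Bob simply re-simulates the augmented queries from $D$. The hypothesis $m\le n\cdot e^{-O(wt/v)}$ enters only in the bound $T\ge M(1-e^{-(\beta/4)wt/v})$, through the extra subtraction of $mv/w$ from $M=((n+m)v+R)/w$. You instead fold the fixed set $Y$ of augmented query indices into the footprint, condition on $f(X\cup Y)\ge T$ rather than $f(X)\ge T$, and absorb the effect of $Y$ in the concentration step via $\E[f(S\cup Y)]\le\E[f(S)]+m\tslow$. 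Both routes go through, and your submodularity check for $S\mapsto f(S\cup Y)$ is sound (it is still a coverage function, with marginals in $[0,\tslow]$). The main cost of your variant is that it needs \cref{prop:avg_time_bound} to also bound the average truncated time of the $m$ augmented queries---otherwise the slow-query tail in Case~1 and the $f(Y)\le m\tslow$ bound are not under control. The paper's version avoids any time assumption on the augmented queries (as it explicitly remarks after the theorem statement), which is a small but genuine gain in generality; your version trades that away for a slightly more uniform treatment of all $n+m$ queries. Either way, the theorem as stated is established.
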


We remark that this lower bound holds even when there is no time constraint on the augmented queries.

\paragraph{A lower bound for filters.} We show that the same redundancy/time trade-off curve that we proved for retrieval data structures also holds for filters, where $\log \epsilon^{-1}$ takes the role of $v$.

\begin{restatable}{theorem}{FilterLB}
    \label{thm:non_augmented_lb_filter}
    Let $n^2 \le U\le n^{O(1)}$ be the universe size. In the cell-probe model with word size $w \ge \log U$, suppose that there exists a static filter that stores $n$ keys in $[U]$ and has a false positive rate of 
    $\epsilon \ge n^2 / U$.
    The data structure assumes free access to an infinitely long random tape, which is its only source of randomness, and answers queries using $t$ probes in expectation. Then this filter must use
    \[
    n\log \epsilon^{-1} - O(1) + \floor*{n \cdot e^{-O(wt/{\log \epsilon^{-1}})}}
    \] bits of space.
\end{restatable}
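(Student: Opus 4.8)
The plan is to mirror the one-way communication argument behind \cref{thm:non_augmented_lb}, with $\log\epsilon^{-1}$ playing exactly the role that the value length $v$ did there. The conceptual starting point is that, information-theoretically, a filter behaves like a retrieval structure storing $(\log\epsilon^{-1})$-bit ``fingerprints'': conditioned on the memory configuration $D$ of a filter built on a key set $X$, the only thing revealed about $X$ is that it lies inside the accept set $F(D)=\{x:\texttt{query}_D(x)=\text{true}\}$, and (after a Markov step conditioning on the typical event $|F(D)|\le n+O(\epsilon(U-n))$) this set has size $n+O(\epsilon U)$. To make the $n\log\epsilon^{-1}$ term materialize cleanly, I would have Alice send a key set with a per-key tag: view the universe as $[U']\times[K]$ with $K=\lceil\epsilon^{-1}\rceil$ and $U'=\lfloor U/K\rfloor$ (legitimate, since $\epsilon\ge n^2/U$ forces $U'\ge\epsilon U-1\ge n^2-1\ge 2n$), and let Alice draw a uniformly random \emph{column-spread} $n$-subset $\hat X=\{(s_1,g_1),\dots,(s_n,g_n)\}$ of $[U']\times[K]$ (no two keys in a common $[U']$-column), whose information content is $\log\binom{U'}{n}+n\log K$. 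Building the filter $D$ on $\hat X$ and then sending $D$ followed by which $n$-subset of the accept set equals $\hat X$ costs $|D|+\log\binom{\,n+O(\epsilon U)\,}{n}$ bits; since the $n\log e$ Stirling corrections of $\log\binom{U'}{n}$ and $\log\binom{n+O(\epsilon U)}{n}$ cancel, this is $|D|-n\log\epsilon^{-1}+O(1)$ below the baseline, so it already re-derives (inside the game) the classical Carter–Wegman-style bound $|D|\ge n\log\epsilon^{-1}-O(1)$.

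The extra $\lfloor n\cdot e^{-O(wt/\log\epsilon^{-1})}\rfloor$ term then comes from the footprint / submodular-concentration machinery of the proof of \cref{thm:non_augmented_lb}, essentially transplanted. Assume for contradiction that $|D|=n\log\epsilon^{-1}-O(1)+R$ with $R\le n\cdot e^{-\beta wt/\log\epsilon^{-1}}$ for a sufficiently large constant $\beta$. After a Markov step giving an analogue of \cref{prop:avg_time_bound} (average query time over $X$ and over $[U]$ both $O(t)$), I would truncate every query at $\tslow\defeq t\cdot e^{(\beta/2)wt/\log\epsilon^{-1}}$ probes (to obtain the worst-case per-query bounds that concentration needs), pick a probe set $Q$ of size $O(n)$ whose truncated-filter answers pin down the tags $(g_i)$ up to $O(1)$ candidates each, and define $\pb(Q)$, $f(Q)=|\pb(Q)|$, and $\ft(Q)$ verbatim as before. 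Splitting on a threshold $T$: if $f(Q)<T$, Alice sends $\{s_i\}$, the footprint $\ft(Q)$ (fewer than $Tw$ bits), the $O(n)$ bits pinning the tags among their surviving candidates, and the few truncated-away answers, beating the baseline $\log\binom{U'}{n}+n\log K$ by a constant. If $f(Q)\ge T$, then, exactly as in \cref{clm:efs_small}, $f$ is a monotone submodular (coverage) function with marginals in $[0,\tslow]$ whose mean is below $T$ by a $(1-e^{-O(wt/\log\epsilon^{-1})})$ fraction of the cell count (via the average-time bound and Jensen), so \cref{lem:conc_submodular} makes $\Pr[f(Q)\ge T]\le e^{-n\cdot e^{-O(wt/\log\epsilon^{-1})}}$; Alice then sends $D$ together with $\hat X$ conditioned on this rare event, saving more than $R$ bits over the baseline. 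Either way Alice transmits $\hat X$ in expectation below its information content, the desired contradiction.

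The main obstacle — and the one place this is genuinely harder than the retrieval proof rather than a verbatim translation — is choosing $Q$ correctly. Recovering each tag $g_i$ by probing all $K$ lifted copies $(s_i,1),\dots,(s_i,K)$ uses $|Q|=nK=n\epsilon^{-1}$ queries, which makes $\E[f(Q)]$ (hence $T$) scale like $\epsilon^{-1}\cdot wt/\log\epsilon^{-1}$ in the exponent rather than the required $wt/\log\epsilon^{-1}$, killing the argument; yet $Q$ must still be informative enough that the leftover ``pin-down'' data is only $O(n)$ bits. Reconciling these (for instance, by having Alice send a coarse hint for each $g_i$ so only $O(1)$ lifted copies per key need probing, or by folding tag recovery directly into the accept-set conditioning so that $Q$ can simply be the $n$ keys themselves) is the crux; the remaining bookkeeping — the Markov reduction, the $\tslow$-truncation, the Stirling cancellations, the case threshold $T$, and the invocation of \cref{lem:conc_submodular} — carries over from \cref{thm:non_augmented_lb} with only cosmetic changes. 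As the surrounding discussion notes, the truncation/concentration step is also where some extra care is needed to handle expected- (rather than worst-case) time guarantees.
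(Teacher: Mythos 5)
You have correctly identified that the central difficulty is the choice of probe set $Q$, but you leave that step unresolved, and the paper takes a route that avoids it entirely. The paper does not factor the universe as $[U']\times[K]$ and does not try to recover per-key ``tags''; it runs the communication game directly on $X\subset[U]$ and exploits the fact that, after sending the filter $D$, the accept set $Y$ (the keys in $[U]$ on which $D$ answers true) has size $n+P$ with $\E[P]=(U-n)\epsilon$, so the baseline $n\log\epsilon^{-1}-O(1)$ materializes as $\log\binom{U}{n}-\log\binom{n+(U-n)\epsilon}{n}$ with no tag machinery and the probe set is just $X$ itself. This choice forces a change your sketch does not anticipate in the cheap branch $f(X)<T$: Bob does not know $X$, so he cannot simulate queries from $\ft(X)$ alone---he must be able to simulate queries on \emph{all} of $[U]$ to reconstruct a candidate accept set---hence Alice must send the \emph{addresses} of the probed cells in addition to their contents, costing an extra $\log\binom{M}{f(X)}$ bits and requiring a different choice of $T$ (one bounds $\binom{M}{f(X)}\le(eM/(M-f(X)))^{M-f(X)}$ and uses $\log M\le w$ to show this is still dominated by the savings).

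Equally important, you assert that the bookkeeping carries over ``with only cosmetic changes,'' but the randomness of $Y$ introduces two new failure modes the paper must isolate. If $P<n$, i.e.\ $|Y|<2n$, then $\log\binom{|Y|}{n}=O(n)$ is already tiny and the argument of \cref{thm:non_augmented_lb} (which needs a universe of size at least $2n$) does not apply; the paper bails out and pays $R+1$ extra bits in that event, recouping them via Jensen on the concave function $p\mapsto\log\binom{n+p}{n}$, together with a separate argument when $\Pr[P<n]\ge1/10$. And even when $|Y|\ge2n$, the keys of $Y$ may have atypically large average truncated query time, in which case $\E[f(S)]$ for a random $S\subseteq Y$ is no longer bounded away from $T$ and \cref{lem:conc_submodular} yields nothing useful; the paper handles this sub-case with a separate Chernoff bound on $g(S)=\sum_{x\in S}f(x)$, showing a random $S$ rarely satisfies $g(S)\le10nt$. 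Your second suggested fix (``folding tag recovery into the accept-set conditioning so that $Q$ can simply be the $n$ keys themselves'') points in the right direction, but without the address-sending twist and the two bad-case arguments the $f(X)\ge T$ branch does not close, so the proposal's central step remains a genuine gap.
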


\section{Upper Bounds for Augmented Retrieval}
\label{sec:augmented_ub}

In this section, we prove \cref{thm:augmented_retrieval_intro}.

\ThmAugmentedRetrieval*

Later on, we will also present two extensions of Theorem \ref{thm:augmented_retrieval_intro}, one that uses explicit random bits (Corollary \ref{col:augmented_retrieval_remove_random}) rather than free randomness, and one that allows for small value-universe sizes $V$ (Corollary \ref{col:augmented_retrieval_small_size}).

The main technical part of the proof is to establish a special case of \cref{thm:augmented_retrieval_intro} where we assume that $V$ is a prime power and that the memory words are elements from a field $\F$ of order $V$.  This special case is formally stated as follows.

\begin{theorem}%
    \label{thm:augmented_retrieval}
    Let $U = n^{1 + O(1)}$, and let $V = n^\gamma$ be a prime power, where $\gamma \ge 3$ is a constant parameter. Let $\F$ be a finite field of order $V$. Let $c>1$ be a constant parameter. There is a static augmented retrieval data structure such that
    \begin{itemize}
    \item The data structure answers value queries for a set of $n-n/c$ keys, where the keys are from a universe $[U]$ and values are from $\F$.
    \item The data structure supports $n/c$ augmented queries, which allow the user to access an array of $n/c$ elements $a_1, \ldots, a_{n/c} \in \F$.
    \item The data structure uses $n$ memory words in $\F$ (i.e., it incurs \emph{no} redundancy), and can answer any retrieval and augmented query in time $\poly(c,\gamma)$. 
    \item The data structure assumes free access to random hash functions and random permutations. Over the randomness of these hash functions, the data structure succeeds with probability $1-O(n/V)$ in its construction process.
    \end{itemize}
\end{theorem}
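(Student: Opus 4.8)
The plan is to recast the whole data structure as a single invertible linear map over $\F$. Identify the $n$ memory words with a vector $M\in\F^n$; a ``linear'' solution is then specified by a decoding matrix $\Phi\in\F^{n\times n}$ whose $i$-th row records which cells the $i$-th query probes and with what $\F$-coefficients, so that query $i$ returns $\langle\Phi_i,M\rangle$. The $n$ ``queries'' are the $n-n/c$ retrieval keys (row $i$ must output the stored value $a_i$) together with the $n/c$ augmented indices (row $j$ must output the array entry $a_j$). If $\Phi$ is invertible we simply store $M:=\Phi^{-1}b$, where $b\in\F^n$ stacks all $n$ values; every query is answered by one inner product, and there is no redundancy because $M$ has exactly $n$ words. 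So it suffices to build a $\Phi$ that is (i) invertible except with probability $O(n/V)$ over the random hashes and (ii) has rows whose supports and coefficients are computable in $\poly(c,\gamma)$ time. Construction time is unconstrained, so once $\Phi$ is fixed we invert it by Gaussian elimination, declaring failure if it is singular.

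The key is the asymmetry between the two query types. The $n-n/c$ retrieval keys are adversarial, so their rows must come from a fixed hash; I would take each such row to be $d$-sparse with uniformly random nonzero coefficients on $d$ hashed cells, for a sparsity $d=\poly(c,\gamma)$ to be chosen. Call the resulting $(n-n/c)\times n$ matrix $\Phi_{\mathrm{adv}}$. The $n/c$ augmented indices, by contrast, form a \emph{fixed, known} set, so we may design their rows \emph{after} seeing $\Phi_{\mathrm{adv}}$: once $\Phi_{\mathrm{adv}}$ has full row rank, choose $n/c$ columns $C$ so that $\Phi_{\mathrm{adv}}$ restricted to the complementary $n-n/c$ columns is a square invertible submatrix, and assign augmented index $j$ the standard-basis row $e_{C_j}$. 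Then $\Phi$, obtained by stacking $\Phi_{\mathrm{adv}}$ on top of the rows $e_{C_1},\dots,e_{C_{n/c}}$, is invertible; retrieval queries cost $d=\poly(c,\gamma)$ probes and each augmented query costs one probe.

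It remains to show $\Phi_{\mathrm{adv}}$ has full row rank except with probability $O(n/V)$. Condition on the hash-determined bipartite support graph between the $n-n/c$ keys and the $n$ cells (each key of degree $d$). First, this graph has a left-perfect matching except with probability $\le n^{-\gamma}$: the only small Hall violators are $k$-subsets of keys with fewer than $k$ common neighbours, whose total probability is at most $\sum_k\binom{n}{k}^2(k/n)^{dk}\le n^{-\gamma}$ once $d$ exceeds a suitable constant depending on $\gamma$, while violators of size $\Omega(n)$ are excluded because the load $1-1/c$ lies below the rank threshold of the $d$-sparse model once $d$ is large enough in terms of $c$, contributing only $e^{-\Omega(n)}$. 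Second, conditioned on a left-perfect matching $\mu$, the determinant of the $(n-n/c)$-square submatrix of $\Phi_{\mathrm{adv}}$ on the matched columns is a nonzero polynomial in the random coefficients -- it contains the monomial $\prod_i(\text{coefficient of edge }\{i,\mu(i)\})$ -- so by Schwartz--Zippel over $\F$ it vanishes with probability $\le (n-n/c)/V\le n/V$. Union-bounding, $\Phi_{\mathrm{adv}}$ has full row rank except with probability $O(n/V)$, which is the claimed construction success probability. Query times are $\poly(c,\gamma)$ since computing a hash costs $O(d)$ and each $\F$-operation costs $\poly(\gamma)$ (a field element spans $\gamma$ machine words).

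The main obstacle is buried in the second paragraph: the decoder for augmented index $j$ must know $C_j$, but $C$ depends on $\Phi_{\mathrm{adv}}$, hence on the \emph{adversarial} key set, which is unavailable at query time -- and with a budget of exactly $n$ words there is no room to record it. A natural fix is to make the basis completion hash-determined rather than adaptive: instead of one global system, partition the cells into blocks and use a locally structured (e.g.\ spatially coupled) $d$-sparse hash, so that the retrieval keys near the ``boundary'' of the block layout spill their mass into a fixed seed region consisting of the $n/c$ augmented cells; those cells then play the role of the free columns completing the basis, with no per-instance bookkeeping, while every other cell still carries one coordinate of the linear solution so the total stays exactly $n$ words. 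Carrying this out -- choosing the block geometry and coupling window so that the reduced system on the non-seed cells is invertible with probability $1-O(n/V)$ -- is the real work, and it is precisely what fixes the number of augmented queries at $\Theta(n)$ (and explains why $o(n)$ of them cannot help). Deducing \cref{thm:augmented_retrieval_intro} afterwards should be routine: round $V$ up to a nearby prime power, fold the rounding into the $O(\log^2 n)$ slack, and rename the total query count to $n$.
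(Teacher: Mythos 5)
Your framing (memory as a vector in $\F^n$, queries as inner products with sparse rows, invertibility via Hall plus Schwartz--Zippel) is the right one, and your treatment of the retrieval rows---$d$-sparse with random $\F$-coefficients, coupon-collector union bound over Hall violators---is essentially the paper's ``coupon entries'' and its ``small case'' analysis. But the obstacle you flag at the end is not a loose end: it is the gap in the proposal. Designing the augmented rows as standard basis vectors $e_{C_j}$, where $C$ is chosen adaptively after seeing $\Phi_{\mathrm{adv}}$, makes the augmented decoder depend on the adversarial key set, and with a budget of exactly $n$ words there is nowhere to record $C$. Without a resolution there is no data structure, only a linear-algebra fact about $\Phi_{\mathrm{adv}}$. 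The spatial-coupling fix you sketch is a genuinely different idea from the paper's, and you do not carry it out; it is far from clear that a coupled $d$-sparse hash with a seed region would make the reduced system invertible with failure probability $O(n/V)$ at exactly $n$ words, and nothing in the paper suggests this route.

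The paper's resolution is simpler and non-adaptive: augmented row $j$ gets, in addition to coupon entries, ``permutation entries''---for each of $\tperm = O(c^2\gamma)$ independent random permutations $p_i$ of $[n/c]$, it activates the entire block of $c$ columns indexed $p_i(j)$. These supports depend only on free randomness, not on the input, so there is no bookkeeping problem, and the $n/c$ augmented rows are then analyzed \emph{jointly} with the $n-n/c$ chosen retrieval rows inside a single Hall/Schwartz--Zippel argument: small row-subsets are covered by coupon entries (your estimate), while row-subsets of size $>n-n/(2c)$ necessarily contain $\Omega(n/c)$ augmented rows, whose permutation entries cover enough columns with high probability via a negative-association concentration bound. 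If you want to salvage your writeup, replace the adaptive $e_{C_j}$ rows with this permutation-block construction and fold the augmented rows into the Hall argument rather than treating $\Phi_{\mathrm{adv}}$'s rank in isolation.
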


We first assume \cref{thm:augmented_retrieval} and show how it implies \cref{thm:augmented_retrieval_intro}.

\begin{proof}[Proof of \cref{thm:augmented_retrieval_intro}]
    Given any $V=n^{3+O(1)}$, we first round $V$ up to the nearest prime $V'$, and use the construction in \cref{thm:augmented_retrieval} while assuming that all values are in $[V']$ instead of $[V]$. The following lemma upper bounds $V'$.

    \begin{lemma}[\cite{qi1992chebychevs}]\label{lem:nextprime}
        For sufficiently large $V$, there exists a prime in $[V, \, V + V^{7/11}]$.
    \end{lemma}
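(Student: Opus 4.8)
The plan is to defer to the theory of primes in short intervals rather than prove anything from scratch: the claim is exactly the assertion that $\pi(V + V^{7/11}) > \pi(V)$ for all sufficiently large $V$, and this (with plenty of room to spare) is a classical result — Huxley's exponent $7/12 + \eps$ already suffices, and Baker--Harman--Pintz push the exponent down to $0.525$. So the first and only essential step is to cite such a result, e.g.\ \cite{qi1992chebychevs}, as a black box. Before doing so, I would record the one piece of bookkeeping that is genuinely relevant on our side: since $V = n^{3 + O(1)}$, rounding $V$ up to the nearest prime $V'$ with $V' - V \le V^{7/11}$ increases the leading space term by only $n\log(V'/V) \le n\log(1 + V^{-4/11}) = O(n\,V^{-4/11}) = O(n^{-1/11}) = o(1)$ bits, so the prime-rounding step is essentially free and in particular does not affect the $O(\log^2 n)$ redundancy promised in \cref{thm:augmented_retrieval_intro}. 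This also shows that the precise exponent is immaterial: any exponent strictly below $2/3$ would do (note $7/11 < 2/3$).

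For readers who want a self-contained argument, the standard route goes through Chebyshev's function $\psi(x) = \sum_{p^k \le x}\ln p$. First, use the explicit formula $\psi(x) = x - \sum_\rho x^\rho/\rho + O(\log^2 x)$, summed over the nontrivial zeros $\rho$ of $\zeta$, and subtract the values at $x = V + V^\theta$ and $x = V$ to obtain $\psi(V + V^\theta) - \psi(V) = V^\theta - \sum_\rho \frac{(V+V^\theta)^\rho - V^\rho}{\rho} + O(\log^2 V)$. Second, bound the sum over zeros: combining the classical zero-free region with a zero-density estimate of the form $N(\sigma, T) \ll T^{c(\sigma)(1-\sigma)}\log^{O(1)} T$ (Huxley's exponent being what forces $\theta > 7/12$), one shows the zero-sum is $o(V^\theta)$. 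Third, conclude $\psi(V + V^\theta) - \psi(V) = (1 - o(1))V^\theta$; since the prime-power (i.e.\ $k \ge 2$) contribution to $\psi$ over such a short interval is $O(V^{\theta/2}\log V) = o(V^\theta)$, there must be an honest prime in $(V, V + V^\theta]$.

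The only real obstacle is that this is analytic number theory sitting entirely outside the combinatorial machinery used everywhere else in the paper — the zero-density input in the second step is itself a substantial theorem — which is precisely why citing it is the right move. The part that does belong in this paper is just the elementary computation above verifying that a prime gap of size $V^{7/11}$ is negligible against a value universe of size $V = n^{3+O(1)}$.
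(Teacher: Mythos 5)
Your proposal is correct and takes the same approach as the paper: the lemma is a black-box citation to the primes-in-short-intervals literature (the paper cites it with no proof of its own), and the bookkeeping you supply — that rounding $V$ up by at most $V^{7/11}$ costs only $n\log(1 + V^{-4/11}) = o(1)$ bits — is exactly the computation the paper carries out immediately afterward in the proof of \cref{thm:augmented_retrieval_intro}. The self-contained sketch via $\psi(x)$, the explicit formula, and zero-density estimates is accurate as a roadmap but is correctly recognized by you as out of scope and not something the paper attempts.
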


    Then, we apply \cref{thm:augmented_retrieval} with $c, \gamma = O(1)$, which gives us a retrieval data structure whose memory cells store field elements in $\F$, such that each query takes constant time to answer.

    Finally, we need to convert the memory model in \cref{thm:augmented_retrieval} into the word RAM model where each memory cell stores an integer in $[2^w]$. For this, we use Theorem 1 of \cite{dodis2010changing}, which states that it is possible to store a sequence of $n$ elements in $[V']$ in the word RAM model with word size $w=\Theta(\log n)$, while supporting read and write operations in constant time, using $n\log V'+O(\log^2 n)$ bits of space. Since $V'\le V+V^{7/11}$ and $V\ge n^3$, the total space used is
    \begin{align*}
        &n\log V' + O(\log^2 n) \\
        ={}&n\log V+n\cdot O(V^{-4/11})+O(\log^2 n) \\
        ={}&n\log V+O(\log^2 n)
    \end{align*}
    bits.
\end{proof}

\subsection{Algorithm Overview}

Now we prove \cref{thm:augmented_retrieval}.
Our algorithm views the memory as a vector $\mathbf v\in \F^n$. Each query is associated with a sparse vector that is uniquely determined by the queried key and the hash functions, and can be computed efficiently. A query is answered by computing the inner product of the memory vector and the sparse vector corresponding to the query. 

For our construction to work, the set of sparse vectors corresponding to the $n$ queries (including both the retrieval and augmented queries) must be linearly independent. This condition ensures that, no matter what value is associated with each query, we can configure the data structure so that each query returns the correct value.

We will construct a matrix with $U+n/c$ rows and $n$ columns, where each row is a sparse vector corresponding to a query. We call the $n/c$ rows corresponding to the augmented queries \defn{augmented rows}, and the $U$ rows corresponding to retrieval queries \defn{retrieval rows}.

The matrix should satisfy the following property: For any fixed set of $n-n/c$ retrieval rows, the $n\times n$ submatrix consisting of both those rows and the augmented rows has full rank with probability $1-O(n/V)$. If this property holds, then for any set of valid queries (including $n - n/c$ retrieval queries and all $n/c$ augmented queries), with probability $1-O(n/V)$, we can find a vector $\mathbf v$ that correctly answers all the queries, which implies \cref{thm:augmented_retrieval}. We now design an algorithm to generate the matrix.

\paragraph{Constructing the matrix.}

Let $\alpha$ be a parameter that will later be set to a sufficiently large constant (but that, for now, will be included in asymptotic notation). For each row, we designate certain entries as \defn{active entries} and we fill in each active entry with a uniformly random element from $\F$ (the finite field of size $V$). All remaining entries are set to $0$. The active entries are selected as follows:

\begin{itemize}
\item Let $\tcoup=\alpha\cdot \max\{c,\gamma\}$. For each row (both augmented and retrieval rows), we randomly select $\tcoup$ entries with replacement and activate them. We call these \defn{coupon entries}.
\item For augmented rows, we designate additional active entries. Let $\tperm=\alpha\cdot c^2\cdot\gamma$. We partition the columns into $n/c$ blocks of $c$ columns each. We then generate $\tperm$ permutations $\{p_i\}_{i\le \tperm}$ over $[n/c]$, where each permutation bijectively maps the $n/c$ augmented queries to the $n/c$ blocks. For each augmented row with index $j \in [n/c]$, we activate all entries in the blocks $p_i(j)$ $(1 \le i \le \tperm)$. We call these \defn{permutation entries}.
\end{itemize}
If an entry is activated in both steps, it is considered both a coupon entry and a permutation entry.

\paragraph{Why augmented queries help.} Before continuing, it is worth taking a moment to briefly motivate the role of augmented queries in the construction, and, in particular, how the augmented queries let us avoid the lower bound (Theorem \ref{thm:non_augmented_lb}) that holds for the (non-augmented) retrieval problem.

In the proof of Theorem \ref{thm:non_augmented_lb}, a key insight was the following. Consider any retrieval data structure $D$ with, say, $O(1)$-time queries; and consider a random set $S \subseteq [U]$ of $n$ keys. The probability that the queries to the keys $S$ manage to probe every cell in $D$ (or even a $(1 - o(1))$ fraction of the cells) is very small. Therefore, \emph{almost all sets} $S$ of keys are incapable of using the data structure $D$ effectively. This allowed us in the proof of Theorem \ref{thm:non_augmented_lb} to obtain a lower bound by either compressing $D$ (if $S$ fails to probe more than $(1 - o(1))$ fraction of the cells) or by compressing $S$ (if $S$ is one of the very few sets that manages to probe a $(1 - o(1))$ fraction of the cells in $D$). 

Now, let us consider how this situation changes if we replace $n/c$ of the queries in $S$ with augmented queries, letting a user query a number $i \in [n/c]$ to recover a value $a_i$. Because the augmented queries are on a fixed set of keys $[n/c]$, it is easy to implement them so that, collectively, they probe every cell in $D$. (Intuitively, this is the reason that augmented queries include permutation entries in addition to coupon entries.) This allows us to design a data structure in which the retrieval queries extract useful information from the cells that they probe (even though those cells represent a $1 - \Omega(1)$ fraction of the cells in $D$), while the augmented queries extract useful information from the remaining cells.

Perhaps surprisingly, the inclusion of the permutation entries in the above construction is all we need to implement this idea. With these entries in place, we can show that the resulting $n \times n$ matrix is (with good probability) full rank, and, in doing so, bypass the lower bound that holds for retrieval queries. 

\subsection{Analysis of the Matrix}

Now we prove that the construction above has the desired property.

\begin{lemma}
\label{lem:perm_construction_full_rank}
The construction satisfies the following property:
For any fixed set of $n-n/c$ retrieval queries, the $n \times n$ matrix formed by the rows corresponding both to these retrieval queries and to all the augmented queries has full rank with probability at least $1-O(n/V)$.
\end{lemma}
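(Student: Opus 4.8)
The plan is to decouple the two independent sources of randomness in the construction. First fix the \emph{positions} of all active entries --- i.e.\ the coupon samples of every row together with the permutations $p_1,\dots,p_{\tperm}$ --- and only afterwards reveal the \emph{values}, which are i.i.d.\ uniform over $\F$. Conditioned on the positions, the determinant of the $n\times n$ matrix is a polynomial of total degree at most $n$ in the active-entry values, and since distinct active positions carry distinct independent values, the monomials of its symbolic expansion never cancel; hence that polynomial is identically zero if and only if the bipartite \emph{support graph} $H$ (one side the $n$ rows, the other the $n$ columns, an edge for each active entry) has \emph{no} perfect matching. By the Schwartz--Zippel lemma,
\[
\Pr[\text{matrix is not full rank}]\ \le\ \Pr[H\text{ has no perfect matching}]\ +\ \tfrac nV,
\]
so it remains to prove $\Pr[H\text{ has no perfect matching}]=O(n/V)$.

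For this I would invoke Hall's theorem: if $H$ has no perfect matching then some vertex set $T$ --- on the row side, or by symmetry the column side --- has $|N(T)|<|T|$, and one may further assume $|T|\le n/2$ by the standard trick of replacing a large row-violator $T$ by the set of columns outside $N(T)$. It then suffices to union-bound, over all such $T$ of size $k\le n/2$ and all candidate neighborhoods $C$ of size $k-1$, the probability that $N(T)\subseteq C$. For \emph{row} witnesses this is routine: a retrieval row has all $\tcoup$ of its coupons land inside $C$ with probability $(|C|/n)^{\tcoup}\le(k/n)^{\tcoup}$, and an augmented row is only \emph{more} constrained (each of its $\tperm$ permutation blocks must also lie in $C$), so $\Pr[N(T)\subseteq C]\le (k/n)^{\tcoup k}$ and $\binom nk\binom n{k-1}(k/n)^{\tcoup k}\le\bigl(e^2(k/n)^{\tcoup-2}\bigr)^k$, which --- since $\tcoup=\alpha\max\{c,\gamma\}$ is a large constant --- sums over $1\le k\le n/2$ to $O(n^{-\Omega(\tcoup)})=O(n/V)$.

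The substantive case is \emph{column} witnesses, and this is exactly where the permutation entries earn their keep. Because each $p_i$ is a bijection from the augmented rows onto the blocks, a column set $T$ spanning $\beta$ blocks automatically has at least $\beta\ge\lceil k/c\rceil$ augmented neighbors (from $p_1$ alone), while over all $\tperm$ permutations the expected number of augmented neighbors is $\approx(n/c)\bigl(1-(1-c\beta/n)^{\tperm}\bigr)$; meanwhile the retrieval coupons contribute $\approx(n-n/c)\bigl(1-(1-k/n)^{\tcoup}\bigr)$ more neighbors in expectation. With $\tcoup$ and $\tperm$ large enough constants relative to $c$ and $\gamma$, these guaranteed and typical contributions exceed $k$ for every $k\le n/2$, so $N(T)\subseteq C$ forces either a permutation pattern whose probability is $\le(eck/n)^{\beta\tperm}$ (ruled out after union-bounding over $\beta$, over the $\beta$ spanned blocks, over the $k$ columns inside them, and over $C$) or an atypical concentration of the random coupons (ruled out by a Chernoff bound). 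Summing these terms and combining with the Schwartz--Zippel term yields the lemma.

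The step I expect to be the main obstacle is precisely the column-witness analysis in the \emph{dense} regime $k=\Theta(n)$: the matrix is square, i.e.\ we sit at ``load factor exactly one,'' where the naive union bound over all $\binom nk$ witnesses is too lossy and ordinary peeling/$2$-core arguments fail outright. Making it work requires genuinely exploiting that each $p_i$ is itself a perfect matching between the augmented rows and the blocks, so that every linear-sized column set is forced to have a linear-sized neighborhood, and arranging the counting so that $\tcoup$ and $\tperm$ need only be constants rather than $\Theta(\log n)$. A plausible alternative route for this regime is to first Schur-complement out the augmented block --- restricted to one column per block it is diagonal up to reordering with random nonzero entries, hence invertible with probability $1-O(n/V)$ --- reducing full rank of the whole matrix to invertibility of an $(n-n/c)\times(n-n/c)$ matrix whose residual structure one then analyzes. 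Everything else in the argument is bookkeeping.
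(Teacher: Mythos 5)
Your reduction is the same as the paper's: view the determinant as a polynomial in the formal variables placed at active positions, observe that distinct positions carry distinct formal variables so the determinant is nonzero as a polynomial iff the bipartite support graph has a perfect matching, apply Schwartz--Zippel to convert to a probability over the random field elements, and then verify Hall's condition. So far, so good.

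The genuine gap is in the verification of Hall's condition, and it is exactly the part you flag as ``the main obstacle.'' You split violators into row-side and column-side witnesses of size at most $n/2$ and handle the row side correctly (this is the analogue of the paper's ``small case,'' where a union bound over pairs $(S,T)$ and the bound $(k/n)^{\tcoup k}$ works because $\tcoup$ dwarfs the $\binom nk^2$ entropy). But for the column witnesses you only offer a heuristic: you note that a size-$k$ column set spans $\beta\ge\lceil k/c\rceil$ blocks, giving at least $\beta$ augmented neighbors per permutation and $\approx(kc/n)^{\beta\tperm}$ as the probability they all land in a fixed $C$ of size $k-1$. This already breaks down once $k>n/c$, where $kc/n>1$ and that bound is vacuous; you then gesture at adding the retrieval-coupon neighborhood via a Chernoff bound, but the two contributions must be combined carefully against a union bound that is $\binom nk\binom n{k-1}$ in this regime, and you do not carry this out. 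You correctly identify this as the hard case --- but identifying the hard case is not the same as closing it.

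The paper avoids this entirely by organizing the case split differently. It never considers column subsets; it union-bounds over \emph{row} subsets $S$ only, and splits at $|S|=n-n/(2c)$. For $|S|\le n-n/(2c)$, the coupon entries alone suffice, exactly as in your row-witness argument but pushed all the way to $n-n/(2c)$ rather than stopping at $n/2$. For $|S|>n-n/(2c)$, it discards the retrieval rows entirely and argues purely about the $\ge n/(2c)$ augmented rows in $S$: letting $X_i$ indicate that block $i$ is missed by all permutations restricted to those rows, it shows the $X_i$ are negatively associated (permutation rule, union rule, monotone rule) and applies a Chernoff-type tail for NA variables to bound $\Pr[\sum X_i>t/c]$, which then union-bounds cleanly over the choice of missing augmented rows. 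This sidesteps the ``load factor one'' issue you worry about, because the counting for large $S$ is only over which augmented rows are \emph{excluded} --- at most $\binom{n/c}{t}$ with $t\le n/(2c)$ --- rather than over all $\binom nk$ column sets, and the permutation structure gives a concentration bound that degrades gracefully in exactly this parameterization. Your Schur-complement alternative is also left as a suggestion rather than a proof. To repair your argument you would either need to complete the column-witness counting in the regime $n/c\lesssim k\lesssim n/2$ (nontrivial, and likely to rediscover something like the paper's NA argument), or reorganize to the paper's row-only case split.
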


We prove \cref{lem:perm_construction_full_rank} by showing that the matrix formed by both the $n-n/c$ retrieval rows and the $n/c$ augmented rows has a non-zero determinant with probability $1-O(n/V)$. This is accomplished by viewing the determinant as a polynomial. When generating the matrix, if the algorithm places a formal variable $x_{i,j}$ at each active entry $(i,j)$ instead of a random element from $\F$, then the determinant becomes a polynomial of degree $n$. The actual determinant is an evaluation of this polynomial when all variables are assigned random values. According to the Schwartz-Zippel lemma (e.g., \cite[Theorem 7.2]{motwani1995randomized}), a non-zero multivariate polynomial $P$ of degree $n$ over the field $\F$ evaluates to zero with probability at most $n/V$ under random assignments. Therefore, to prove \cref{lem:perm_construction_full_rank}, it suffices to show that the formal polynomial is non-zero with high probability.

\begin{claim}\label{claim:det_non_zero_poly}
Suppose that in the algorithm above, a formal variable $x_{i,j}$ is placed at each active entry instead of a random value from $\F$. Then, for any set of $n-n/c$ retrieval queries, the determinant of the matrix formed by these retrieval rows and all the augmented rows is a non-zero polynomial with probability $1-O(1/V)$.
\end{claim}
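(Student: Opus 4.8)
The first move is to reduce the claim to a combinatorial matching statement. Place a distinct formal variable $x_{i,j}$ at each active entry and $0$ elsewhere, and let $G$ be the bipartite ``activity graph'' whose left vertices are the $n$ chosen rows (the fixed set of $n-n/c$ retrieval rows together with all $n/c$ augmented rows) and whose right vertices are the $n$ columns, with an edge whenever that entry is active. In the Leibniz expansion $\det = \sum_{\sigma}\operatorname{sgn}(\sigma)\prod_i M_{i,\sigma(i)}$, a term survives iff $(i,\sigma(i))$ is active for every $i$, and distinct $\sigma$ give distinct monomials (the multiset $\{(i,\sigma(i))\}_i$ determines $\sigma$), so no cancellation can occur. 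Hence the formal determinant is a nonzero polynomial \emph{if and only if} $G$ has a perfect matching. By Hall's theorem for balanced bipartite graphs, $G$ fails to have a perfect matching iff there is a set $C$ of columns with $|N_G(C)| < |C|$; I would union-bound over such $C$, splitting on the size of $C$.

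\textbf{Small and medium $C$ ($|C|\le n/c$).} Here I would use $|N_G(C)| \ge |N^{\mathrm{ret}}(C)| + |N^{\mathrm{perm}}(C)|$, where $N^{\mathrm{ret}}(C)$ is the set of \emph{retrieval} rows with a coupon entry in $C$ and $N^{\mathrm{perm}}(C)$ is the set of \emph{augmented} rows one of whose permutation blocks meets $C$; these sets are disjoint since they involve different row types, so the counts add. Writing $\mathcal B(C)$ for the blocks meeting $C$, bijectivity of $p_1$ already gives $|N^{\mathrm{perm}}(C)| = \bigl|\bigcup_{i\le\tperm} p_i^{-1}(\mathcal B(C))\bigr| \ge |\mathcal B(C)|$, which settles the ``spread'' case $|\mathcal B(C)|\ge|C|$ (at most one column per block) deterministically. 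When $|\mathcal B(C)| < |C|$ and $|C|$ is below a small constant fraction of $n$, I would bound $\Pr[\,|N^{\mathrm{perm}}(C)| < |C|\,]$ by $\binom{n/c}{|C|-1}\bigl(\binom{|C|-1}{|\mathcal B(C)|}/\binom{n/c}{|\mathcal B(C)|}\bigr)^{\tperm} \le \binom{n/c}{|C|-1}\bigl((|C|-1)/(n/c)\bigr)^{|\mathcal B(C)|\,\tperm}$ (all $\tperm$ preimages of $\mathcal B(C)$ falling into one fixed small set), then sum over $C$ grouped by the pair $(|\mathcal B(C)|, |C|)$; using $|C|\le c\,|\mathcal B(C)|$, the net exponent of $n$ comes out as $-\Omega(\tperm - c)$ per block, which is $\le -\gamma$ once $\tperm$ is a sufficiently large constant. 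When instead $|C|$ is a constant fraction of $n$, both $|N^{\mathrm{ret}}(C)|$ and $|N^{\mathrm{perm}}(C)|$ are tightly concentrated (around $(n-n/c)\bigl(1-(1-|C|/n)^{\tcoup}\bigr)$ and $(n/c)\bigl(1-(1-|\mathcal B(C)|/(n/c))^{\tperm}\bigr)$), and since $c>1$ and $|\mathcal B(C)| \ge |C|/c = \Omega(n)$ their sum exceeds $|C|$ by an $\Omega(n)$ margin for $\tcoup,\tperm$ large; the deviation probabilities are $e^{-\Omega(n\min(\tcoup,\tperm))}$, comfortably beating the $2^{O(n)}$ union bound over sets $C$.

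\textbf{Large $C$ ($|C| > n/c$).} Here I would discard the permutation contribution and use $|N_G(C)| \ge n - Z_C$, where $Z_C$ is the number of rows all of whose $\tcoup$ coupons miss $C$; since coupon draws are independent across the $n$ rows, $Z_C \sim \mathrm{Bin}\bigl(n, ((n-|C|)/n)^{\tcoup}\bigr)$, and a Hall violation at $C$ means $Z_C > n - |C|$. For $n/c < |C| \le n/2$, the mean of $Z_C$ is at most $n\bigl((c-1)/c\bigr)^{\tcoup}$, a tiny constant times $n$ for $\tcoup$ large, while $n-|C|\ge n/2$, so $\Pr[Z_C \ge n/2] \le \binom{n}{n/2}\bigl((c-1)/c\bigr)^{\tcoup n/2}$ is exponentially small and survives the $2^n$ union bound. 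For $n/2 < |C| < n$, setting $s = n-|C|$ we have $\Pr[Z_C \ge s+1] \le \binom{n}{s+1}(s/n)^{\tcoup(s+1)}$, and combined with the $\binom{n}{s}$ choices of $C$ this is at most $\bigl(e^2 (s/n)^{\tcoup-2}\bigr)^{s+1}$, whose sum over $s\ge 1$ is dominated by the $s=1$ term $O(n^{3-2\tcoup}) = O(n^{-\gamma})$ once $\tcoup \ge (\gamma+3)/2$; and $|C|=n$ never violates Hall. Adding the contributions of all cases gives total failure probability $O(n^{-\gamma}) = O(1/V)$.

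\textbf{Main obstacle.} I expect the delicate point to be the medium regime $|C| = \Theta(n)$, especially $|C|$ close to $n/c$ with most blocks touched: it is the unique range where neither the permutation ``collision'' bound (which needs $|C|$ sub-linear, so that $\binom{n/c}{|C|-1}$ does not overwhelm the probability) nor the coupon concentration (which needs $|C| = \Omega(n)$) works in isolation, forcing one to combine both contributions additively and verify that the resulting slack is genuinely $\Omega(n)$ with concentration strong enough to beat a $2^{O(n)}$ union bound. This is precisely what pins down how large $\tperm$ must be relative to $c$ (hence the $c^2$ factor in $\tperm = \alpha c^2\gamma$) and $\tcoup$ relative to $\gamma$. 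A secondary nuisance is the bookkeeping in the small-$|C|$ permutation bound: one must sum over all $(|\mathcal B(C)|, |C|)$ and check the net power of $n$ stays at $-\gamma$ or below uniformly, which is where the requirement ``$\alpha$ a sufficiently large constant'' gets used.
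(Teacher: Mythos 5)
Your reduction of \cref{claim:det_non_zero_poly} to a Hall-type matching condition via the Leibniz expansion and the no-cancellation observation is exactly the paper's reduction to \cref{lem:perm_construction_span}. From there you take the dual route: the paper union-bounds over \emph{row} subsets $S$, splitting at $|S|\le n-n/(2c)$, with the crucial feature that each case invokes only one type of active entry (small $S$ via coupon entries alone, through the clean bound $\sum_s\binom ns^2(s/n)^{\tcoup s}$; large $S$ via permutation entries alone, through \cref{clm:perm_cover_large}), so neither case ever needs to add two contributions. You union-bound over \emph{column} subsets $C$ and split at $n/c$, which creates a medium regime $|C|=\Theta(n)$ where you must combine $|N^{\mathrm{ret}}(C)|$ and $|N^{\mathrm{perm}}(C)|$ with a concentration argument, and that step is the problem.

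The claim that the deviation probability in the medium regime is $e^{-\Omega(n\min(\tcoup,\tperm))}$ is not correct: $|N^{\mathrm{ret}}(C)|$ is a sum of $n-n/c$ Bernoulli variables, and a constant-relative-factor downward deviation costs only $e^{-\Theta(n)}$ with a constant that does \emph{not} scale linearly in $\tcoup$. So beating the $\binom{n}{|C|}$ union bound is a fight between constants that you have not checked, and it looks unwinnable for $c$ close to $1$: at $|C|\approx n/c$ the retrieval mean is at most $n-n/c < |C|$, the permutation contribution is genuinely needed, the resulting margin is $\Theta\big((1-1/c)n\big)$, and the Chernoff exponent $\Theta\big((1-1/c)^2 n\big)$ it buys is dominated by the union-bound entropy $\Theta\big((c-1)\log\tfrac{1}{c-1}\big)n$ as $c\to1^+$. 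You also implicitly treat the summands of $|N^{\mathrm{perm}}(C)|$ as independent; they are not (they share the random permutations), which is precisely why the paper invokes negative association via \cref{clm:NA_concentration}. Your small-$|C|$ permutation-collision bound and your large-$|C|$ coupon bound are sound in outline (modulo an arithmetic slip: combining $\binom ns$ choices of $C$ with $\binom n{s+1}$ choices of rows gives $n^{4-2\tcoup}$ at $s=1$, not $n^{3-2\tcoup}$, which is harmless), but the medium regime is a genuine gap, and the paper's row-centric split at $n-n/(2c)$ is engineered precisely so that no such combined regime arises.
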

Proving \cref{claim:det_non_zero_poly} is equivalent to showing that there exists some degree-$n$ monomial of the form $\prod_{i = 1}^n x_{i,\sigma_i}$, where $(\sigma_1, \ldots, \sigma_n)$ is a permutation, and such that the monomial has a non-zero coefficient (i.e., all of the entries $(i, \sigma_i)$ are active). Since each such a monomial corresponds to a matching between rows and columns, using Hall's theorem, we can reduce \cref{claim:det_non_zero_poly} to the following lemma.

\begin{lemma}\label{lem:perm_construction_span}
Let $S$ be a subset of rows, and let $C_S$ denote the random variable counting the number of columns that contain at least one active entry from rows in $S$.
For any fixed set of $n-n/c$ retrieval rows, the matrix formed by both these retrieval rows and all the augmented rows satisfies the following with probability $1-O(1/V)$: for any subset $S$ of rows in this matrix, $C_S \geq |S|$.
\end{lemma}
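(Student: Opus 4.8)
The plan is to prove the expansion condition $C_S \ge |S|$ for all row-subsets $S$ simultaneously, by a union bound over the "bad" subsets. The first observation is that the condition can only fail for $S$ that are, in a suitable sense, small: since each augmented row alone activates an entire block of $c$ columns (actually $\tperm$ blocks), and since there are enough coupon entries, a set $S$ that contains many augmented rows will automatically cover at least $|S|$ columns. So I would first dispose of the case where $S$ contains $\ge n/(2c)$ (say) augmented rows by a direct counting argument using only the permutation entries: the $\tperm$ permutations map the augmented rows in $S$ onto $\tperm \cdot |S_{\mathrm{aug}}|$ block-slots (with multiplicity), and as long as $\tperm$ is a large enough constant, the probability that these fail to cover $|S|$ distinct columns is tiny — this is a coupon-collector/occupancy estimate. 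The remaining, harder case is $S$ with few augmented rows, so that $S$ is dominated by retrieval rows; here the permutation entries of the augmented rows in $S$ are of limited help, and we must rely on the coupon entries.

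For the main case, fix $S$ with $|S| = k$ and suppose for contradiction that the $k$ rows of $S$ touch only a set $T$ of fewer than $k$ columns. Every coupon entry of every row in $S$ must then land in $T$; each row has $\tcoup$ coupon entries chosen i.i.d.\ uniformly from the $n$ columns, so the probability that a fixed row's coupon entries all fall in $T$ is $(|T|/n)^{\tcoup} \le (k/n)^{\tcoup}$. For the permutation entries of augmented rows in $S$, we similarly need all the blocks $p_i(j)$ (for $j$ an augmented row in $S$) to lie inside $T$; since $T$ can contain at most $|T|/c \le k/c$ full blocks — wait, more carefully, $T$ need not be block-aligned, but a block is covered only if all $c$ of its columns lie in $T$, so at most $k/c$ blocks are "inside $T$", and each permutation independently maps $j$ to a uniformly random block, so the probability is at most $(k/(n/c))^{\tperm \cdot |S_{\mathrm{aug}}|}$ — but in this case $|S_{\mathrm{aug}}|$ is small, so I will bound this contribution crudely by $1$ and lean on the coupon entries alone. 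Then I union-bound over the choice of $S$ (there are at most $\binom{U + n/c}{k}$ choices, dominated by $\binom{U}{k} \le (eU/k)^k$ since retrieval rows dominate) and over the choice of $T$ (at most $\binom{n}{k} \le (en/k)^k$ choices). The total bad probability for a given $k$ is then at most
\[
\left(\frac{eU}{k}\right)^k \left(\frac{en}{k}\right)^k \left(\frac{k}{n}\right)^{\tcoup k}
= \left( \frac{e^2 U}{k} \cdot \left(\frac{k}{n}\right)^{\tcoup - 1} \right)^k.
\]
Using $U = n^{1+O(1)} = \poly(n)$ and $\tcoup = \alpha \max\{c,\gamma\}$ chosen large enough that $\tcoup - 1$ exceeds the exponent in $U = n^{O(1)}$ by a constant margin, the base is at most $n^{-\Omega(1)}$ for every $k \ge 1$ (the worst case being $k$ close to $n$, where the $(k/n)^{\tcoup-1}$ factor is $\Theta(1)$ but then $U/k \le U/\Theta(n) = n^{O(1)}$, still killed by a constant power of $n$ from increasing $\alpha$; and for small $k$ the factor $(k/n)^{\tcoup-1}$ is tiny). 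Summing the geometric-like series over $k \in \{1, \ldots, n\}$ gives total failure probability $O(1/V)$ after choosing $\alpha$ (hence $\tcoup$) large enough relative to $\gamma$, since $V = n^\gamma$.

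The main obstacle I anticipate is the \emph{large-$k$ regime} of the union bound: when $k = \Theta(n)$ the entropy terms $\binom{U}{k}\binom{n}{k}$ are as large as $n^{\Theta(n)}$, and the coupon-collision probability $(k/n)^{\tcoup k}$ is only $2^{-\Theta(k)}$ per unit of $\tcoup$, so we genuinely need $\tcoup$ to be a sufficiently large constant depending on the exponent hidden in $U = n^{O(1)}$ (and on $\gamma$, to get the $1/V = n^{-\gamma}$ slack) — this is exactly why the construction sets $\tcoup = \alpha \max\{c,\gamma\}$ with $\alpha$ a large constant. A secondary subtlety is handling subsets $S$ with an intermediate number of augmented rows, where neither the "many augmented rows" argument nor the "ignore the augmented permutation entries" argument is individually clean; I would handle this by always including the augmented-permutation contribution as an extra multiplicative factor $\le 1$ and verifying that the coupon bound alone already suffices, so the case split on $|S_{\mathrm{aug}}|$ is only needed to know \emph{which} of the two bounds to invoke, not to patch a gap. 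Finally, I should double-check that Hall's theorem is being applied in the right direction — we want a perfect matching saturating the \emph{columns} (equivalently a system of distinct representatives for the columns among the active rows), which is exactly the condition "every set of columns $S'$ has $\ge |S'|$ active rows"; by symmetry of Hall's condition for the existence of a perfect matching in a balanced bipartite graph, verifying expansion from the row side as stated in the lemma is equivalent, so no issue arises there.
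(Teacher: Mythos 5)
Your case split (on $|S_{\mathrm{aug}}|$ rather than $|S|$) is morally equivalent to the paper's, and your large‑case sketch matches the paper's use of the permutation entries. However, your small case has a genuine gap that is not a mere technicality: you union‑bound over $\binom{U+n/c}{k} \approx \binom{U}{k}$ choices of the row subset $S$. The lemma explicitly fixes the set of $n-n/c$ retrieval rows in advance, so $S$ ranges only over subsets of the $n$ rows of that fixed $n\times n$ submatrix; the correct count is $\binom{n}{k}$, not $\binom{U}{k}$. This matters because your bound per level,
\[
\Bigl(\tfrac{eU}{k}\Bigr)^{k}\Bigl(\tfrac{en}{k}\Bigr)^{k}\Bigl(\tfrac{k}{n}\Bigr)^{\tcoup k}
=\Bigl(e^{2}\tfrac{U}{k}\bigl(\tfrac{k}{n}\bigr)^{\tcoup-1}\Bigr)^{k},
\]
has base $\Theta\bigl((U/n)\cdot(1-\tfrac{1}{2c})^{\tcoup-1}\bigr)$ when $k=\Theta(n)$. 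With $U=n^{1+\Omega(1)}$ the factor $U/n$ is polynomially large in $n$, while $(1-\tfrac{1}{2c})^{\tcoup-1}$ is a constant (independent of $n$) no matter how large you take $\alpha$. So the base is $n^{+\Omega(1)}$, not $n^{-\Omega(1)}$, and the series blows up near $k=n-n/(2c)$. Your remark that this can be ``killed by a constant power of $n$ from increasing $\alpha$'' is where the argument breaks: increasing $\alpha$ shrinks a constant, not a power of $n$.

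The fix is precisely the $\binom{n}{k}$ count, which the paper uses. Then the bound becomes $\binom{n}{k}^{2}(k/n)^{\tcoup k}\le (en/k)^{2k}(k/n)^{\tcoup k}$, and the $(n/k)$ factors from the two binomials are of the same scale as $(k/n)^{-1}$ in the coupon term, so they cancel cleanly. Concretely, for $k\le n-n/(2c)$ one has $n/k\ge 1+\tfrac{1}{2c}$, which gives $(en/k)^{2}\le (n/k)^{16c+2}$, and the per‑level bound collapses to $(k/n)^{(\tcoup-16c-2)k}$. Taking $\tcoup=\alpha\max\{c,\gamma\}$ large then yields a bound of $O(1/V)$ after summing over $k$. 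Your observation about Hall's theorem is correct, and your intuition about coupon versus permutation entries is also correct; the only, but essential, error is in what $S$ may range over.
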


In the remaining part of this subsection, we fix a set of retrieval rows and prove \cref{lem:perm_construction_span} by a union bound over the failure probability of all row subsets $S$.

\paragraph{Overview.}
The two types of active entries are designed to span the columns for different sizes of row subsets. Coupon entries are present in every row, and by themselves are sufficient to cover enough columns for any small set; however, as per the coupon collection problem, they cannot cover enough columns when $|S|\geq n-o(n)$. The permutation entries occur only in augmented rows and can cover all columns due to the properties of permutations. For cases where $|S|$ is very close to $n$, we will show that the permutation entries in $S$ can cover many columns.
In the following, we bound the failure probability by discussing two cases: the small case ($|S|\leq n-n/(2c)$) and the large case ($|S| > n-n/(2c)$).

\paragraph{Small case: Coupon entries.} We first bound the failure probability for all subsets containing $s\leq n-n/(2c)$ rows. For such a subset, these rows jointly sample $s\cdot \tcoup$ random coupon entries. Intuitively, these coupon entries cover $s$ columns with high probability since $s$ is not too close to $n$.

In the small case, a violation of the constraint $C_S \geq |S|$ is witnessed by a subset $S$ of $1\le s\le n-n/(2c)$ rows and a subset $T$ of $s$ columns, such that all coupon entries sampled by these $s$ rows fall within these $s$ columns. We apply a union bound over all possible pairs $(S, T)$ to show that with high probability, all row subsets succeed. For a given pair of sets $(S,T)$, the probability of it becoming a witness is $(s/n)^{\tcoup\cdot s}$. Therefore, the probability that any small set fails is bounded by
\begin{align*}
&\sum_{s=1}^{n-n/(2c)} \binom ns^2 (s/n)^{\tcoup\cdot s}\\
\leq{}&\sum_{s=1}^{n-n/(2c)} (en/s)^{2s} (s/n)^{\tcoup\cdot s}\tag{$\binom ns\leq (en/s)^s$}\\
\leq{}&\sum_{s=1}^{n-n/(2c)} (n/s)^{(16c+2)s} (s/n)^{\tcoup\cdot s}\tag{$n/s\geq 1+1/(2c)$}\\
\leq{}&\sum_{s=1}^{n-n/(2c)}(s/n)^{\Omega(\alpha\cdot \gamma\cdot s)}\tag{$\tcoup=\alpha\cdot\max\{c,\gamma\}$} \\
\leq{}&n\cdot\max\BK*{(1/n)^{\Omega(\alpha\cdot \gamma)}, (1-1/(2c))^{\Omega(\alpha\cdot \gamma\cdot n)}}\\
\leq{}&1/n^{\Omega(\alpha\cdot \gamma)}=O(1/V),
\end{align*}
where the third line holds because $n/s \ge 1 + 1/(2c)$ implies $(n/s)^{16c+2} \ge (n/s)^2 \cdot (1 + 1/(2c))^{16c} \ge (n/s)^2 \cdot e^2$; 
the second-to-last line holds because $\log \bk*{(s/n)^s}=s\log s-s\log n$ is convex in $s$, so the function $(s/n)^s$ reaches its maximum at either $s=1$ or $s=n-n/(2c)$; the last line holds as we set $\alpha$ to a sufficiently large constant.

\paragraph{Large case: Permutation entries.} When $s > n - n/(2c)$, at least $s - (n - n/c) > n / (2c)$ augmented rows are in the set. The following claim shows that, by the permutation construction, the permutation entries of these augmented rows cover at least $s$ columns with high probability.

\begin{claim}\label{clm:perm_cover_large}
With probability $1-O(1/V)$, for every subset $S$ of at least $n/(2c)$ augmented rows, $C_S\ge n-n/c+|S|$.
\end{claim}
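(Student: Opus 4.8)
The plan is to discard the coupon entries entirely—they only help—and prove \cref{clm:perm_cover_large} purely from the permutation structure, via a two-level union bound over ``bad witnesses.''

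\emph{Reduction to blocks.} Write $m \defeq n/c$ for the number of column-blocks, which equals the number of augmented rows. Fix a subset $S$ of augmented rows, let $s = |S|$, and let $k = m - s$ count the augmented rows \emph{outside} $S$; the hypothesis $s \ge n/(2c)$ says exactly $k \le m/2$. Since every column of a block lying in some image $p_i(S)$ carries a permutation entry from some row of $S$, we have $C_S \ge c\,|B_S|$, where $B_S \defeq \bigcup_{i \le \tperm} p_i(S) \subseteq [m]$ is the set of blocks hit by $S$ and $W_S \defeq [m]\setminus B_S$ is the set of missed blocks. A one-line computation ($c|B_S| = cm - c|W_S| = n - c|W_S|$) shows that $C_S \ge n - n/c + s$ is implied by $c\,|W_S| \le k$, i.e.\ by $|W_S| \le \lfloor k/c\rfloor$. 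So it suffices to prove that, with probability $1 - O(1/V)$, \emph{every} augmented-row subset $S$ misses at most $\lfloor k/c\rfloor$ blocks, where $k = m - |S| \le m/2$.

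\emph{The per-$k$ estimate.} Fix $k$ and set $w \defeq \lfloor k/c\rfloor + 1$. If $S$ misses more than $\lfloor k/c\rfloor$ blocks, then by integrality it misses at least $w$ of them, so there is a $w$-subset $W$ of blocks with $p_i^{-1}(W) \subseteq \bar S$ for every $i \le \tperm$, where $\bar S = [m]\setminus S$ has size $k$. For a single uniformly random permutation, $p_i^{-1}(W)$ is a uniformly random $w$-subset of $[m]$, so $\Pr[p_i^{-1}(W)\subseteq \bar S] = \binom{k}{w}/\binom{m}{w} \le (k/m)^w$; by independence of the $\tperm$ permutations, $\Pr[W \subseteq W_S] \le (k/m)^{w\tperm}$. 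A union bound over the $\binom{m}{k}$ choices of $\bar S$ and the $\binom{m}{w}$ choices of $W$ gives
\[
\Pr\!\left[\exists\, S:\ m-|S|=k,\ |W_S| > \lfloor k/c\rfloor\right] \;\le\; \binom mk \binom mw \left(\frac km\right)^{w\tperm}.
\]

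\emph{Summing over $k$.} Bounding $\binom mk \le (em/k)^k$ and $\binom mw \le (em/w)^w$, and using $w = \Theta(k/c)$ together with $\tperm = \alpha c^2\gamma$ for a large constant $\alpha$, the factor $(k/m)^{w\tperm}$ dominates: its exponent $w\tperm$ is an arbitrarily large constant multiple of $k+w$, so up to a $c^{O(k)}$ factor the whole bound is at most $(k/m)^{\Omega(\alpha c\gamma)\cdot k}$, hence at most $m^{-\Omega(\alpha\gamma)} = n^{-\Omega(\alpha\gamma)}$ for every $1 \le k \le m/2$ (the extreme cases being $k=1$, where the bound is $\approx m^{\,2-\tperm}$, and $k=\lfloor m/2\rfloor$, where $(k/m)^{w\tperm} = 2^{-\Omega(\alpha\gamma m/c)}$ is exponentially small in $n$); for $k=0$ the probability is $0$. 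Summing the at most $m$ terms keeps the total failure probability at $n^{-\Omega(\alpha\gamma)} = O(1/V)$ once $\alpha$ is chosen large enough relative to $\gamma$, which proves the claim.

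\emph{Main obstacle.} The delicate regime is $|S|$ \emph{very} close to $n/c$ (say $k < c$), where the claim demands that $S$ hit \emph{every} block; a naive union bound over all $n/c$ blocks fails there, since the per-block miss probability $(k/m)^{\tperm}$ is only polynomially small while there are $\Theta(n)$ blocks. The resolution is exactly the bookkeeping above: when $k$ is small there are only $\binom mk = m^{O(1)}$ candidate sets $\bar S$, while the exponent $\tperm$ on that polynomially-small per-block quantity can be pushed above any target by enlarging $\alpha$. Getting the three-factor bound $\binom mk \binom mw (k/m)^{w\tperm}$ to decay uniformly across all $k \le m/2$ simultaneously is the technical heart of the argument.
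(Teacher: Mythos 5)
Your proof is correct and takes a genuinely different route from the paper. The paper shows the indicators $X_i$ (``block $i$ missed'') are negatively associated (via the permutation, union, and monotone-function rules), then applies the Chernoff-type bound for NA variables (\cref{clm:NA_concentration}), and finally unions over row subsets. You instead discard the coupon entries and run a direct two-level union bound over witness pairs $(\bar S, W)$: the key observation is that each $p_i^{-1}(W)$ is a uniformly random $w$-subset of $[m]$, so a single witness has probability $\bigl(\binom kw/\binom mw\bigr)^{\tperm}\le(k/m)^{w\tperm}$, and summing over the $\binom mk\binom mw$ witnesses for each $k$ suffices. Your route avoids the NA machinery and the external concentration theorem entirely, at the cost of carrying the three-factor bound $\binom mk\binom mw(k/m)^{w\tperm}$ uniformly across all $1\le k\le m/2$---exactly the bookkeeping you flag as the technical heart, and which you handle correctly. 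The algebraic reductions ($C_S\ge c|B_S|$; the implication $c|W_S|\le k \Rightarrow C_S\ge n-n/c+|S|$; the inequality $\binom kw/\binom mw\le(k/m)^w$; and $w\le k$, which holds because $c>1$) are all sound. One small imprecision: ``$w=\Theta(k/c)$'' fails for $k<c$ (there $w=1$), though this does not affect the explicit $k$-by-$k$ computation you gesture at.
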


\cref{clm:perm_cover_large} implies that, with probability $1-O(1/V)$, for every row subset $S$ of size $> n-n/(2c)$, we have $C_S\ge |S|$. This bounds the failure probability of the large row subsets, which, together with the small case, concludes the proof of \cref{lem:perm_construction_span}. Thus, it only remains to prove \cref{clm:perm_cover_large}.

\begin{proof}[Proof of \cref{clm:perm_cover_large}]
We apply a union bound over the failure probability of all large subsets of augmented rows. Let $t$ be the number of \emph{missing} augmented rows, that is, there are $n/c-t$ augmented rows in the set. We only prove for cases where $t \geq 1$, since the case where $t = 0$ trivially holds.

Fix a subset of $n/c-t$ augmented rows where $t \le n/(2c)$. For each $i \in [n/c]$, let $X_i$ be a random variable denoting whether the $i$-th column block is \emph{not} covered by these rows. These $n/c-t$ rows cover fewer than $n-t$ columns when $\sum_{i=1}^{n/c} X_i > t/c$.

It turns out that the variables $X_i$ are \emph{negatively associated (NA)}, which allows us to apply concentration inequalities. The detailed proof of the following claim is deferred to \cref{app:NA_concentration}.

\begin{restatable}{claim}{NAConcentration}
    \label{clm:NA_concentration}
    Let $\mu=\E[\sum_{i=1}^{n/c}X_i]$, then for any $k\ge 2$,
    \begin{align*}
        \Pr\Bk*{\sum_{i=1}^{n/c}X_i>k\mu}\le \bk*{\frac 1k}^{\Omega(k\mu)}.
    \end{align*}
\end{restatable}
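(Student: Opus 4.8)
The plan is to combine the negative-association property of the indicator variables $X_i$ with a standard Chernoff-type bound. First I would verify that the $X_i$ are negatively associated: each $X_i$ indicates that the $i$-th column block is missed by all $n/c - t$ chosen augmented rows, and across the $\tperm$ permutations $\{p_\ell\}$, a fixed augmented row $j$ ``hits'' block $p_\ell(j)$; for each single permutation $p_\ell$, the collection of indicator variables ``block $i$ is hit by some chosen row under $p_\ell$'' is a function of a random bijection, and bijection-induced indicators of this sort are NA by the standard permutation/balls-in-bins NA results (e.g.\ the 0--1 Permutation Distribution lemma of Joag-Dev--Proschan); the event $\{X_i = 1\}$ is the intersection over $\ell$ of ``not hit under $p_\ell$'', and since the $\tperm$ permutations are independent, monotone (decreasing) functions of independent NA families remain NA, and products preserve NA. That is the content deferred to \cref{app:NA_concentration}, so here I would only cite it.

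Once NA is established, the key point is that NA variables satisfy the same upper-tail Chernoff bounds as if they were independent (this is the standard fact that $\E[\prod_i e^{\lambda X_i}] \le \prod_i \E[e^{\lambda X_i}]$ for NA variables and $\lambda \ge 0$). So I would apply the multiplicative Chernoff bound: for $k \ge 2$,
\[
\Pr\Bk*{\sum_{i=1}^{n/c} X_i > k\mu} \le \left(\frac{e^{k-1}}{k^{k}}\right)^{\mu} \le e^{-\Omega(k \log k)\,\mu} = \left(\frac{1}{k}\right)^{\Omega(k\mu)},
\]
where the last equality absorbs the $\log k \ge \log 2 = \Omega(1)$ factor into the implicit constant. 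This is exactly the claimed bound, and it holds uniformly in $\mu$ (and in particular does not require knowing $\mu$ precisely — the estimate $\mu = (n/c)\Pr[X_1 = 1]$ with $\Pr[X_1=1] = \left(\frac{t/c}{?}\right)^{\ldots}$ will be computed separately when the claim is invoked in the proof of \cref{clm:perm_cover_large}).

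The main obstacle I anticipate is the negative-association argument rather than the Chernoff step: one must be careful that the relevant indicators really do form an NA family, because NA is not preserved under arbitrary functions — only under monotone functions applied to disjoint blocks of the underlying variables. Here the ``disjoint blocks'' are the $\tperm$ independent permutations, and within each permutation the ``covered-by-some-row'' indicators are an increasing function (namely a max) of the per-row hit indicators, so $X_i$ is a decreasing function of one coordinate per block — this fits the closure rules — but spelling this out cleanly (and invoking the right permutation-NA lemma for a bijection restricted to a subset of its domain) is the delicate part, which is why it is deferred to the appendix. The Chernoff conclusion itself is routine once that is in hand.
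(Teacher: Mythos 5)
Your proposal is correct and follows essentially the same route as the paper: establish negative association of the $X_i$ via the permutation rule for the per-permutation missed-block indicators, the union rule over the independent permutations, and the monotone-function rule for the conjunction $X_i = \bigwedge_j H_{j,i}$, then apply a Chernoff-type upper-tail bound. The only cosmetic difference is that you re-derive the bound $\bigl(e^{k-1}/k^k\bigr)^\mu \le (1/k)^{\Omega(k\mu)}$ from the MGF inequality for NA variables, whereas the paper simply cites the packaged NA concentration theorem of \cite{wajc2017negative}.
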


Since each permutation covers a $(1-tc/n)$-fraction of all columns, the expectation of a single $X_i$ is $(tc/n)^{\tperm}$, therefore $\E[\sum_{i=1}^{n/c} X_i] = (n/c) \cdot (tc/n)^{\tperm}$ is much smaller than $t/c$. Therefore, we can use \cref{clm:NA_concentration} to bound the failure probability by
\[
\Pr\Bk*{\sum_{i=1}^{n/c} X_i > t/c} \leq \bk*{\frac{(n/c) \cdot (tc/n)^{\tperm}}{t/c}}^{\Omega(t/c)} = \bk*{c \cdot (tc/n)^{\Omega(\alpha \cdot c^2 \cdot \gamma)}}^{\Omega(t/c)},
\]
where we used the fact that $\tperm = \alpha \cdot c^2 \cdot \gamma$.
By a union bound over all subsets of augmented rows with size $\geq n/(2c)$, we show that the probability that any of them fails is
\begin{align*}
&\sum_{t=1}^{n/(2c)} \binom{n/c}{t} \bk*{c \cdot (tc/n)^{\Omega(\alpha \cdot c^2 \cdot \gamma)}}^{\Omega(t/c)}\\
\leq{} &\sum_{t=1}^{n/(2c)} (en/(tc))^t \bk*{c \cdot (tc/n)^{\Omega(\alpha \cdot c^2 \cdot \gamma)}}^{\Omega(t/c)} \tag{$\binom{n}{k} \leq \bk{\frac{en}{k}}^k$}\\
\leq{} &\sum_{t=1}^{n/(2c)} (tc/n)^{\Omega(\alpha \cdot c \cdot \gamma \cdot t)}\tag{$tc/n \leq 1/2$}\\
\leq{} &(n/(2c)) \max\BK*{(c/n)^{\Omega(\alpha \cdot c \cdot \gamma)}, (1/2)^{\Omega(\alpha \cdot n \cdot \gamma)}}\tag{$n/c \gg 1$}\\
\leq{} &1/n^{\Omega(\alpha \cdot \gamma)} = O(1/V),
\end{align*}
where the second-to-last line is derived using the convexity of $\log \bk{(tc/n)^{\alpha \cdot c \cdot \gamma \cdot t}}$ as a function of $t$, similarly to the small case;
the last line holds as we set $\alpha$ to a sufficiently large constant. Therefore, the probability that any subset of at least $n/(2c)$ augmented rows fails is $O(1/V)$, which concludes the proof of \cref{clm:perm_cover_large}.
\end{proof}

\subsection{Extensions of Theorem \ref{thm:augmented_retrieval_intro}}

\paragraph{Removing assumption of free randomness.}
\cref{thm:augmented_retrieval_intro} assumes free access to random hash functions and permutations. Now, we show how to remove this assumption at the cost of $O(n^{0.67})$ additional bits of space.

\begin{corollary}
    \label{col:augmented_retrieval_remove_random}
    Let $U=n^{1+O(1)}$, and let $V=n^{3+O(1)}$. For any constant $\eps> 0$, there is a static augmented retrieval data structure in the word RAM model with word size $\Theta(\log n)$ such that
    \begin{itemize}
    \item The data structure answers value queries for a set of $n - \eps n$ keys, where the keys are from a universe $[U]$ and values are from $[V]$.
    \item The data structure supports $\eps n$ augmented queries, which allow the user to access an array of $\eps n$ elements $a_1, \ldots, a_{\eps n} \in [V]$.
    \item The data structure answers any retrieval and augmented query in worst-case constant time, and uses $n\log V+O(n^{0.67})$ bits of space.
    \item The data structure does \emph{not} require access to external randomness.
    \end{itemize}
    
\end{corollary}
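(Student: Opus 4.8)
The plan is to derandomize Theorem~\ref{thm:augmented_retrieval_intro} by a standard ``split into independent pieces + brute-force search for good seeds'' strategy, tailored so that the seeds are cheap enough to store explicitly. First I would partition the $n$ keys (both the $n-\eps n$ retrieval keys and the $\eps n$ augmented indices) into roughly $n^{1/3}$ buckets of size $\Theta(n^{2/3})$ each, using a single top-level hash function that must itself be stored; by a standard argument (e.g.\ a $k$-wise independent family for $k = \Theta(\log n / \log\log n)$, or the two-level hashing of \cite{hu2025optimal}) one can find such a splitter using only $\poly\log n$ bits, or alternatively absorb the splitter into the $O(n^{0.67})$ budget. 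Within each bucket we want to run the construction of Theorem~\ref{thm:augmented_retrieval} (on $m = \Theta(n^{2/3})$ keys), but crucially we must also keep the ratio between augmented queries and retrieval queries bounded below by a constant \emph{inside every bucket}; since a random splitter distributes the $\eps n$ augmented queries evenly, each bucket gets $\Theta(\eps \cdot n^{2/3})$ augmented queries with high probability, so the hypothesis ``$n/c$ augmented queries'' of Theorem~\ref{thm:augmented_retrieval} is met with $c = O(1/\eps)$.

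Next, within each bucket, instead of assuming free random hash functions and permutations, I would iterate over a small explicit family of seeds and pick the first seed for which the construction of Theorem~\ref{thm:augmented_retrieval} succeeds (i.e.\ the $m\times m$ matrix comes out full rank). By Theorem~\ref{thm:augmented_retrieval}, a uniformly random seed succeeds with probability $1 - O(m/V)$, which for $V = n^{3+O(1)}$ and $m = \Theta(n^{2/3})$ is $1 - n^{-\Omega(1)}$; since the only randomized events in that construction are (i) choosing $\tcoup$ coupon entries per row and (ii) choosing $\tperm$ permutations over $[m/c]$, and the analysis (Claims~\ref{claim:det_non_zero_poly}, \ref{clm:perm_cover_large}, \ref{clm:NA_concentration}) only uses limited independence among these choices, a seed of length $O(\log m) = O(\log n)$ bits drawn from an appropriate $k$-wise independent or almost-universal family suffices for the success probability bound to still hold. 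One good seed must be stored per bucket, costing $O(\log n)$ bits per bucket and $O(n^{1/3}\log n) = o(n^{0.67})$ bits total for the seeds; the bucket boundaries / prefix-sum offsets into the global memory array cost another $O(n^{1/3}\log n)$ bits. The dominant term $n\log V$ comes, as in Theorem~\ref{thm:augmented_retrieval_intro}, from applying the \cite{dodis2010changing} word-packing to each bucket's memory (or to the concatenation), contributing $O(\log^2 n)$ per bucket, i.e.\ $O(n^{1/3}\log^2 n) = o(n^{0.67})$ in total.

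There is one subtlety to handle carefully: a fixed family of $\poly(n)$ seeds might \emph{fail on every seed} for some unlucky bucket. To rule this out I would first show that for a random splitter, with probability $1 - o(1)$, \emph{every} bucket simultaneously has (a) size in $[\tfrac12 n^{2/3}, 2 n^{2/3}]$ and (b) an augmented/retrieval ratio in a fixed constant window; conditioned on this good event the per-bucket construction is exactly an instance of Theorem~\ref{thm:augmented_retrieval} with constant $c,\gamma$, so a random seed from our family fails with probability $n^{-\Omega(1)}$, hence \emph{some} seed in a family of size $\poly(n)$ works, deterministically. (If one wants a fully deterministic splitter as well, replace the random splitter by an explicit expander-based or perfect-hashing-based splitter from \cite{hagerup2001efficient}; this is where I expect the only real bookkeeping.) The main obstacle is thus not any new idea but making sure that \emph{all three} derandomized components — the top-level splitter, the per-bucket seeds, and the word-packing metadata — each fit within $o(n^{0.67})$ bits while the per-bucket query time stays $O(\poly(c,\gamma)) = O(1)$; this is why the redundancy lands at $O(n^{0.67})$ rather than $\poly\log n$, the slack being dominated by whichever of these pieces is least seed-efficient (in the cleanest version, the $n^{1/3}$ buckets each needing $\poly\log n$ metadata, giving $n^{1/3}\poly\log n \ll n^{0.67}$, with the exponent $0.67$ chosen generously to absorb everything).
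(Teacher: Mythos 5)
Your high-level plan (split into $\Theta(n^{2/3})$-sized buckets, derandomize per bucket, store the metadata explicitly) matches the paper's approach, but there is a concrete gap in the central step: the claim that an $O(\log n)$-bit seed per bucket suffices.

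The analysis of \cref{thm:augmented_retrieval} inside a bucket of size $m = \Theta(n^{2/3})$ is \emph{not} a limited-independence argument in the sense you need. The coupon-entry case (\cref{lem:perm_construction_span}, small case) takes a union bound over all pairs $(S,T)$ of row/column subsets up to size $m - m/(2c)$ and, for each pair, computes the failure probability as $(s/m)^{\tcoup s}$; getting this exact product requires the $\tcoup s = \Theta(m)$ coupon choices from $S$ to be jointly independent. Likewise the permutation-entry case needs the permutations to be genuinely (close to) uniform permutations on $[m/c]$ so that the negative-association machinery of \cref{clm:NA_concentration} applies, and the Schwartz--Zippel step needs $\Theta(m)$-wise independence among the random field elements placed in active entries. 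All of this is $\Theta(m)$-wise, i.e.\ $\tilde O(n^{2/3})$-wise, independence. No family of $\poly(n)$ hash functions (equivalently, no $O(\log n)$-bit seed space) can be $\Theta(n^{2/3})$-wise independent, so ``pick the best of a $\poly(n)$-size seed family'' cannot be justified by the theorem's probability bound; the theorem's bound simply does not hold seed-by-seed over such a small family.

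This is exactly why the paper's redundancy sits at $O(n^{0.67})$ rather than $O(n^{1/3}\poly\log n)$: the dominant cost in the paper's proof is storing \emph{one} $\tilde O(n^{2/3})$-wise independent hash function (via Siegel's construction, costing $O(n^{2/3+\delta})$ bits) together with $O(1)$ explicit random permutations of length $\tilde O(n^{2/3})$, all \emph{shared across every bucket}. Sharing is also what makes the ``search for a good seed'' step work: with a single shared seed, each bucket succeeds with probability $1 - O(n/V)$, so a union bound over $n^{1/3}$ buckets gives overall success probability $1 - O(n^2/V)$, and hence a good shared seed exists. Your proposal replaces this with independent $O(\log n)$-bit per-bucket seeds, which both understates the seed length needed and discards the sharing that the paper relies on. The other pieces of your accounting (bucket pointers $O(n^{1/3}\log n)$, per-bucket $O(\log^2 n)$ from \cref{thm:augmented_retrieval_intro}, extra per-bucket slack to handle load imbalance, $O(\poly\log n)$-wise top-level splitter with Chernoff-type tails) do line up with the paper; the missing ingredient is recognizing that the hash-function storage is the bottleneck and that one must pay $n^{2/3+\delta}$ bits for it.
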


To remove the need for fully random hash functions and permutations, our algorithm finds a fixed set of hash functions and permutations that makes the construction succeed, and stores them explicitly as part of the data structure. In the original data structure, we need an $O(n)$-wise independent hash function and a constant number of random permutations over $[\eps n]$, which require $O(n)$ words to store. To reduce this space overhead, we apply the \emph{splitting trick} introduced in \cite{dietzfelbinger2009applicationsa}. The key idea is to hash the input retrieval queries into buckets of size $\tilde{O}(n^{2/3})$, allowing us to use $\tilde{O}(n^{2/3})$-wise independent hash functions for the construction within each bucket.

\begin{proof}
    Let $B=\tilde{O}(n^{2/3})$. The algorithm finds a set of hash functions and permutations that makes the following construction succeed, and stores them explicitly in the data structure.

    \begin{itemize}
        \item Use an $O(\poly \log n)$-wise independent hash function to distribute the retrieval queries into buckets such that each bucket receives $B$ retrieval queries in expectation. If any bucket contains more than $B+B/n^{1/3}$ retrieval queries, declare a construction failure. We also distribute the augmented queries evenly across the buckets.
        \item Sample an $O(B)$-wise independent hash function and a constant number of permutations of size $O(B)$, then share them across all buckets. Each bucket applies \cref{thm:augmented_retrieval_intro} to construct an augmented retrieval data structure with a fixed capacity $B+B/n^{1/3}$ of retrieval queries---regardless of the number of retrieval queries hashed into the bucket---and $\Theta(\eps B)$ augmented queries, using the shared hash functions and permutations. If any bucket's construction fails, declare a construction failure.
        \item Combine the results from each bucket and store a pointer to each bucket's data structure.
    \end{itemize}

    When distributing the keys, $O(\poly \log n)$-wise independence is sufficient to achieve Chernoff-type concentration bounds, as stated in \cite{schmidt1995chernoffhoeffding}. Therefore, with high probability, each bucket contains at most $B+B/n^{1/3}$ keys. With our hash functions and permutations, the overall construction succeeds with probability $1-O(n^2/V)$, since each individual bucket's construction succeeds with probability $1-O(n/V)$. This guarantees that the algorithm can find a set of hash functions and permutations for the construction to succeed.

    The redundancy of our data structure comes from four sources: the stored hash functions and permutations, the pointers to the buckets, the extra $B/n^{1/3}$ capacity added to each bucket to prevent overflow, and the redundancy incurred when applying \cref{thm:augmented_retrieval_intro}.

    \begin{itemize}
        \item The randomness we need consists of an $\tilde{O}(n^{2/3})$-wise independent hash function and a constant number of permutations of length $\tilde{O}(n^{2/3})$. These random permutations require $\tilde{O}(n^{2/3})$ bits to store. To simulate the $\tilde{O}(n^{2/3})$-wise independent hash function, we use Siegel's construction \cite{siegel1989universal}, which requires $O(n^{2/3 + \delta})$ bits to store for any constant $\delta > 0$. 
        \item The pointer to one bucket needs $O(\log n)$ bits, which amounts to $O(n^{1/3}\log n)$ bits in total.
        \item The total capacity we allocate to all buckets is $n + n^{2/3}$ values, incurring $\tilde{O}(n^{2/3})$ bits of redundancy compared to the $n$ values in the information-theoretic optimum.
        \item The construction in \cref{thm:augmented_retrieval_intro} incurs $O(\log^2 n)$ bits of redundancy for each bucket, which amounts to  $O(n^{1/3}\log^2 n)$ bits in total.
    \end{itemize}

    Therefore, the total redundancy of the data structure is $O(n^{0.67})$ bits.
\end{proof}

\paragraph{Handling small value universe.}

The augmented retrieval data structure in \cref{thm:augmented_retrieval_intro} only works for value universes larger than $n^3$. We now show that, by paying $o(n)$ bits of redundancy, we can allow the universe size to be as small as $\poly \log n$.

\begin{corollary}
    \label{col:augmented_retrieval_small_size}
    Let $U=n^{1+O(1)}$, and let $\log^{40} n\le V\leq \poly n$. For any constant $\eps> 0$, there is a static augmented retrieval data structure in the word RAM model with word size $\Theta(\log n)$ such that:
    \begin{itemize}
    \item The data structure answers value queries for a set of $n - \eps n$ keys, where the keys are from a universe $[U]$ and values are from $[V]$.
    \item The data structure supports $\eps n$ augmented queries, which allow the user to access an array of $\eps n$ elements $a_1, \ldots, a_{\eps n} \in [V]$.
    \item The data structure answers any retrieval and augmented query in worst-case constant time, and uses $n\log V+O(n^{0.67})+O(n/V^{1/40})$ bits of space.
    \item The data structure does \emph{not} require access to external randomness.
    \end{itemize}
\end{corollary}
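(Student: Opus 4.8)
The plan is to bootstrap off the large-universe construction \cref{col:augmented_retrieval_remove_random} by \emph{packing}. Set $k \defeq \lceil 3\log n/\log V\rceil$, so that $V^k$ lies in $[n^3,\,n^{O(1)}]$ and hence is a legal value universe for \cref{col:augmented_retrieval_remove_random}. Group the $n$ keys (and, proportionally, the $\eps n$ augmented queries) into roughly $n/k$ \emph{blocks}. The ``super-value'' of a block is the concatenation of the (at most about $k$) values of its keys, together with a small amount of metadata that lets a query decode which of those values is the one it wants. Feeding the $\approx n/k$ super-values (over the universe $[V^k]$) and the $\approx \eps n/k$ augmented super-queries into \cref{col:augmented_retrieval_remove_random} stores everything in $(n/k)\log(V^k) + O((n/k)^{0.67}) = n\log V + O(n^{0.67})$ bits, answering super-value queries in $O(1)$ time. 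A retrieval query for $x$ then proceeds by: locating the block of $x$, fetching that block's super-value (which fits in $O(1)$ machine words since $V^k = n^{O(1)}$), and extracting $a_x$ from it with $O(1)$ word operations. The one delicate constraint is that each block's super-value must fit in a $\poly(n)$-size universe, so blocks cannot be much larger than $k$; I would enforce a capacity of $(1+o(1))k$ keys per block and declare a construction failure (as in \cref{col:augmented_retrieval_remove_random}) if some block overflows, which happens rarely for a suitably $\poly\log n$-wise independent block hash.

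The main obstacle is the block metadata. To decode $a_x$ we must distinguish $x$ from the $\approx k-1$ other keys in its block; doing this naively costs $\Theta(\log U)=\Theta(\log n)$ bits per key, which is catastrophically more than the $\log V$ bits per value we are allotted. In fact we cannot afford \emph{any} $\Omega(1)$-bit-per-key overhead, since $\Omega(n)$ such bits would already blow the $o(n)$ budget, and an explicitly-stored per-block perfect hash ($\Theta(k)$ bits each) is equally fatal. The fix is to route keys entirely through global hash functions whose \emph{total} description is sublinear, plus a single small exception table. Concretely, I would (i) use a cheaply-stored global hash to assign each key to its block, (ii) use a second cheaply-stored global fingerprint hash $h\colon[U]\to[\text{poly}(k)]$ so that, inside any one block, the keys almost always receive distinct fingerprints, and then store inside the block's super-value only the list of (fingerprint, value) pairs; (iii) collect into one global exact dictionary all ``exception'' keys --- those landing in an over-full block, or colliding with another key of their block under $h$ --- and answer those queries directly from the dictionary. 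Both global hashes are made explicit and stored in $\tilde O(n^{2/3})$ bits via the splitting trick of \cite{dietzfelbinger2009applicationsa} together with Siegel's construction \cite{siegel1989universal}, exactly as in \cref{col:augmented_retrieval_remove_random}, so this contributes only to the $O(n^{0.67})$ term.

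The crux, and where the precise hypotheses enter, is the space accounting for step (ii)--(iii): the fingerprint length must be short enough that the per-value metadata, summed over all $n$ keys, is only $O(n/V^{1/40})$ rather than $\Theta(n\log V)$, yet long enough that the number of within-block fingerprint collisions (hence the size of the exception dictionary, which costs $\Theta(\log n)$ bits per key it stores) is also $O(n/V^{1/40})$. Balancing these two requirements is exactly what forces the fingerprint range to be a small fixed power of $V$, and it is the condition $V\ge\log^{40}n$ that guarantees both of these $O(n/V^{1/40})$ terms are genuinely $o(n)$ (indeed, $V^{1/40}\ge\log n$). Finally, the randomness used inside each block's sub-construction is removed in the same way, and handling the mildly non-uniform block loads (the block hash is only limited-independence, so loads are not perfectly concentrated when $k$ is only polylogarithmic) is done by the capacity-plus-overflow mechanism already mentioned, whose slack slots and overflow keys are again charged to the $O(n^{0.67})$ and $O(n/V^{1/40})$ terms. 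Putting these pieces together yields the claimed bound $n\log V + O(n^{0.67}) + O(n/V^{1/40})$ with worst-case constant query time and no external randomness.
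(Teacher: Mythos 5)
Your packing idea does not work, and the obstacle is precisely the ``crux'' you flag but then wave away. In your scheme, the super-value of a block encodes a list of $(\text{fingerprint}, \text{value})$ pairs, one for each of the $\approx k$ keys in the block, where the fingerprints lie in $[F]$. The super-retrieval instance over the $\approx n/k$ blocks therefore stores $\approx (n/k)\cdot k\,(\log V + \log F) = n\log V + n\log F$ bits in its main array, plus the $O(n^{0.67})$ splitting-trick redundancy. For the total to be $n\log V + O(n^{0.67}) + O(n/V^{1/40})$, you would need $n\log F = O(n/V^{1/40})$, i.e.\ $\log F = O(V^{-1/40}) = o(1)$; since $V \ge \log^{40} n$ this forces $F = 1+o(1)$, an effectively trivial fingerprint. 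But to disambiguate even two keys in a block you need $F \ge 2$, and to keep the exception dictionary (at $\Theta(\log n)$ bits per colliding key) down to $O(n/V^{1/40})$ you actually need $F \gtrsim k\log n\cdot V^{1/40}$, so $\log F \ge \log k + \log\log n + \tfrac1{40}\log V$; summed over $n$ keys this is $\Omega(n\log V)$ --- a \emph{constant fraction} of the main term, not $o(n)$. There is simply no value of $F$ that satisfies both sides of your ``balance,'' so the construction fails irrespective of how the two global hashes are implemented. (An additional, secondary worry is that storing a variable number of $(\text{fingerprint},\text{value})$ pairs per block requires the super-value universe $V'$ to be uniform across blocks, which means padding to worst-case block size and paying again for slack; but the fingerprint cost is already fatal.)

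For contrast, the paper avoids per-key disambiguation overhead entirely and attacks the \emph{failure probability} instead: it shrinks bucket size to about $V^{1/4}$ so that a single invocation of \cref{thm:augmented_retrieval_intro} on a bucket succeeds with \emph{constant} probability, samples $O(\log n)$ independent groups of hash functions/permutations shared across buckets, and stores per bucket only an $O(\log\log n)$-bit index of a successful group. The redundancy then comes from a handful of $o(n)$ sources (shared hashes, per-bucket pointers, capacity slack, per-bucket $O(\log^2 n)$ base redundancy, and group indices), each of which is $O(n^{0.67})$ or $O(n/V^{\Omega(1)})$. The conceptual difference is that the paper pays $\text{polylog}(n)$ bits \emph{per bucket}, with $n/V^{1/4}$ buckets, whereas your scheme implicitly pays $\Omega(1)$ bits \emph{per key}, which is already $\Omega(n)$ and hence too much.
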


When the value universe $V$ is small, our original data structure from \cref{thm:augmented_retrieval_intro} has a failure probability of $O(n^2/V)$, which becomes too high when $V$ is small. To address this, we again apply the splitting trick, hashing input retrieval queries into buckets of size smaller than $V^{1/3}$, which ensures that each bucket succeeds with constant probability. We further sample multiple sets of hash functions and permutations to boost the success probability in each bucket to $1 - 1/n^2$, then use a union bound over all buckets to bound the overall failure probability. Similar to \cref{col:augmented_retrieval_remove_random}, this splitting trick also removes the assumption of free randomness at a low cost.

\begin{proof}
    If $V \geq n^3$, the algorithm in \cref{thm:augmented_retrieval_intro} works directly. We therefore focus on the case where $V \le n^3$.

    Similar to \cref{col:augmented_retrieval_remove_random}, the algorithm finds a set of hash functions and permutations that make the following construction succeed, and stores them explicitly in the data structure:
    \begin{itemize}
        \item Use an $O(\poly \log n)$-wise independent hash function to distribute the retrieval queries into buckets such that each bucket receives $V^{1/4}$ retrieval queries in expectation. If any bucket receives more than $V^{1/4}+V^{1/5}$ queries, declare a construction failure. The algorithm also distributes the augmented queries evenly across these buckets.
        \item Sample $2 \log n$ groups of hash functions and permutations globally, each consisting of an $O(V^{1/4})$-wise independent hash function and a constant number of random permutations of appropriate length. For each bucket and each group, apply \cref{thm:augmented_retrieval_intro} to construct an augmented retrieval data structure with a fixed capacity of $V^{1/4}+V^{1/5}$ retrieval queries and $O(\eps V^{1/4})$ augmented queries.
        \item For each bucket, store the index of a group of hash functions and permutations (if any) that makes the construction succeed. If all groups make the construction fail for a certain bucket, declare a construction failure.
    \end{itemize}

    By concentration inequalities for limited independence \cite{schmidt1995chernoffhoeffding}, the probability that a bucket receives more than $V^{1/4}+V^{1/5}$ retrieval queries is at most $O(1/n^2)$ since $V \geq \log^{40} n$. With a specific group of hash functions and permutations, the construction in one bucket succeeds with probability $1 - O(V^{1/4}/V) \geq 1/2$. Therefore, for each bucket, with probability at least $1 - O(1/n^2)$, at least one of the $2 \log n$ groups will succeed. The overall success probability of the construction is thus $1 - O(1/n)$.

    The redundancy of our data structure comes from five sources:
    \begin{itemize}
        \item The stored hash functions and permutations: This consists of $O(\log n)$ instances of $O(V^{1/4})$-wise independent hash functions and permutations of length $O(V^{1/4})$. Using Siegel's construction \cite{siegel1989universal}, this requires $O(V^{1/4}n^{\delta})$ bits of space for any constant $\delta > 0$.
        \item The pointers to the buckets: Each pointer requires $O(\log n)$ bits, which amounts to $O(\log n \cdot n/V^{1/4})$ bits.
        \item The extra capacity in each bucket: The total capacity allocated to all buckets is $n + n/V^{1/20}$ values, incurring $O(\log V \cdot n/V^{1/20})$ bits of redundancy.
        \item The redundancy from \cref{thm:augmented_retrieval_intro}: Each bucket incurs $O(\log^2 n)$ bits of redundancy, which amounts to $O(\log^2 n \cdot n/V^{1/4})$ bits.
        \item The indices to the successful group: Each bucket needs an index in $[2\log n]$, which amounts to $O(\log \log n \cdot n/V^{1/4})$ bits.
    \end{itemize}
    Therefore, the overall redundancy is at most
    \[O(V^{1/4}n^{\delta})+O(\log^2 n \cdot n/V^{1/4})+O(\log V \cdot n/V^{1/20})\] bits. 
    Since $\log^{40} n\leq V\leq n^3$, the redundancy simplifies to $O(n/V^{1/40})$ bits.
\end{proof}

\section{Upper Bounds for Non-Augmented Retrieval}
\label{sec:non_augmented_ub}

In this section, we present a data structure for non-augmented retrieval that matches the lower bound \cref{thm:non_augmented_lb} when the value-universe size is a power of two.

\begin{restatable}{theorem}{ThmNonAugmentedRetrieval}
\label{thm:non_augmented_retrieval}
Let $v = O(\log n)$ be an integer and $U = \poly n$. There is a static retrieval data structure such that:
\begin{itemize}
    \item The data structure supports value queries for $n$ keys, where the keys are from a universe $[U]$ and values are $v$-bit integers. Each query takes $O(t)$ worst-case time for a parameter $t$, in the word RAM model with word size $w = \Theta(\log n)$.
    \item Letting $b = wt/v$, the data structure uses $nv + O(n/2^{\Omega(b)}+\log^2 n)$ bits of space.
    \item The data structure assumes free access to random hash functions. Over the randomness, the data structure succeeds with high probability in its construction process.
\end{itemize}
\end{restatable}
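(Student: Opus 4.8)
The plan is to recast retrieval as a sparse linear-algebra problem, exactly as in the augmented case, but without the permutation entries (which were only there to make the augmented queries cover every memory cell). Think of each value as an element of the field $\F=\F_{2^v}$, partition the $n$ keys into buckets by a random hash $h$, and store the data structure as a concatenation of per-bucket blocks of $\F$-cells. Inside a bucket holding a set $Y$ of keys, allocate $|Y|$ cells and give each key a sparse random vector over $\F$ whose $\Theta(b)$ nonzero positions are chosen one word at a time (so the support lies in $O(t)$ words, keeping queries fast even when $v\ll w$). A query is a single sparse inner product, costing $O(t)$ time, and the structure can represent \emph{any} assignment of values iff the $|Y|\times|Y|$ matrix of each bucket has full rank. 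As in \cref{lem:perm_construction_full_rank,claim:det_non_zero_poly,lem:perm_construction_span}, full rank follows from Schwartz--Zippel over a large enough field together with the combinatorial fact that every set $S$ of rows touches at least $|S|$ columns; the only constraint that is not automatic is $S=Y$, i.e.\ that the $\Theta(b)$-sparse rows cover \emph{all} $|Y|$ cells, which fails with probability only $|Y|\cdot e^{-\Omega(b)}$ by the same coverage-concentration estimate used in the small case of \cref{lem:perm_construction_span} (or by \cref{lem:conc_submodular}).

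The new ingredient needed for retrieval is a way to deal with the $\approx|Y|\cdot e^{-\Omega(b)}$ cells that the random rows leave uncovered (and which would otherwise force the bucket to carry extra cells, costing a spurious factor of $v$ in the redundancy). I would handle them by \emph{forced overrides}: for each uncovered cell, pick some key of the bucket that has not yet been ``pinned,'' declare that key's value to live directly in that cell, and remove it from the random system. After this surgery the remaining system is fully covered, hence full rank, and every pinned key is served by a single cell read. The override information is just a list of (key-within-bucket, cell-within-bucket) pairs, $O(b)$ bits each; since the total number of overrides is $O(n\cdot e^{-\Omega(b)})$, they cost $O(n\cdot b\cdot e^{-\Omega(b)})=O(n/2^{\Omega(b)})$ bits overall, and the uncovered cells are repurposed rather than wasted. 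At query time a key first consults the (perfectly hashed, $O(1)$-time) override table of its bucket, and otherwise does the sparse inner product.

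A few parameter choices complete the picture. We take the bucket size to be $K=2^{\Theta(b)}$, with a small enough constant that $K\cdot e^{-\Omega(b)}\le 1$, capped at the number of keys; bucket offsets are stored with an Elias--Fano structure, costing $O((n/K)\log K)=O((n/2^{\Theta(b)})\cdot b)=O(n/2^{\Omega(b)})$ bits of bookkeeping. When $v$ is small, Schwartz--Zippel needs a bigger field, so inside a bucket we instead work over the extension $\F_{2^{v\lambda}}$ for a suitable $\lambda=\Theta(b/v)$, grouping $\lambda$ cells and $\lambda$ keys at a time; this keeps the field of size $\geq\poly(K)$ while each query still reads $O(t)$ words. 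When $b=O(1)$ --- which forces $v=\Theta(w)$ and $t=O(1)$, and for which the target redundancy is merely $O(n)$ --- we bypass all of the above and use the minimal perfect hash function of Hagerup--Tholey, getting $\Theta(n)$ redundancy and $O(1)$ queries; this matches the bucketing construction continuously, since $K=\Theta(1)$ in precisely this regime. The rounding in extension-field degrees, the concentration slack on bucket loads (a bucket whose load or override list overflows its allotment has its keys pushed into a recursively-built overflow carrying only a $2^{-\Omega(b)}$ fraction of the keys, so the recursion converges), and the universe reduction all fit within $O(\log^2 n)$ further bits, and all randomness is assumed free so no derandomization is required.

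The step I expect to be the main obstacle is the space accounting, and in particular keeping the redundancy genuinely at $O(n/2^{\Omega(b)})$ rather than $O(nv/2^{\Omega(b)})$: \emph{any} scheme in which a query reads $t$ words leaves about an $e^{-\Omega(b)}$ fraction of its memory --- some $nv\cdot e^{-\Omega(b)}$ bits --- untouched, which is a factor $v$ too much, and the forced-override device is exactly what removes it. The delicate part is making this simultaneous with worst-case $O(t)$ query time: the override tables must be $O(1)$-searchable within the per-query word budget, and the narrow intermediate regime $b=\Theta(\log\log n)$ (where necessarily $v=\tilde\Theta(\log n)$ and $t=\tilde\Theta(b)$), in which per-bucket override lists can overflow and a naive fix would either cost a $\log n$ factor or blow up the depth of the overflow recursion, is where one must be most careful --- presumably via an iterated-logarithm / displacement-style handling whose total cost is absorbed into the $O(\log^2 n)$ term in the statement.
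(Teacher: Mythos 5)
Your proposal takes a genuinely different route from the paper. The paper starts from an \emph{existing} $\F_2$ construction---Proposition~3 of Dietzfelbinger--Walzer (\cref{thm:non_augmented_matrix_construction}), a $U\times m$ matrix with $m = n + n/2^{\Omega(\ell)}+O(\log n)$ whose rows are supported on two blocks and whose submatrix on any $n$ rows has full \emph{row} rank---then solves $v$ parallel $\F_2$-systems and interleaves the $v$ memory vectors so that a query touches $O(t)$ words. The extra factor of $v$ in the warm-up redundancy ($nv/2^{\Omega(b)}$) is killed not by repurposing dead cells but by the dual move: since $\Msub$ has more \emph{columns} than rows, fix a full-rank set of $n$ pivot columns, force the memory vectors to vanish on the remaining free columns so those entries never need be stored, and pay only $O(n/2^{\Omega(b)})$ bits for a rank structure and a small dictionary that locate the pivot columns inside the two blocks a query touches. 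Your plan instead builds a square per-bucket system over $\F_{2^v}$ from scratch and recovers the dead cells on the \emph{row} side, by pinning keys into uncovered cells. That dualization is a nice idea and the high-level intuition (``the waste is $nv\cdot e^{-\Omega(b)}$, a factor $v$ too much, and it comes from cells no query reads'') is exactly right, but as written the argument has real gaps.

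The central gap is the sentence claiming that ``the only constraint that is not automatic is $S=Y$.'' The lemmas you invoke do not support this. The small case of \cref{lem:perm_construction_span} only covers $|S|\leq |Y|(1-1/(2c))$, and in the paper's own analysis the regime $|S|$ close to $|Y|$ is precisely where coupon-style random entries \emph{fail} and the permutation entries were introduced. Your construction has no permutation entries, so Hall's condition for $|S|$ just below $|Y|$ needs a fresh argument that exploits your denser sparsity ($\Theta(b)=\Theta(\log K)$ entries per row in a $K\times K$ system rather than $O(1)$ entries in an $n\times n$ system); you never make that argument, and \cref{lem:conc_submodular} by itself gives a one-sided tail at the wrong scale. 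A second gap is the forced-override surgery: pinning a key $i^\ast$ to an uncovered cell $j$ and deleting row $i^\ast$ and column $j$ can newly uncover any column whose only supporting row was $i^\ast$, and this cascades; the proposal does not show the cascade terminates with $O(K e^{-\Omega(b)})$ total pins, nor that the surviving $(K-r)\times(K-r)$ system remains full rank. Third, the global accounting is unconvincing: with bucket size $K=2^{\Theta(b)}$ and per-bucket failure probability $e^{-\Omega(b)}$, there are $n/K$ buckets, so when $b=O(\log\log n)$ one expects $\gg 1$ failed buckets per trial, and the ``overflow recursion'' sketch neither shows that its extra value storage avoids the $nv\cdot e^{-\Omega(b)}$ trap it was designed to escape, nor that the bookkeeping fits in $O(\log^2 n)$ bits. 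Each of these might be fixable, but in their present form they are missing steps, not routine details.
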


Our algorithm uses the construction in \cite{dietzfelbinger2019constanttime}, which can be viewed as a solution to the special case of \cref{thm:non_augmented_retrieval} where $v=1$.

\begin{theorem}[Proposition 3 of \cite{dietzfelbinger2019constanttime}]
\label{thm:non_augmented_matrix_construction}
    Let $U$ be the size of a key universe, and let $\ell$ be a parameter that is at least a large enough constant. In the word RAM model, assuming access to fully random hash functions, there is an algorithm that constructs a binary matrix $M$ of dimensions $U \times m$, where $m = n + n/2^{\Omega(\ell)}+O(\log n)$, such that
    \begin{itemize}
        \item The $m$ columns of $M$ are partitioned into blocks of size $\ell$. In each row of $M$, only two column blocks may contain non-zero entries. The positions of these two blocks and the value of the entries in them can be computed in time $O(\ell / w)$.
        \item For any given set of $n$ rows of $M$, the $n \times m$ submatrix formed by these rows has full row rank with constant probability.
    \end{itemize}
\end{theorem}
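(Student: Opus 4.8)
The plan is to realize $M$ as a \emph{spatially coupled} random matrix (a ``ribbon''-type construction) and to prove the rank property by analyzing a left-to-right sweep of the columns.

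First I would fix the layout: take $m=z\ell$ with $z$ chosen so that $m=n+n\cdot 2^{-\Omega(\ell)}+O(\log n)$, and view the blocks $B_1,\dots,B_z$ as ordered left to right. Using fully random hash functions, each key $x\in[U]$ is mapped to a block index $i(x)\in\{1,\dots,z-1\}$ and to two uniformly random strings $u(x),v(x)\in\{0,1\}^{\ell}$; row $M_x$ carries $u(x)$ on block $B_{i(x)}$, $v(x)$ on block $B_{i(x)+1}$, and $0$ elsewhere. Thus every row is supported on the two consecutive blocks $B_{i(x)}$ and $B_{i(x)+1}$ --- the required ``at most two blocks'' condition --- and, given $x$, one hash evaluation produces $i(x)$ (hence the two block positions) while materializing $u(x),v(x)$ costs $O(\ell/w)$ words, so the whole row description is computed in $O(\ell/w)$ time. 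The distribution of $i(x)$ is uniform over the bulk but gently ramped near the left end, so that a lightly loaded ``seed'' region of $O(\log n)$ columns near the left end (which accounts for the additive $O(\log n)$ term) primes the construction, while the bulk blocks carry at most $(1-2^{-\Omega(\ell)})\ell$ rows --- which is exactly what the $n\cdot 2^{-\Omega(\ell)}$ slack in $m$ pays for.

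For the rank statement, fix an adversarial set $S$ of $n$ rows; we want $M_S$ to have rank $n$ with probability $\Omega(1)$ over the hashes. I would analyze this by a left-to-right sweep (``peeling from the seed''): process the blocks $B_1,B_2,\dots$ in order, and at block $B_j$ assign pivot columns --- taken from $B_j$, and if necessary from the still-untouched block $B_{j+1}$ --- to the rows with $i(x)=j$, after eliminating the already-pivoted rows to their left. The sweep certifies that $M_S$ has full row rank if and only if, at every step, the newly introduced rows are linearly independent (modulo the span of the already-pivoted ones) within the columns still available to them; the ``reservoir'' of free columns accumulated from the lightly loaded blocks on the left is what lets this succeed even when an individual block is heavier than average.

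The crux --- and the step I expect to be the main obstacle --- is quantifying this. Conditioned on everything already placed, a row $x$ fails to find a pivot (i.e.\ reduces to $0$) with probability at most $2^{-(\ell - r_x)}$, where $\ell-r_x$ is the number of free columns currently available to host its pivot; I would bound $\Pr[M_S\text{ not full rank}]\le\sum_{x\in S}\E[2^{-(\ell - r_x)}]$ and show this is at most $n\cdot 2^{-\Omega(\ell)}$, hence a small constant. The naive route --- charging each minimal linear dependency to the interval of blocks it spans and union-bounding over intervals --- is too lossy and only gives $m\ge(1+\Omega(1))n$; instead one must argue amortized, treating the reservoir of free columns as a random walk with a small positive drift ($\approx 2^{-\Omega(\ell)}\ell$ per block) that is primed to a positive level by the lightly loaded seed, and showing by a concentration/potential argument that it essentially never drains (so $\ell - r_x$ stays large) except on an event of probability $n\cdot 2^{-\Omega(\ell)}$. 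Getting the relationship between the seed length, the per-block slack, and $\ell$ right, so that reservoir depletion, per-block rank failure, and overloaded-block events all sum to $n\cdot 2^{-\Omega(\ell)}$, is the delicate part; the rest is bookkeeping. (When this construction is used inside a retrieval data structure, the resulting $\Omega(1)$ success probability is boosted to $1-1/\poly(n)$ by simply re-running with independent hash functions.)
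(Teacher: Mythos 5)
This statement is not proved in the paper at all: it is imported verbatim as Proposition~3 of \cite{dietzfelbinger2019constanttime}, so the only meaningful comparison is with Dietzfelbinger and Walzer's original argument. Your construction (rows supported on two consecutive blocks chosen by hashing, a lightly loaded seed region of $O(\log n)$ columns, per-block slack of $2^{-\Omega(\ell)}\ell$) is indeed in the spirit of that work and of the later ribbon constructions, so the \emph{design} is on target. But as a proof there is a genuine gap, and it is exactly the step you flag as ``the delicate part'': the amortized reservoir/random-walk argument that the pool of free columns essentially never drains is asserted, not carried out, and this is where all the work in the original proof lives (their analysis of full rank goes through a careful lattice-path/queueing-type argument, not a generic ``positive drift'' heuristic). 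A sketch that defers precisely this step does not establish the theorem.

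Moreover, your stated quantitative target is too weak to yield the theorem even if the deferred step were completed. You aim to bound $\Pr[M_S\ \text{not full rank}]$ by $\sum_{x\in S}\E\bigl[2^{-(\ell-r_x)}\bigr]\le n\cdot 2^{-\Omega(\ell)}$ and call this ``a small constant.'' The theorem only requires $\ell$ to be a large \emph{constant} (and the paper applies it with $\ell=wt/v$, which can be $O(1)$ or $O(\log\log n)$), in which case $n\cdot 2^{-\Omega(\ell)}=\Theta(n)\gg 1$ and your bound says nothing; the slack $n/2^{-\Omega(\ell)}$ in $m$ is a \emph{column count}, not a failure probability. The actual statement needs failure probability bounded by a constant independent of $n$, so a per-row union bound of the form you propose cannot suffice: one has to show that failure requires an atypical global event (e.g., an excursion of the reservoir process below zero) whose probability, summed over scales or positions, is $O(1)$ (indeed $2^{-\Omega(\ell)}$-type) rather than $n\cdot 2^{-\Omega(\ell)}$. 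Fixing the target bound and then actually proving the drift/concentration claim with the correct interplay between the seed length, the per-block slack, and $\ell$ is the substance of the cited proposition, and it is missing from the proposal.
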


\newcommand{\Msub}{M_{\text{sub}}}

\begin{proof}[Proof of \cref{thm:non_augmented_retrieval}]
To develop intuition, we first introduce a warm-up data structure that uses $nv + nv/2^{\Omega(b)}+O(\log^2 n)$ bits of space, which will be improved to $nv + n/2^{\Omega(b)}+O(\log^2 n)$ bits later in the proof.
We follow the same matrix-solving framework as in \cref{sec:augmented_ub}.

\paragraph{A warm-up algorithm.} When $v = 1$, we view both the memory configuration of our data structure and the retrieval queries as vectors of length $n + n/2^{\Omega(b)}+O(\log^2 n)$, then answer each query by computing the inner product of the memory vector and the query vector. \cref{thm:non_augmented_matrix_construction} provides a way to construct a matrix $M$ of dimensions $U \times (n + n/2^{\Omega(b)}+O(\log n))$, where each row of $M$ corresponds to a query, such that any given set of $n$ rows of $M$ is linearly independent with constant probability. The construction algorithm then solves the linear equation to get the memory configuration that correctly answers all queries. We choose $\ell = b$ when applying \cref{thm:non_augmented_matrix_construction}, so the data structure uses $n + n/2^{\Omega(b)}+O(\log n)$ bits of space, and each query is answered in $O(b / w) = O(t)$ time. This construction matches the desired space and time bounds when $v = 1$.

When $v > 1$, we start with the same matrix $M$ of dimensions $U \times (n + n/2^{\Omega(b)}+O(\log n))$. To handle $v$-bit values, we solve $v$ separate linear systems, one for each bit position in the value. For the $i$-th bit position, we compute a memory vector $\mathbf{v}_i$ such that for each query $j$ with value $x_j$, we have that $\angbk{\mathbf{v}_i, M_j}$ equals the $i$-th bit of $x_j$.

To make queries efficient, we interleave the $v$ memory vectors $\mathbf{v}_1, \mathbf{v}_2, \ldots, \mathbf{v}_{v}$ into a single bit string $\mathbf{v}_{\text{all}}$ of length $(n + n/2^{\Omega(b)}+O(\log n)) \cdot v$ and store it in the memory. Specifically, for each position $i$ in the original vectors, we store the bits $\mathbf{v}_1(i), \mathbf{v}_2(i), \ldots, \mathbf{v}_{v}(i)$ consecutively in $\mathbf{v}_{\text{all}}$.

By \cref{thm:non_augmented_matrix_construction}, each row in matrix $M$ has non-zero entries only in two column blocks of length $b$. With our interleaved storage, when processing a query, we only need to access two segments of length $b \cdot v$ in the string $\mathbf{v}_{\text{all}}$. Since $b \cdot v = t \cdot w$, we can read these segments in $O(t)$ time, compute the inner products (with the help of a lookup table), and answer any query efficiently.

The space usage of this data structure is $nv + nv/2^{\Omega(b)}+O(v \log n)$ bits. When $b \ge \Omega(\log v)$ with a sufficiently large leading constant in $\Omega(\cdot)$, we have $v \le 2^{\Omega(b/2)}$, so the space usage is $nv + n/2^{\Omega(b/2)} + O(v \log n) = nv + n/2^{\Omega(b)}+O(\log^2 n)$ bits, as desired. In the following, we consider the regime where $b = O(\log v)$ and improve the space usage to $nv + n/2^{\Omega(b)}$ bits. In this regime, we have $n / 2^{\Omega(b)} \gg \log n$, so the matrix $M$ has $n + n / 2^{\Omega(b)}$ columns, where the $O(\log n)$ term is absorbed by $n / 2^{\Omega(b)}$.

\paragraph{Pivot and free columns.}

Let $\Msub$ be the submatrix of $M$ formed by the $n$ input retrieval queries. \cref{thm:non_augmented_matrix_construction} states that, with constant probability, $\Msub$ has full row rank. In this case, there exists a set $S$ of $n$ columns of $\Msub$ such that the $n \times n$ submatrix of $\Msub$ formed by the columns in $S$ is full rank. We fix a choice of $S$ and call the columns in $S$ the \defn{pivot} columns, and the columns in $\Msub \setminus S$ the \defn{free} columns. When we solve the linear systems, we additionally require that the entries in the free columns are zero in the memory vectors. With this requirement, there still exist feasible solutions to the linear systems, but we no longer need to store the bits of $\mathbf{v}_i$ in the free columns.

Formally, we define $\tilde{\mathbf{v}}_i$ to be the vector of $n$ bits obtained by removing the entries in the free columns from $\mathbf{v}_i$. We then interleave the vectors $\tilde{\mathbf{v}}_1, \tilde{\mathbf{v}}_2, \ldots, \tilde{\mathbf{v}}_v$ into a single bit string $\tilde{\mathbf{v}}_{\text{all}}$ of length $nv$ and store it in the memory. The only remaining task is to store the positions of the pivot/free columns to support fast queries.

\paragraph{Storing positions of the pivot columns.}
For each query, we need to know the positions of the pivot columns within the two column blocks where the query has non-zero entries. This requires both knowing the number of pivot columns to the left of a column block and knowing the specific positions of pivot columns within each block. To efficiently support these operations, we design the following data structures:

\begin{enumerate}
    \item We store a sparse bit vector of length $n + n/2^{\Omega(b)}$, with $n/2^{\Omega(b)}$ ones, supporting rank queries in constant time (e.g., using Theorem 2 of \cite{patrascu2008succincter}). Each bit indicates whether the corresponding column is a free column. The space used is \[\log\binom{n+n/2^{\Omega(b)}}{n/2^{\Omega(b)}}+\frac{n+n/2^{\Omega(b)}}{\poly\log n}+\tilde O\bk*{\bk*{n+n/2^{\Omega(b)}}^{3/4}}=O\bk*{n/2^{\Omega(b)}}\] bits. The latter terms are dominated by the first term since we assume that $b=O(\log\log n)$.
    \item We store a compact key-value dictionary where each key is the index of a column block containing at least one free column, and the associated value is a bitmask of length $b$ indicating which columns in this block are free. Since there are at most $n/2^{\Omega(b)}$ free columns, this dictionary contains at most $n/2^{\Omega(b)}$ entries. Therefore, it uses \[O\bk*{\log\binom{n+n/2^{\Omega(b)}}{n/2^{\Omega(b)}}+b\cdot n/2^{\Omega(b)}}=O\bk*{b\cdot n/2^{\Omega(b)}}=O\bk*{n/2^{\Omega(b)}}\] bits of space.
\end{enumerate}

Using these data structures, we can efficiently determine the positions of the pivot columns in any block. When processing a query, the algorithm first uses the rank data structure to locate the first pivot column in each relevant block, then queries the dictionary to get the exact positions of all pivot columns in these blocks. If a block is not in the dictionary, then all columns in that block are pivot columns.

Finally, after identifying the pivot columns, the algorithm reads the corresponding bits from $\tilde{\mathbf{v}}_{\text{all}}$, computes the inner products using a lookup table of size $O(n^\epsilon)$, and returns the answer. The total space usage is $nv + O(n/2^{\Omega(b)})$ bits, and the query time is $O(t)$.
\end{proof}

Similar to \cref{col:augmented_retrieval_remove_random}, we can remove the assumption of access to free randomness in this algorithm.
The proof of the following corollary is based on the splitting trick and is exactly the same as \cref{col:augmented_retrieval_remove_random}, so we omit it.

\begin{restatable}{corollary}{CorNonAugmentedRetrievalRemoveRandom}
    \label{col:non_augmented_retrieval_remove_random}
    Let $v = O(\log n)$ be an integer and $U = \poly n$. There is a static retrieval data structure such that:
    \begin{itemize}
    \item The data structure supports value queries for $n$ keys, where the keys are from a universe $[U]$ and values are $v$-bit integers.
    \item The data structure answers any retrieval query in $O(t)$ worst-case time for a parameter $t$, in the word RAM model with word size $w = \Theta(\log n)$.
    \item Letting $b = wt/v$, the data structure uses $nv + O(n^{0.67}) + O(n/2^{\Omega(b)})$ bits of space.
    \end{itemize}
\end{restatable}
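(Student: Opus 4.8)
The plan is to follow the \emph{splitting trick} of Dietzfelbinger and Weidling~\cite{dietzfelbinger2009applicationsa}, in exactly the same way as in the proof of \cref{col:augmented_retrieval_remove_random}. The only reason \cref{thm:non_augmented_retrieval} needs free randomness is that its construction, via \cref{thm:non_augmented_matrix_construction}, requires a hash function whose independence scales with the number of keys; storing such a function explicitly would cost $\Theta(n)$ words. Splitting the input into buckets of size $\tilde{O}(n^{2/3})$ reduces the relevant independence parameter to $\tilde{O}(n^{2/3})$, which can be stored cheaply using Siegel's construction~\cite{siegel1989universal}.

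Concretely, I would set $B=\tilde\Theta(n^{2/3})$ and use an $O(\poly\log n)$-wise independent hash function to distribute the $n$ keys into $\tilde\Theta(n^{1/3})$ buckets, each receiving $B$ keys in expectation; if any bucket overflows past $B+B/n^{1/3}$ keys, declare a construction failure. Since $O(\poly\log n)$-wise independence suffices for Chernoff-type tail bounds~\cite{schmidt1995chernoffhoeffding} and $B$ is polynomial in $n$, every bucket stays within its capacity with probability $1-1/\poly(n)$. Next, sample a single $O(B)$-wise independent hash function shared by all buckets and, within each bucket, run the construction of \cref{thm:non_augmented_retrieval} with the \emph{fixed} capacity $B+B/n^{1/3}$ (regardless of the realized load) restricted to that bucket's key set. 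Because the buckets' key sets are disjoint, the shared function behaves like fresh randomness of independence $\ge B+B/n^{1/3}$ for each bucket, which is what \cref{thm:non_augmented_matrix_construction} needs. Each bucket's construction then succeeds with high probability in $B$, hence with probability $1-1/\poly(n)$; a union bound over the $\tilde{O}(n^{1/3})$ buckets shows that, with high probability, some fixed choice of the two hash functions makes the entire construction succeed, and the algorithm stores that choice explicitly.

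For the space accounting, there are four sources of redundancy, each bounded exactly as in \cref{col:augmented_retrieval_remove_random}: (i) the stored hash functions --- the outer one is negligible, and the inner $\tilde{O}(n^{2/3})$-wise independent function costs $O(n^{2/3+\delta})$ bits via Siegel~\cite{siegel1989universal} for any constant $\delta>0$; (ii) the $\tilde{O}(n^{1/3})$ bucket pointers, $O(n^{1/3}\log n)$ bits; (iii) the $B/n^{1/3}$ slack per bucket, totaling $\tilde{O}(n^{2/3})$ extra values and hence $\tilde{O}(n^{2/3})$ bits; and (iv) the per-bucket redundancy from \cref{thm:non_augmented_retrieval}, namely $O(B/2^{\Omega(b)}+\log^2 B)$ bits per bucket, which sums to $O(n/2^{\Omega(b)})+O(n^{1/3}\log^2 n)$. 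Choosing $\delta$ small, all of (i)--(iii) and the $\log^2$ part of (iv) fit inside $O(n^{0.67})$, giving total space $nv+O(n^{0.67})+O(n/2^{\Omega(b)})$ bits. A query is answered by evaluating the outer hash function in $O(1)$ time, following the bucket pointer, and running the intra-bucket query of \cref{thm:non_augmented_retrieval} in $O(t)$ time.

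I expect the main subtlety --- though it is the same one already handled in \cref{col:augmented_retrieval_remove_random} --- to be verifying that the reduced independence parameter $O(B)$ genuinely suffices for the full-row-rank guarantee of \cref{thm:non_augmented_matrix_construction} within each bucket, and that fixing each bucket's capacity in advance (rather than adapting to the realized load) does not break the decoding procedure of \cref{thm:non_augmented_retrieval}. Both hold because the analyses of \cref{thm:non_augmented_matrix_construction} and \cref{thm:non_augmented_retrieval} only ever inspect a single bucket's at most $B+B/n^{1/3}$ keys at a time. Everything else is a routine repetition of the computation in \cref{col:augmented_retrieval_remove_random}, which is why it is safe to omit the full write-up.
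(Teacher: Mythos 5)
Your proposal is correct and takes essentially the same approach as the paper; the paper itself omits this proof, stating only that it is "exactly the same as" the proof of \cref{col:augmented_retrieval_remove_random}, and you reconstruct that argument faithfully, including the two non-trivial checks (that a shared $O(B)$-wise independent hash suffices per-bucket because the analysis of \cref{thm:non_augmented_matrix_construction} only ever inspects one bucket's keys, and that fixing per-bucket capacity in advance is harmless). One small point worth being explicit about if you write this up in full: \cref{thm:non_augmented_matrix_construction} only gives \emph{constant} per-seed success probability for the full-rank event, so to get the $1-1/\poly(n)$ per-bucket success needed for the union bound you should, as in \cref{col:augmented_retrieval_small_size}, sample $O(\log n)$ groups of inner hash functions and store per bucket the index of a successful group (an extra $O(n^{1/3}\log\log n)$ bits, absorbed into $O(n^{0.67})$); your phrasing leans on the ``high probability'' claim in \cref{thm:non_augmented_retrieval}'s statement, which implicitly already contains this boosting.
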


Finally, we can construct filters from retrieval data structures using the well-known fingerprint method: Take a random hash function $h$; for a filter with $n$ keys $x_1,\cdots,x_n$ and false positive rate $\epsilon$, the algorithm constructs a retrieval data structure that maps key $x_i$ to $h(x_i)\in[\epsilon^{-1}]$. By using \cref{thm:non_augmented_retrieval} in this section, we get an algorithm for filters that uses $n\log\epsilon^{-1}+O(n/2^{\Omega(b)})$ bits assuming free randomness, which matches the lower bound in \cref{thm:non_augmented_lb_filter} when $\epsilon\geq n^2/U$. Therefore, for commonly used parameter regimes, the algorithm above is optimal. Furthermore, following the reduction in \cref{col:augmented_retrieval_remove_random}, we can also remove access to external randomness by paying $\tilde{O}(n^{2/3})$ bits of redundancy.

\begin{restatable}{corollary}{CorNonAugmentedFilterRemoveRandom}
    \label{col:non_augmented_filter_remove_random}
    Let $v = O(\log n)$ be an integer, $\epsilon = 2^{-v}$, and $U = \poly n$. There is a static filter such that:
    \begin{itemize}
    \item The filter can store $n$ keys from the universe $[U]$.
    \item The filter answers approximate membership queries with false positive rate $\epsilon$. Each query takes $O(t)$ worst-case time for a parameter $t$ in the word RAM model with word size $w = \Theta(\log n)$.
    \item Letting $b = wt / \log(1/\epsilon)$, the filter uses $n \log \epsilon^{-1} + O(n^{0.67}) + O(n/2^{\Omega(b)})$ bits of space.
    \end{itemize}
\end{restatable}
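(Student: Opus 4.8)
The plan is to obtain the filter from the retrieval data structure via the standard fingerprint reduction, and then to remove external randomness with exactly the splitting argument used for \cref{col:augmented_retrieval_remove_random}. First I would set $v=\log\epsilon^{-1}$ (an integer by hypothesis, and $O(\log n)$), pick a hash function $h:[U]\to[\epsilon^{-1}]$, and build the retrieval data structure of \cref{thm:non_augmented_retrieval} on the $n$ key–value pairs $(x_i,h(x_i))$, with $b=wt/v=wt/\log\epsilon^{-1}$. A membership query on $x$ computes $h(x)$, queries the retrieval structure, and answers \emph{true} iff the returned value equals $h(x)$. Correctness on $S$ is immediate since the retrieval structure returns $h(x_i)$ exactly; for $x\notin S$, the returned value is a deterministic function of the stored fingerprints $\{h(x_i)\}$ and the (fixed) retrieval seeds, so whenever $h(x)$ is uniform in $[\epsilon^{-1}]$ and independent of $\{h(x_i)\}$, the match probability — i.e. the false-positive probability — is at most $\epsilon$. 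With free randomness this already gives $n\log\epsilon^{-1}+O(n/2^{\Omega(b)})$ bits and $O(t)$ query time, matching \cref{thm:non_augmented_lb_filter}.

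Next I would remove external randomness by applying the splitting trick of \cite{dietzfelbinger2009applicationsa}, mirroring the proof of \cref{col:augmented_retrieval_remove_random} with the per-bucket primitive replaced by the fingerprint-based filter above. Hash the $n$ keys into $\tilde\Theta(n^{1/3})$ buckets of expected size $B=\tilde O(n^{2/3})$ using a $\poly\log n$-wise independent hash function stored via Siegel's construction \cite{siegel1989universal} (evaluable in $O(1)$ time); declare a construction failure and retry if any bucket overflows $B+B/n^{1/3}$, which by \cite{schmidt1995chernoffhoeffding} happens with negligible probability. Share across all buckets a single $\tilde O(n^{2/3})$-wise independent hash for the retrieval-matrix construction of \cref{thm:non_augmented_matrix_construction} and a single $\tilde O(n^{2/3})$-wise independent hash $h$ for the fingerprints, both from Siegel's family; build each bucket's filter at a fixed capacity $B+B/n^{1/3}$, and deterministically search for seeds under which every bucket's construction succeeds (one exists, since a random choice works with high probability). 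Store the seeds, a pointer per bucket, and the interleaved solution vectors. The redundancy bookkeeping is identical to \cref{col:augmented_retrieval_remove_random}: $O(n^{2/3+\delta})$ bits for the Siegel seeds, $O(n^{1/3}\log n)$ for pointers, $\tilde O(n^{2/3})$ for the overflow slack, and $\sum_{\text{buckets}}\bigl(O(B/2^{\Omega(b)})+O(\log^2 n)\bigr)=O(n/2^{\Omega(b)})+\tilde O(n^{1/3})$ from the per-bucket invocations of \cref{thm:non_augmented_retrieval}; since $n^{2/3}\cdot\poly\log n = o(n^{0.67})$, the total is $n\log\epsilon^{-1}+O(n^{0.67})+O(n/2^{\Omega(b)})$, and each query still costs $O(t)$ (two $b$-word block reads plus a lookup table, plus $O(1)$ to evaluate $h$ and the bucket hash).

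The step I expect to require the most care is the false-positive analysis under limited independence with shared seeds. For a fixed $x\notin S$, $x$ falls into some bucket $\beta$ holding at most $B+B/n^{1/3}=\tilde O(n^{2/3})$ keys, and the per-bucket filter's output on $x$ is a deterministic function of $\beta$'s fingerprints and the fixed seeds; since $h$ is drawn from a $k$-wise independent family with $k$ chosen a constant factor above the bucket capacity, $h(x)$ is independent of all of $\beta$'s fingerprints, so the match probability is exactly $1/\epsilon^{-1}=\epsilon$. I would also check the two easy points that make this rigorous: the bucket-assignment hash is a separate function independent of $h$, so conditioning on $x\in\beta$ does not bias $h(x)$; and the overflow retry is a property of the key set $S$ alone, not of $x$, so it does not skew the bound. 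Everything else — the $O(t)$ query time, the lookup tables, and the combination across buckets — carries over verbatim from \cref{thm:non_augmented_retrieval} and \cref{col:augmented_retrieval_remove_random}.
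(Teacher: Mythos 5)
Your proposal follows the same approach the paper takes: the fingerprint reduction from filters to $v$-bit retrieval with $v = \log\epsilon^{-1}$, instantiated with \cref{thm:non_augmented_retrieval}, followed by the splitting trick of \cref{col:augmented_retrieval_remove_random} to eliminate the free-randomness assumption (the paper's own treatment is a one-paragraph sketch stating exactly this). Your accounting of the redundancy terms matches, and your identification of the false-positive analysis under limited independence as the delicate step is apt.

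One point in your write-up is slightly off, though your final paragraph suggests you see the right picture. You say ``deterministically search for seeds under which every bucket's construction succeeds,'' with $h$ apparently among the searched seeds. If $h$ is found by a deterministic search, then the resulting filter is a fixed deterministic object for a given $S$, and there is no remaining random source over which the false-positive probability $\le \epsilon$ can be stated --- for a fixed bad $x$ the filter either always matches or never matches. The fix, which is what the fingerprint reduction requires, is to sample the fingerprint hash $h$ at random from the $\tilde O(n^{2/3})$-wise independent family and store its seed (this costs $O(n^{2/3+\delta})$ bits, already in your budget), and to restrict the deterministic search to only the matrix-construction hash $g$ and the bucket-assignment hash. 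These govern whether the per-bucket rank condition and the load-balancing condition hold, and --- crucially --- both are independent of $h$, so a successful pair exists for every choice of $h$ and conditioning on the search outcome does not bias $h(x)$. With that ordering, your argument that $h(x)$ is independent of the $\tilde O(n^{2/3})$ fingerprints in $x$'s bucket (by the independence level exceeding the bucket capacity) gives the false-positive probability exactly $\epsilon$, as claimed. Everything else carries over as you describe.
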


\section{Open Problems}
\label{sec:open}

We conclude the paper by listing several open questions that we believe are important to consider in future work:
\begin{enumerate}
    \item \textbf{\boldmath{}Static retrieval with values from a non-power-of-two domain $[V]$:} Theorem \ref{thm:retrieval_ub_lb} gives tight time/space bounds for static retrieval in the setting where values come from a power-of-two size domain $[V]$ (i.e., each value is $v = \log V$ bits). What is not clear is whether similar upper bounds can be achieved when $V$ is not a power of two. Is it possible to construct an $(n \log V + \lfloor n e^{-O(wt / \log V)}\rfloor)$-space solution with query time $t$, in the case where values are taken from an arbitrary domain $[V]$ satisfying $V \le \poly(n)$? The main obstacle here would be to provide a stronger version of Theorem \ref{thm:non_augmented_matrix_construction} (Proposition 3 in \cite{dietzfelbinger2019constanttime}) that would allow for values that are a non-integer number of bits. (We remark that, even in \cite{dietzfelbinger2019constanttime}, a variation of this question, in cases where $V$ is a valid field size, already appears as an open question.)
    \item \textbf{Small-value augmented retrieval: }Our upper bound for augmented retrieval (Corollaries \ref{col:augmented_retrieval_remove_random} and \ref{col:augmented_retrieval_small_size}) give redundancy $n^{1 - \Omega(1)}$ only if $V$ is relatively large---at least $n^{\Omega(1)}$. Whether or not one can achieve polynomially small redundancy $n^{1 - \Omega(1)}$ for $V = n^{o(1)}$ (even assuming, for example, that $V$ is a power of two) appears to remain as an open question, and appears to be a problem that will require significant additional algorithmic ideas to solve. 
    \item \textbf{\boldmath{}Augmented retrieval with $n^{\delta}$ redundancy: }Even allowing $V \ge n^3$, our upper bounds for augmented retrieval (namely, Corollary \ref{col:augmented_retrieval_remove_random}) still fall short of achieving a natural goal of $n^{\delta}$ redundancy, where $\delta \in (0, 1)$ is a small positive constant of our choice. The main obstacle here is the space needed to encode the hash functions used within the construction, which seem to require relatively high independence. Bringing this independence down to $n^\delta$, or finding some other avenue to reduce redundancy, remains an appealing direction for future work. 
    \item \textbf{Finding other problems that benefit from an augmented array: }The main takeaway in our paper is that, although static retrieval is not \emph{on its own} amenable to very time/space-efficient solutions, this barrier can be bypassed so long as we store the retrieval data structure \emph{together} with an augmented array. Are there other data-structural problems (besides retrieval data structures and filters) that can benefit from this high-level framework? 

    We remark that, in \cite{hu2025optimal}, augmented retrieval (albeit in a very restricted parameter regime) ends up being a critical \emph{algorithmic tool} for solving another data-structural problem (constructing a space-optimal static dictionary). If one can build a larger theory of ``augmented data structures'', it is likely that the results in this area would be quite useful as data-structural primitives that can be used within other (non-augmented) problems. 
\end{enumerate}

We remark that Questions 1 and 2 are especially important in the context of constructing \emph{static filters} (using retrieval data structures). In this setting, the false positive rate $\epsilon$ is typically quite large, meaning that the retrieval value-domain size $V = \epsilon^{-1}$ is typically quite small.

\bibliographystyle{alpha}
\bibliography{ref.bib}

\appendix

\section{Proofs of Lower Bound Extensions}
\label{app:lb}
In this appendix, we provide the proofs for \cref{thm:augmented_lb} and \cref{thm:non_augmented_lb_filter}.

\subsection{Proof of Theorem \ref{thm:augmented_lb}}

\AugmentedLB*

\begin{proof}
    This proof analyzes a slightly modified version of the communication game in the proof of \cref{thm:non_augmented_lb}, where instead of $X$ and $A$, Alice also needs to send the array of augmented elements $B=(b_1,\dots,b_m)\in [V]$.
    
    As in the proof of \cref{thm:non_augmented_lb}, we let $\beta$ be a sufficiently large constant, and aim to prove a space lower bound of $(n+m)v + \lfloor n \cdot e^{-\beta wt/v} \rfloor$ bits, when $m\le n\cdot e^{-\beta wt/v}$. We only consider the case where $wt/v \in [1, (1/\beta) \ln n]$. Assume for the sake of contradiction that the data structure uses $(n+m)v+R$ bits of space, where $R\le n\cdot e^{-\beta wt/v}$.

    Define $\tslow, \ft,\pb, f,T,M$ exactly as in the proof of \cref{thm:non_augmented_lb}. The only difference from the previous proof is that, the number of cells $M$ is now $((n+m)v+R)/w$.

    \paragraph{The protocol.} Alice sends one bit, indicating whether \cref{prop:avg_time_bound} holds. If not, Alice directly sends $X$, $A$, and $B$, and terminates.
    
    If \cref{prop:avg_time_bound} holds, then Alice sends another bit, indicating whether $f(X)\ge T$.

    \paragraph{Case 1: $f(X)<T$.} In this case, Alice first sends the exact same message as in case 1 of \cref{thm:non_augmented_lb}, then explicitly sends the augmented elements $B$ using $\ceil{mv}$ bits. The overall message size is less than $\log\binom{U}{n}+(n+m)v-10$.

    \paragraph{Case 2: $f(X)\ge T$.} In this case, Alice first sends the exact same message as in case 2 of \cref{thm:non_augmented_lb}, i.e., Alice first sends the data structure $D$, and then sends $X$ using an efficient encoding. Note that after receiving $D$, Bob can simulate all the queries and recover the augmented elements $B$, so we do not need to explicitly send those elements.

    To show that this protocol is efficient, we prove that the number of sets $X$ such that $f(X)\ge T$ is small. Let $S\subset [U]$ be a random set, defined as in \cref{thm:non_augmented_lb}. We show that when $m$ is small, we can still bound the probability that $f(S)\ge T$.

    On the one hand, $\E[f(S)]$ does not change, and is at most $M(1-e^{-20wt/v})$. On the other hand,
    \begin{align*}
        T&=(nv-(10t/\tslow)nv-1)/w \\
        &=M-(mv+R+(10t/\tslow)nv+1)/w \\
        &\ge M-mv/w-(nv/w)\cdot e^{-(\beta/3)wt/v}\tag{from the proof of Theorem \ref{thm:non_augmented_lb}}\\
        &\ge M-(nv/w)\cdot e^{-\beta wt/v}-(nv/w)\cdot e^{-(\beta/3)wt/v}\\
        &\ge M(1-e^{-(\beta/4)wt/v}).
    \end{align*}
    Therefore the concentration inequality still applies, and the rest of the proof is identical to \cref{thm:non_augmented_lb}.
\end{proof}

\subsection{Proof of Theorem \ref{thm:non_augmented_lb_filter}}

\FilterLB*

\begin{proof}
    We begin with the classical information-theoretic lower bound for filters (see \cite{dietzfelbinger2008succinct}), which implies that any filter must use at least
    \begin{align*}
        H\defeq \log\binom{U}{n}-\log\binom{n+(U-n)\epsilon}{n}
    \end{align*}bits even without time constraints (note that when $\epsilon\ge n^2/U$, we have $H=n\log \eps^{-1}-O(1)$). We then strengthen this bound to $H + \lfloor n \cdot e^{-O(wt/\log \eps^{-1})} \rfloor$ bits using techniques of \cref{thm:non_augmented_lb}.

    \paragraph{The information-theoretic lower bound.} The argument of \cite{dietzfelbinger2008succinct} uses a one-way communication game in which Alice sends a set $X\subset[U]$ of size $n$ to Bob. Assume, for contradiction, that a filter uses fewer than $H$ bits; then there exists a protocol that communicates fewer than $\log\binom{U}{n}$ bits, a contradiction.
    
    The protocol is as follows: Instead of sending $X$ directly, Alice first builds a filter $D$ for $X$ and sends $D$ to Bob (which uses fewer than $H$ bits by assumption). This filter $D$ shrinks the potential elements of $X$ to a set $Y$ consisting of keys for which the filter answers true. Bob can simulate all queries on $D$ to obtain $Y$, and then Alice needs only $\log\binom{|Y|}{n}$ additional bits to specify $X$ as a subset of $Y$. By definition of filters, the expected size of $Y$ is at most $n+(U-n)\epsilon$. Hence, the expected communication is at most
    \begin{align*}
        H + \E\Bk*{\log\binom{|Y|}{n}}\le H +\log\binom{n+(U-n)\epsilon}{n}
    \end{align*}
    bits. Here we use the fact that $\log\binom{x}{n}$ is a concave function of $x$.%
    \footnote{To see this, note that $\log\binom{x}{n}$ can be written as a summation $\bk{\sum_{i=0}^{n-1}\log(x-i)}-\log n!$, where each $\log(x-i)$ is concave.}
    By definition of $H$, the expected communication cost is less than $\log\binom{U}{n}$ bits, leading to a contradiction. Therefore, the space usage of a filter must be at least $H$ bits.
    
    \paragraph{Overview of the improved lower bound.}
    To obtain the lower bound stated in \cref{thm:non_augmented_lb_filter}, we further improve the previous communication protocol by following the techniques of \cref{thm:non_augmented_lb}: 
    If the queries on the keys in $X$ probe few cells (i.e., the case $f(X)<T$ in \cref{thm:non_augmented_lb}), then instead of sending the entire filter $D$, Alice only needs to send the addresses and contents of those probed cells, which saves the communication. If the queries on the keys in $X$ probe many cells (i.e., the case $f(X)\ge T$ in \cref{thm:non_augmented_lb}), then, after Alice sends the filter $D$ and Bob recovers the set $Y$ as the previous protocol, because a random subset $S\subseteq Y$ of size $n$ is unlikely to have this property, Alice can send $X$ more efficiently conditioned on the event $f(X) \ge T$. Combining the two cases yields a smaller communication cost and thus a stronger lower bound.

    However, a new technical challenge arises: the ``universe'' of $X$ is now the random set $Y$ rather than the fixed set $[U]$. In some rare realizations of $Y$, the above approach yields no savings. For example, if $|Y|<2n$, then $\log \binom{|Y|}{n}=O(n)$ is already very small, leaving little room for savings (and the argument in \cref{thm:non_augmented_lb} assumes $U\ge 2n$).
    Moreover, if the keys in $Y$ have large average expected query time, a random subset $S\subseteq Y$ of size $n$ may happen to satisfy $f(S)\ge T$ with non-negligible probability, making the $f(X)\ge T$ branch ineffective.
    To handle this, we use two tailored strategies. When $|Y|<2n$, we simply send $X$ using $\log\binom{|Y|}{n}$ bits. When $Y$ has large average query time, we apply a different concentration bound to show that a random subset $S\subseteq Y$ is still unlikely to equal $X$.

    \paragraph{Notations.} Let $\beta$ be a sufficiently large constant multiple of $\log_n U$. Our goal is to prove a space lower bound of $H + \lfloor n \cdot e^{-\beta wt/\log\epsilon^{-1}} \rfloor$ bits.
    We restrict attention to $wt/\log\epsilon^{-1} \in [1, (1/\beta) \ln n]$. For contradiction, assume the filter uses $H+R$ bits, where $R< n\cdot e^{-\beta wt/\log\epsilon^{-1}}$.

    Define $\tslow,\ft,\pb, f$ exactly as in the proof of \cref{thm:non_augmented_lb}, but for the truncated filter queries. Let $g(S)\defeq\sum_{x\in S}f(x)$ denote the sum of truncated query times.
    
    Let $M\defeq(H+R)/w$ denote the number of cells in $D$. Note that $\log M\le w$, since $M\le 2H/w\le 2n\le U\le 2^w$. We will also define a threshold $T$ with its value differing from \cref{thm:non_augmented_lb} and will be chosen later.

    Let $P$ be a random variable, denoting the number of keys in the universe for which the filter returns a false positive when queried. We have that $\E[P]=(U-n)\epsilon$.

    \paragraph{The protocol.} As in \cref{thm:non_augmented_lb}, we first rule out bad cases: either $P<n$ (i.e., $|Y|<2n$), or the average truncated query time of $X$ exceeds $10t$ (formally, $g(X)>10nt$).

    Alice sends one bit indicating whether we are in a bad case. If so, she sends the filter using $H+R$ bits, then sends $X$ using $\log\binom{n+P}{n}$ bits. Otherwise, we proceed as in \cref{thm:non_augmented_lb}, as follows.
    
    In the good case, Alice first sends one bit, indicating whether $f(X)\ge T$.

    \paragraph{Case 1: $f(X)< T$.} We show that it is possible to ``send $D$'' using fewer than $H+R$ bits (although we are not really sending $D$ in this case, but some messages that play the role of $D$). Alice sends the addresses and contents of the cells in $\pb(X)$ to Bob, using $\lceil\log T+\log \binom{M}{f(X)}+f(X)\cdot w\rceil$ bits (first the size of $\pb(X)$ in $\log T$ bits, then the addresses, then the contents). The number of bits sent is at most
    \begin{align*}
        &\ceil*{\log T+\log \binom{M}{f(X)}+f(X)\cdot w} \\
        \le{}&\log \bk*{\bk*{\frac{eM}{M-f(X)}}^{M-f(X)}}+f(X)\cdot w+1+\log T\tag{$\binom{n}{k}\le \bk{\frac{en}{k}}^k$} \\
        \le{}&(M-f(X))\cdot \bk{w+\log e - \log \bk{M-f(X)}}+f(X)\cdot w+1+\log T\tag{$\log M\le w$} \\
        \le{}&M\cdot w-(M-T)\cdot(\log (M-T)-\log e)+1+\log T\tag{$f(X)<T$}\\
        ={} &H+R-(M-T)\cdot(\log (M-T)-\log e)+1+\log T.
    \end{align*}
    
    Upon receiving these cells, Bob simulates all queries: a query returns false if it probes any cell outside $\pb(X)$ or if it is slow. He obtains a candidate set $\tilde{Y}$ of keys $x$ for which $\textsf{query}(x)$ is fast and returns true. Note that $\tilde{Y}$ contains the fast queries in $X$ and has size $\le n+P$. 
    
    Given $\tilde{Y}$, Alice sends $\ceil{\log \binom{n+P}{n}}$ additional bits to specify the fast queries in $X$. She then explicitly sends the slow queries in $X$, using at most $\ceil{(10t/\tslow)n\cdot \log U}$ bits, since there are at most $(10t/\tslow)n$ of them.

    To sum up, Alice's message consists of the following:

    \begin{enumerate}
        \item One bit indicating that we are in a good case;
        \item One bit indicating that $f(X)<T$;
        \item The address and content of $\pb(X)$, using $H+R-(M-T)\cdot (\log (M-T)-\log e)+1+\log T$ bits;
        \item The identity of the fast queries in $X$, using $\ceil{\log\binom{n+P}{n}}$ bits.
        \item The identity of the slow queries in $X$, using $\ceil{(10t/\tslow)n\cdot \log U}$ bits.
    \end{enumerate}

    The total number of bits is at most
    \begin{align*}
        &H+\log\binom{n+P}{n}-(M-T)\cdot (\log (M-T)-\log e)+R+(10t/\tslow)n\cdot \log U+\log T+10. \numberthis \label{ineq:cost_case1}
    \end{align*}

    We choose $T$ so that \eqref{ineq:cost_case1} is at most $H+\log\binom{n+P}{n}-2R-10$. Equivalently,
    \begin{align*}
        (M-T)\cdot(\log (M-T)-\log e)>3R+(10t/\tslow)n\cdot \log U+\log T+20. \numberthis \label{ineq:cost_case1_simplified}
    \end{align*}
    Since
    \begin{align*}
        &3R+(10t/\tslow)n\cdot \log U+\log T+20 \\
        \le {}&n\cdot e^{-(\beta/2)wt/\log \epsilon^{-1}}+10(t/\tslow)n\cdot \log U\tag{$\log T\le \log n$} \\
        = {}&n\cdot e^{-(\beta/2)wt/\log \epsilon^{-1}}+10 n\cdot \log U\cdot e^{-(\beta/2)wt/\log \epsilon^{-1}} \tag{$\tslow\defeq t\cdot e^{(\beta/2) wt/\log \epsilon^{-1}}$}\\
        \le {}&n\cdot \log n\cdot e^{-(\beta/3)wt/\log\epsilon^{-1}}\tag{$\beta\gg \log U/\log n$}
    \end{align*}
    Thus, solving \eqref{ineq:cost_case1_simplified}, we may set $T:=M-n\cdot e^{-(\beta/4)wt/\log \epsilon^{-1}}$, which makes the total communication cost in this case at most $H+\log\binom{n+P}{n}-2R-10$.

    \paragraph{Case 2: $f(X)\ge T$.} In this case, we show that it is possible to send $X$ using fewer than $\log\binom{n+P}{n}$ bits. Alice's message consists of the following:

    \begin{enumerate}
        \item One bit indicating that we are in a good case;
        \item One bit indicating that $f(X)\ge T$;
        \item The memory configuration $D$, using $H+R$ bits;
        \item The set $X$ conditioned on $D$, the event $f(X)\ge T$, and the fact that we are in a good case.
    \end{enumerate}

    We now bound the length of the last message. We will show that, a random subset $S$ of $Y$ is unlikely to satisfy both $f(S)\ge T$ and $g(S)\le 10nt$, which means that Alice can send $X$ more efficiently conditioned on these events.

    Specifically, define $S$ to be a random subset of $Y$, where each element is included independently with probability $n/|Y|$. Then the cost of this step is at most 
    \begin{align*}
        &-\log\Pr[S=X|f(S)\ge T,g(S)\le 10nt] \\
        \le& -\log\Pr[S=X]+\min\{\log\Pr[f(S)\ge T],\log\Pr[g(S)\le 10nt]\}.
    \end{align*}
    In the proof of \cref{thm:non_augmented_lb}, we showed that $-\log\Pr[S=X]=\log\binom{|Y|}{n}+O(\log n)\le \log\binom{n+P}{n}+O(\log n)$. It remains to show that the second term is small. We consider two cases based on whether the average truncated query time over $Y$ is large. The protocol is the same in both cases; only the analyses differ.

    \paragraph{Case 2.1: $g(Y)\le 100t\cdot |Y|$.}  Similarly to \cref{clm:efs_small}, we can show that $\E[f(S)] \le M (1 - e^{-200wt/\log\epsilon^{-1}})$. Note that \cref{clm:efs_small} implicitly requires that $U\ge 2n$, which in our setting translates to $|Y|\ge 2n$ (since $S$ is a random subset of $Y$), and this is satisfied because $|Y|=n+P$ and $P\ge n$ (we are in the good case). 
    
    In comparison, we have that
    \begin{align*}
        T&=M-n\cdot e^{-(\beta/4)wt/\log\epsilon^{-1}} \\
        &\ge M-M\cdot e^{-(\beta/5)wt/\log\epsilon^{-1}} \tag{$M\ge \bk{n\log\epsilon^{-1}-O(1)}/w$} \\
        &=M(1-e^{-(\beta/5)wt/\log\epsilon^{-1}}),
    \end{align*}
    thus there is a gap between $\E[f(S)]$ and $T$. We can therefore apply a concentration bound to show that $\Pr[f(S)\ge T] \le \exp\bk*{-n\cdot e^{-(3\beta/4)wt/\log\epsilon^{-1}}}$, as in \cref{thm:non_augmented_lb}.
    
    \paragraph{Case 2.2: $g(Y)> 100t\cdot |Y|$.} In this case, we bound $\Pr[g(S)\le 10nt]$ via another concentration bound. Since $g(Y)>100t\cdot |Y|$, we have $\E[g(S)]>100nt$. As $g(S)$ is the sum of independent random variables in $[0,\tslow]$, a standard Chernoff bound yields
    \begin{align*}
        \Pr[g(S)\le 10nt]&\le\Pr[g(S)\le (1-0.9)\E[g(S)]] \\
        &\le \bk*{\frac{e^{-0.9}}{0.1^{0.1}}}^{\E[g(S)]/\tslow} \\
        &\le \exp\bk{-0.1\cdot 100nt/(t\cdot e^{(\beta/2)wt/\log\epsilon^{-1}})} \\
        &\le \exp\bk{-n\cdot e^{-(2\beta/3)wt/\log\epsilon^{-1}}}.
    \end{align*}

    In both subcases 2.1 and 2.2, the cost of the last message is at most $\log\binom{n+P}{n}-n\cdot e^{-(3\beta/4)wt/\log\epsilon^{-1}}$ (deterministically).

    \paragraph{Putting the pieces together.} To summarize what we have so far:
    \begin{itemize}
        \item In a bad case, we use $H+\log\binom{n+P}{n}+R+1$ bits.
        \item In a good case, we use at most $H+\log\binom{n+P}{n}-2R-10$ bits.
    \end{itemize}

    It remains to show that, for any distribution of $P$ with $\E[P]= (U-n)\epsilon$, the expected total cost of the protocol is less than $\log\binom{U}{n}$ bits.

    Intuitively, if the bad case is rare, the loss of $R+1$ bits there is offset by the savings of $2R+10$ bits in the good case, and we can apply the same Jensen's inequality as before to show the expected communication cost is less than $H+\log \binom{n+\E[P]}{n} \le \log \binom{U}{n}$. If the bad case is frequent, then $P$ has substantial mass below $n$ (i.e., it is not concentrated around its mean), which implies that $\E\Bk*{\log\binom{n+P}{n}}$ is much smaller than $\log\binom{n+\E[P]}{n}$. Formally, we consider the following two cases:

    \begin{itemize}
        \item $\Pr[P<n]<1/10$: The bad-case probability is at most $\Pr[P<n]+\Pr[g(X)>10nt]\le 1/5$ (since $\E[g(X)]\le nt$ implies $\Pr[g(X)>10nt]\le 1/10$ by Markov). Therefore,
        \begin{align*}
            \E\Bk*{\text{cost}}&{}\le n\log\epsilon^{-1}+\E\Bk*{\log\binom{n+P}{n}}+\Pr\Bk*{\text{bad case}}\cdot (R+1)-\Pr\Bk*{\text{good case}}\cdot (2R+10) \\
            &{}<n\log\epsilon^{-1}+\E\Bk*{\log\binom{n+P}{n}}\le n\log\epsilon^{-1}+\log\binom{n+\E[P]}{n}\le \log\binom{U}{n}.
        \end{align*}
        \item $\Pr[P<n]\ge 1/10$: Here $\E\Bk*{\log\binom{n+P}{n}}$ is much smaller than $\log\binom{n+\E[P]}{n}$. Let $p:=\Pr[P<n]$. Then
        \begin{align*}
            \E\Bk*{\log\binom{n+P}{n}}&\le \Pr[P<n]\cdot \log\binom{2n}{n}+\Pr[P\ge n]\cdot \E\Bk*{\log\binom{n+P}{n}\biggr|P\ge n} \\
            &\le p\cdot n+(1-p)\cdot \log\binom{n+\E[P|P\ge n]}{n} \\
            &\le p\cdot n+(1-p)\cdot \log\binom{n+\E[P]/(1-p)}{n} \\
            &\le p\cdot n+(1-p)\cdot \bk*{\log\binom{n+\E[P]}{n}-n\log(1-p)},
        \end{align*}
        where the last line is because
        \begin{align*}
            \binom{n+\E[P]/(1-p)}{n}&=\frac 1{n!}\prod_{i=1}^{n}\bk*{\frac {\E[P]}{1-p}+i} 
            \le \frac 1{n!}\prod_{i=1}^{n}\frac{\E[P]+i}{1-p} 
            =\binom{n+\E[P]}{n}\cdot \bk*{\frac 1{1-p}}^n.
        \end{align*}
        Since $-(1-p)\log(1-p)<1$ for $p\in [0,1]$, we obtain
        \begin{align*}
            \E\Bk*{\log\binom{n+P}{n}}&\le p\cdot n+n+(1-p)\cdot \log\binom{n+\E[P]}{n} \\
            &\le \log\binom{n+\E[P]}{n}-p\cdot \log\binom{n+\E[P]}{n}+2n.
        \end{align*}
        Since $\epsilon\ge n^2/U$, we have that $\E[P]=(U-n)\epsilon>n^2/2$, so $\log\binom{n+\E[P]}{n}\ge 100n$. Therefore,
        \begin{align*}
            \E[\text{cost}]&\le n\log\epsilon^{-1}+\E\Bk*{\log\binom{n+P}{n}}+R+1 \\
            &\le n\log\epsilon^{-1}+\log\binom{n+\E[P]}{n}-(1/10)\cdot 100n+4n \tag{$R\le n$} \\
            &<\log\binom{U}{n}.
        \end{align*}
    \end{itemize}
    In all cases, the expected cost of the protocol is less than $\log\binom{U}{n}$ bits, regardless of the distribution of $P$, which completes the proof of \cref{thm:non_augmented_lb_filter}.
\end{proof}

\section{Proof of Claim \ref{clm:NA_concentration}}
\label{app:NA_concentration}
In this appendix, we present a proof of \cref{clm:NA_concentration}.

We begin by recalling the definition of the variables $X_i$. In \cref{clm:perm_cover_large}, we consider a subset of $n/c - t$ rows (called \emph{selected rows}) among $n/c$ augmented rows in the matrix. The permutation entries (non-zero entries of the matrix) are determined by $\tperm$ independent permutations: The columns are partitioned into $n/c$ blocks, and for each row, each permutation activates exactly one column block according to the permutation values. 
Clearly, each permutation collectively activates $n/c -t$ column blocks in the selected rows.
The variable $X_i$ is defined to indicate whether the $i$-th column block is \emph{not} activated in \emph{any} of the selected rows. Given these definitions, \cref{clm:NA_concentration} can be restated as follows.

\NAConcentration*

To establish this result, we first demonstrate that the variables $X_i$ possess the negative association (NA) property. Our approach primarily utilizes techniques established in \cite{wajc2017negative}.

\begin{lemma}[\cite{wajc2017negative}]
    The following properties hold for negative association:
    \begin{enumerate}
        \item (Permutation rule.) Permutation distributions exhibit negative association. Specifically, if $X_1,\ldots, X_n$ are random variables that constitute a permutation of $v_1,\ldots,v_n$, then they are negatively associated.
        \item (Union rule.) The union of independent sets of negatively associated variables remains negatively associated.
        \item (Monotone function rule.) Independent monotone functions applied to negatively associated variables preserve negative association. Specifically, if $X_1,\ldots,X_n$ are negatively associated random variables, and $f_1,\ldots,f_m$ are all monotonically increasing (or all monotonically decreasing) functions that depend on disjoint variables, then $f_1(X),\ldots,f_m(X)$ are negatively associated.
    \end{enumerate}
\end{lemma}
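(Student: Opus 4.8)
Recall that a tuple $(X_1,\dots,X_n)$ is \emph{negatively associated} (NA) if for every pair of disjoint index sets $I,J\subseteq[n]$ and every pair of functions $f,g$ that are non-decreasing in each coordinate, $\E[f(X_I)g(X_J)]\le\E[f(X_I)]\cdot\E[g(X_J)]$, i.e.\ $\mathrm{Cov}(f(X_I),g(X_J))\le0$. I will use two routine reductions throughout: (i) replacing $f,g$ by $-f,-g$ shows the same inequality holds when $f,g$ are both non-increasing; and (ii) if $I\cup J\ne[n]$ we may enlarge $I$ to include the leftover coordinates (which $f$ is free to ignore), so it suffices to check the defining inequality when $I\sqcup J=[n]$. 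The plan is to prove the three closure properties in increasing order of difficulty.

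\emph{Union rule.} Let $\{X_i\}_{i\in A}$, $\{Y_j\}_{j\in B}$ be independent, each NA. Given disjoint $I,J\subseteq A\sqcup B$, write $I=I_A\sqcup I_B$, $J=J_A\sqcup J_B$, and let $f,g$ be coordinatewise non-decreasing on $I,J$. Conditioning on $(X_i)_{i\in A}$, independence makes $f,g$ into coordinatewise non-decreasing functions of $Y_{I_B},Y_{J_B}$ (with the fixed $X$-values as parameters), so NA of the $Y$-family gives $\E[fg\mid X_A]\le\E[f\mid X_A]\cdot\E[g\mid X_A]$; and by independence $\E[f\mid X_A]=\E_Y[f(X_{I_A},Y_{I_B})]$ is a non-decreasing function of $X_{I_A}$ \emph{alone}, and likewise $\E[g\mid X_A]$ of $X_{J_A}$ alone. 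Since $I_A\cap J_A=\emptyset$, NA of the $X$-family lets us take expectations and chain the two inequalities. \emph{Monotone-function rule.} If $Z_k=f_k(X_{S_k})$ with the $S_k$ pairwise disjoint and the $f_k$ all non-decreasing, then for disjoint $I,J\subseteq[m]$ and coordinatewise non-decreasing $\phi,\psi$, the composition $\phi((Z_k)_{k\in I})$ is a non-decreasing function of $X_{\bigcup_{k\in I}S_k}$ and $\psi((Z_k)_{k\in J})$ of $X_{\bigcup_{k\in J}S_k}$; these index sets are disjoint, so NA of $(X_i)$ applies directly, and the all-non-increasing case follows from reduction (i).

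\emph{Permutation rule.} This is the technical core. Fix reals $v_1\le\cdots\le v_n$ and let $(X_1,\dots,X_n)$ be a uniform ordering of them; by (ii) assume $I\sqcup J=[n]$. The idea is to condition on the unordered multiset $W=\{X_i:i\in I\}$: conditionally on $W$, the ordering $(X_i)_{i\in I}$ is uniform over orderings of $W$, the ordering $(X_j)_{j\in J}$ uniform over orderings of the complementary multiset, and these two are \emph{independent}. Hence, with $\bar f(W)\defeq\E[f(X_I)\mid W]$ and $\bar g(W)\defeq\E[g(X_J)\mid W]$, it suffices to prove $\mathrm{Cov}(\bar f(W),\bar g(W))\le0$. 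Note $\bar f$ weakly increases under the elementary ``swap'' that replaces some $a\in W$ by a larger $b$ not in $W$ (couple the uniform orderings to differ in one coordinate, where $f$ can only rise), while $\bar g$ weakly decreases under the same swap (its argument, the complement, loses $b$ and gains the smaller $a$). I would then establish the correlation inequality for arbitrary such swap-monotone functions by induction on $n$: condition on whether the maximum $v_n$ belongs to $W$; the two conditional laws of $W$ are a uniform size-$|I|$ submultiset of $\{v_1,\dots,v_{n-1}\}$, respectively $v_n$ adjoined to a uniform size-$(|I|-1)$ submultiset of $\{v_1,\dots,v_{n-1}\}$, so the inductive hypothesis bounds both conditional covariances, and the residual cross term equals $-\Pr[v_n\in W]\Pr[v_n\notin W](f_1-f_0)(g_1-g_0)$ where $f_c,g_c$ are the conditional means of $\bar f,\bar g$. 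To finish, $f_1\ge f_0$ and $g_1\le g_0$: coupling a uniform size-$|I|$ submultiset $W^{(0)}$ of $\{v_1,\dots,v_{n-1}\}$ with the uniform size-$(|I|-1)$ submultiset obtained by deleting a uniformly random element $r$ of $W^{(0)}$ and then adjoining $v_n$, one realizes the ``$v_n\in W$'' multiset as $W^{(0)}-\{r\}+\{v_n\}$, a single swap-up of $W^{(0)}$, so $\bar f$ weakly rises and $\bar g$ weakly falls.

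I expect the permutation rule to be the only part requiring real care --- specifically, checking that the two conditional sub-problems are exactly uniform submultisets on the reduced ground set, and that ``delete a uniformly random element of a uniform $k$-submultiset'' has exactly the uniform $(k-1)$-submultiset law (so the swap-up coupling has the correct marginals). The union and monotone-function rules are pure definition-chasing: conditioning on an independent block, or precomposing with monotone maps on disjoint coordinate blocks, preserves coordinatewise monotonicity and disjointness of the dependency sets. If the submultiset induction becomes unwieldy, a fallback is to derive $\mathrm{Cov}(\bar f,\bar g)\le0$ from an FKG-type correlation inequality on the lattice induced by the swap order, but the direct induction above appears cleaner and self-contained.
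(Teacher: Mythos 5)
The paper does not actually prove this lemma: it is quoted verbatim from \cite{wajc2017negative} (the permutation rule going back to Joag-Dev and Proschan), so there is no in-paper argument to compare against, and your proposal should be judged on its own merits. On those merits it is essentially correct and self-contained. The union and monotone-function rules are the standard arguments: condition on one independent block, use that integrating out independent coordinates preserves coordinatewise monotonicity, and chain the two NA inequalities; precomposition with monotone maps on disjoint coordinate blocks is definition-chasing, with the all-non-increasing case handled by negating both test functions. Your treatment of the permutation rule is the right reduction: conditioning on the multiset $W=\{X_i:i\in I\}$ makes the two orderings conditionally independent and uniform, so NA reduces to $\mathrm{Cov}(\bar f(W),\bar g(W))\le 0$ for a swap-up non-decreasing $\bar f$ and swap-up non-increasing $\bar g$ of a uniform $|I|$-submultiset, and your induction on $n$ (condition on whether $v_n\in W$, apply the inductive hypothesis to the two conditional laws, and handle the means via the single-swap coupling that realizes the $(|I|-1)$-submultiset-plus-$v_n$ law) is sound; the coupling's marginals are indeed correct since deleting a uniformly random element of a uniform $k$-subset gives a uniform $(k-1)$-subset. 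One small slip: the cross term in the law of total covariance is $+\Pr[v_n\in W]\,\Pr[v_n\notin W]\,(f_1-f_0)(g_1-g_0)$, not its negative; since you then establish $f_1\ge f_0$ and $g_1\le g_0$, the term is nonpositive exactly as your argument needs, so only the sign in the displayed formula should be corrected. With that fix (and the routine observation that ties among the $v_i$ only make the swap monotonicity weak, which suffices), your proof is a valid replacement for the citation.
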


\begin{claim}\label{lem:NA_randomcover}
The random variables $X_i$ defined in the proof of \cref{clm:perm_cover_large} are negatively associated.
\end{claim}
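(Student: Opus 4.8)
The plan is to write each $X_i$ as a product of per-permutation indicators and then build up negative association from the three closure rules quoted above. For $k\in[\tperm]$ and a block $i\in[n/c]$, define $Y^{(k)}_i=1$ if block $i$ is \emph{not} activated by any selected row under the $k$-th permutation, i.e.\ $i\notin\{p_k(j): j\text{ a selected row}\}$, and $Y^{(k)}_i=0$ otherwise. Because there are exactly $n/c-t$ selected rows and $p_k$ is a bijection on $[n/c]$, the image $\{p_k(j):j\text{ selected}\}$ is a uniformly random $(n/c-t)$-subset of $[n/c]$; hence $(Y^{(k)}_1,\dots,Y^{(k)}_{n/c})$ has exactly the law of a uniformly random permutation of the fixed $0/1$ string with $t$ ones and $n/c-t$ zeros. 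By the permutation rule, $(Y^{(k)}_i)_{i\in[n/c]}$ is negatively associated for each fixed $k$.

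Next I would use that the $\tperm$ permutations are drawn independently, so the $\tperm$ families $\{(Y^{(k)}_i)_{i\in[n/c]}\}_{k\in[\tperm]}$ are mutually independent; by the union rule, the full collection $\{Y^{(k)}_i: k\in[\tperm],\,i\in[n/c]\}$ is negatively associated. Finally, block $i$ is left uncovered by the selected rows exactly when it is uncovered under every permutation, so $X_i=\prod_{k=1}^{\tperm}Y^{(k)}_i$. For each fixed $i$ this is a monotonically increasing function of the tuple $(Y^{(k)}_i)_{k\in[\tperm]}$, and for distinct $i$ these tuples involve disjoint sets of the underlying variables; the monotone function rule then yields that $(X_i)_{i\in[n/c]}$ is negatively associated, which is the claim.

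The one step that needs care is the appeal to the permutation rule: one must verify that, for a fixed $k$, the indicator vector $(Y^{(k)}_i)_i$ is genuinely distributed as a uniform permutation of a \emph{fixed} vector (and not merely exchangeable in some weaker sense), which is precisely the statement that a uniform random permutation of $[n/c]$ sends a fixed $(n/c-t)$-subset to a uniform random $(n/c-t)$-subset. Everything else is a mechanical application of the three listed properties, so I do not expect a genuine obstacle here; the substantive content of \cref{clm:NA_concentration} will instead come from plugging this negative-association property into a standard Chernoff-type tail bound for sums of negatively associated $[0,1]$-valued variables.
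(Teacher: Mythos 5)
Your proof is correct and follows essentially the same route as the paper: your $Y^{(k)}_i$ are precisely the paper's auxiliary indicators $H_{j,i}$, and you apply the permutation, union, and monotone-function rules in the same order. (Incidentally, you correctly observe that the product/conjunction is monotonically \emph{increasing} in the $0/1$-valued coordinates, whereas the paper loosely calls it decreasing; either way the monotone-function rule applies, so this is immaterial.)
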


\begin{proof}
For each permutation $j \in [\tperm]$, we define auxiliary variables $H_{j,1}, \ldots, H_{j,n/c}$, where $H_{j,k}$ indicates whether the $j$-th permutation, restricted to the selected rows, fails to activate the $k$-th column block.
Then, for any fixed $j$, the variables $H_{j,1}, \ldots, H_{j,n/c}$ necessarily form a permutation of $t$ ones and $n/c-t$ zeros, thereby exhibiting negative association by the permutation rule. 

Since the permutations are sampled independently, the vectors $(H_{j,1}, \ldots, H_{j, n/c})$ are independent across different values of $j$. Consequently, by the union rule, the collection of all $H_{j,i}$ variables constitutes a set of negatively associated variables. Furthermore, since $X_i = \bigwedge_j H_{j,i}$ and the conjunction operation is monotonically decreasing, the monotone function rule establishes that $X_1, \ldots, X_{n/c}$ are negatively associated.
\end{proof}

For random variables exhibiting negative association, we can apply the following concentration inequality, which completes the proof of \cref{clm:NA_concentration}.

\begin{theorem}[Concentration of NA variables \cite{wajc2017negative}]\label{thm:NA_concentration}
    Let $X_1,\ldots,X_n$ be negatively associated random variables in $\{0,1\}$ and $\mu=\mathbb E\Bk*{\sum_{i=1}^{n} X_i}$. Then for any $k\geq 2$, the following inequality holds: $\Pr\Bk*{\sum_{i=1}^{n} X_i\geq k\mu}\leq (1/k)^{\Omega(k\mu)}$.
\end{theorem}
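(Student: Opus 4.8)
The statement is the multiplicative Chernoff bound, but for negatively associated (NA) $\{0,1\}$ variables in place of independent ones. The plan is to observe that NA is precisely the property that makes the standard moment-generating-function (MGF) proof of the Chernoff bound go through verbatim, and then to carry out the routine optimization over the exponential parameter; the arithmetic needed to land on a $(1/k)^{\Omega(k\mu)}$ bound is elementary.

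\emph{Step 1 (the only place NA is used): decoupling the MGF.} Recall that $X_1,\dots,X_n$ being NA means that for disjoint $I,J\subseteq[n]$ and coordinatewise-nondecreasing $f$ (of $X_I$) and $g$ (of $X_J$) one has $\mathrm{Cov}\bk*{f(X_I),g(X_J)}\le 0$, i.e.\ $\E\bk*{fg}\le\E\bk*{f}\E\bk*{g}$. First I would prove by induction on $n$ that for nonnegative coordinatewise-nondecreasing $f_1,\dots,f_n$ with $f_i$ depending only on $X_i$, $\E\Bk*{\prod_{i=1}^n f_i(X_i)}\le\prod_{i=1}^n\E\bk*{f_i(X_i)}$: in the inductive step, apply the NA inequality with $f=\prod_{i<n}f_i(X_i)$ (nonnegative and nondecreasing in $X_1,\dots,X_{n-1}$) and $g=f_n(X_n)$, then invoke the inductive hypothesis on $\E\bk*{f}$. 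Specializing $f_i(x)=e^{\lambda x}$ for $\lambda\ge 0$ gives $\E\bk*{e^{\lambda S}}\le\prod_i\E\bk*{e^{\lambda X_i}}$ with $S=\sum_i X_i$, exactly as in the independent case.

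\emph{Step 2: Chernoff and optimizing $\lambda$.} Since each $X_i\in\{0,1\}$, $\E\bk*{e^{\lambda X_i}}=1+(e^\lambda-1)\E\bk*{X_i}\le\exp\bk*{(e^\lambda-1)\E\bk*{X_i}}$, hence $\E\bk*{e^{\lambda S}}\le\exp\bk*{(e^\lambda-1)\mu}$. Markov's inequality gives, for every $\lambda\ge 0$, $\Pr\Bk*{S\ge k\mu}\le\exp\bk*{-\lambda k\mu+(e^\lambda-1)\mu}$, and choosing $\lambda=\ln k$ (legitimate because $k\ge 2$) yields $\Pr\Bk*{S\ge k\mu}\le\exp\bk*{\mu\,(k-1-k\ln k)}$. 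To conclude I would check the elementary inequality $\tfrac{k-1}{k\ln k}\le\tfrac34$ for all $k\ge 2$ (the left-hand side is decreasing on $[2,\infty)$ and equals $\tfrac1{2\ln 2}<0.73$ at $k=2$), which rearranges to $\mu(k-1-k\ln k)\le-\tfrac14 k\mu\ln k$, i.e.\ $\Pr\Bk*{S\ge k\mu}\le(1/k)^{k\mu/4}=(1/k)^{\Omega(k\mu)}$.

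\emph{Main obstacle.} There is essentially no obstacle here: this is a known fact from the NA literature, and the only conceptual content is the product-of-MGFs inequality in Step 1; once that is established, Step 2 is the textbook Chernoff computation together with a one-line calculus check. In the context of this paper, \cref{thm:NA_concentration} is then combined with \cref{lem:NA_randomcover} (which shows the relevant indicators $X_i$ are NA, via the permutation, union, and monotone-function rules) to complete the proof of \cref{clm:NA_concentration}.
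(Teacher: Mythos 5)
The paper merely cites this concentration bound from \cite{wajc2017negative} without reproducing a proof, and your derivation is exactly the standard one for negatively associated variables: you use NA precisely once, to factor the moment generating function via the induction with $f=\prod_{i<n}f_i(X_i)$ (nonnegative and nondecreasing) and $g=f_n(X_n)$, after which the $\lambda=\ln k$ Chernoff optimization and the elementary check that $\tfrac{k-1}{k\ln k}\le\tfrac34$ for $k\ge 2$ give $(1/k)^{k\mu/4}$. This is correct and is the same MGF-based argument that the cited reference uses.
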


\end{document}